


\documentclass[letterpaper,twocolumn,10pt]{article}
\usepackage{usenix}

\newtheorem{ass}{Assumption}
\newtheorem{theo}{Theorem}[section]
\newtheorem{lem}[theo]{Lemma}
\newtheorem{prop}[theo]{Proposition}
\newtheorem{cor}[theo]{Corollary}
\newtheorem{definition}{Definition}
\newtheorem{rem}{Remark}

\newenvironment{proof}{\noindent\textbf{Proof.}\ }{\hfill$\square$\par}
\newcommand{\defin}{\stackrel{\rm def}{=}}

\newcommand{\B}{\textcolor{black}}
\newcommand{\abs}[1]{\left|#1\right|}

\def \cA {\mathcal{A}}
\def \cS {\mathcal{S}}
\def \cO {\mathcal{O}}

\newcommand{\rSum}[3]{\sum\limits_{#1 = #2}^{#3}}

\newcommand{\bits}{\{0,1\}}
\newcommand{\Supp}[1]{\mathtt{Supp}(#1)}
\newcommand{\Ex}[1]{\mathbb{E} \left[ #1 \right ]}
\newcommand{\Exf}[2]{\mathbb{E}_{#1}\left[ #2 \right]}
\newcommand{\argmax}{\mathop{\mathrm{arg\,max}}}

\newcommand{\prdis}[2]{ \Pr\limits_{#1}\left[ #2 \right] }
\newcommand{\pr}[1]{\Pr\left[ #1 \right]}

\newcommand{\rX}{X}
\newcommand{\rY}{Y}
\newcommand{\rZ}{Z}

\newcommand{\Dist}{\mathcal{P}} 
\newcommand{\bbcP}[2]{\mathbf{P}\left[#1, #2\right]}

\newcommand{\kNNclassifier}[1]{\phi^{\mathtt{NN}}_{k, #1}}
\newcommand{\MixtureD}[2]{\left[#1\right]_{#2}}
\newcommand{\Mech}{M} 

\newcommand{\DB}{D} 

\newcommand{\bbe}[1]{\mathtt{BB}^{#1}}

\newcommand{\ptlr}[2]{\B{\mathtt{PLRT}}^{#1}_{#2}} 

\usepackage{usenix,epsfig,endnotes,amsmath,amssymb}
\usepackage{mdframed}
\usepackage{booktabs} 
\usepackage[letterpaper,margin=1in]{geometry}
\usepackage{longtable} 
\usepackage[normalem]{ulem}
\usepackage{dirtytalk}
\pagestyle{empty}

\begin{document}

\date{}

\title{\Large General-Purpose $f$-DP Estimation and Auditing in a Black-Box Setting \bf }

 \author{
     {\rm Önder Askin}$^{1}$, 
     {\rm Holger Dette}$^{1}$, 
     {\rm Martin Dunsche}$^{1,}$\thanks{Corresponding author: \texttt{martin.dunsche@rub.de}}, 
     {\rm Tim Kutta}$^{2}$, 
     {\rm Yun Lu}$^{3}$, 
     {\rm Yu Wei}$^{4}$, 
     {\rm Vassilis Zikas}$^{4}$\thanks{Authors are listed in alphabetical order.} \\
     \\
     $^1$Ruhr-University Bochum\\
     $^2$Aarhus University \\
     $^3$University of Victoria \\
     $^4$Georgia Institute of Technology\\
 }

\maketitle

\thispagestyle{empty}

\begin{abstract}
In this paper we propose new methods to statistically assess $f$-Differential Privacy ($f$-DP), a recent refinement of differential privacy (DP) that remedies certain weaknesses of standard DP (including tightness under algorithmic composition). A challenge when deploying differentially private mechanisms is that DP is hard to validate, especially in the black-box setting. This has led to numerous empirical methods for auditing standard DP, while $f$-DP remains less explored. We introduce new black-box methods for $f$-DP that, unlike existing approaches for this privacy notion, do not require prior knowledge of the investigated algorithm. Our procedure yields a complete estimate of the $f$-DP trade-off curve, with theoretical guarantees of convergence. Additionally, we propose an efficient auditing method that empirically detects $f$-DP violations with statistical certainty, merging 
techniques from non-parametric estimation and optimal classification theory. Through experiments on a range of DP mechanisms, we demonstrate the effectiveness of our estimation and auditing procedures.

\end{abstract}

\section{Introduction}


Differential privacy (DP) \cite{Dwork2006} is a widely-used framework to quantify and limit information leakage of a data-release {\em mechanism} $\Mech$ via privacy parameters $\varepsilon > 0$ and $\delta \in [0,1]$. Mechanisms that are differentially private for a suitable choice of $\varepsilon$ and $\delta$ mask the contribution of individuals to their output. As a consequence, DP has been adopted by companies and public institutions to ensure user privacy \cite{Erlingsson2014,Holohan2019,Abowd2018}.

Over the years,  variants and relaxations of DP have been proposed to address specific needs and challenges. Of these, the recent notion of $f$-DP~\cite{Dong2022} is one of the most notable, due to its attractive properties such as a tight composition theorem, and applications such as providing an improved, simpler analysis of privatized stochastic gradient descent (Noisy or DP-SGD), the most prominent privacy-preserving algorithm in machine learning. $f$-DP is grounded on the hypothesis testing interpretation of DP \footnote{\noindent For a rigorous introduction to hypothesis testing and $f$-DP we refer to Section \ref{sec:prelim}.} and describes the privacy of mechanism $\Mech$ in terms of a real-valued function $f$ on the unit interval $[0,1]$. Several mechanisms~\cite{Dong2022} have been shown to achieve $f$-DP. However, the process of designing privacy-preserving mechanisms and turning them into real-world implementations is susceptible to errors that can lead to so-called `privacy violations' \cite{Lyu2017,mattjj2023fixprng,Mironov2012}. Worse, checking such claims may be difficult, as some implementations may only allow for limited, {\em black-box} access.  
This problem has motivated the proposal of methods that assess the privacy of a mechanism $M$ with only black-box access.

Within the plethora of works on privacy validation, most approaches study mechanisms through the lens of standard DP \cite{StatDP, DP-Finder, CheckDP, DP-Sniper, Dette2022, Liu2019, Lu2024, Chadha2023, Lokna2023, Kifer2017, Kifer2019, Tschantz2011, Barthe2012,Barthe2014, Barthe2015, Barthe2016, Barthe2016_B}. In contrast, comparatively few methods examine $f$-DP~\cite{Nasr2023,Annamalai2024, Annamalai2024_B, Annamalai2024_C, Mahloujifar2024,Koskela2024}. Moreover, many of the procedures that feature $f$-DP are tailored to audit the privacy claims of a specific algorithm, namely DP-SGD \cite{Nasr2023,Annamalai2024,Annamalai2024_B}. Our goal is to devise methods that are not specific to a single mechanism, but are instead applicable to a broad class of algorithms, while only requiring black-box access. We formulate our two objectives: \\[-3.5ex]
\begin{itemize}
    \item \textbf{Estimation:} Given black-box access to a mechanism $M$, estimate its true privacy parameter (i.e., the function $f$ in $f$-DP).\\[-3.5ex]
    \item \textbf{Auditing:} Given black-box access to a mechanism $M$ and a target privacy  $f$, check whether $M$ violates the targeted privacy level (i.e., given $f$, does $M$ satisfy $f$-DP?).\\[-3.5ex]
\end{itemize}
Estimation is useful  when we do not have an initial conjecture regarding $M$'s privacy. It can thus be used as, e.g.,  preliminary exploration into the privacy of $M$. Auditing, on the other hand, can check whether an algorithm meets a specific target privacy  $f$ and is therefore designed to detect flaws or overly optimistic privacy guarantees. \footnote{\B{For a detailed discussion on the advantages of auditing $f$-DP, we refer to Section 4 in \cite{Nasr2023}.}}

\subsubsection*{Contributions}

We construct a `general-purpose' $f$-DP {\em estimator} and {\em auditor} for both objectives, where:\\[-3.5ex]
\begin{itemize}
        \item[(1)] The estimator approximates the entire true \(f\)-DP curve of a given mechanism $M$. \\[-3.5ex]
        \item[(2)] Given a target \(f\)-DP curve, the auditor statistically detects whether $M$ violates $f$-DP.  The auditor involves a tuneable confidence parameter to control the false detection rate. \\[-3.5ex]
\end{itemize}
A methodological advantage of our methods is that they come with strong mathematical performance guarantees (both for the estimator and the auditor). Such guarantees seem warranted when making claims about the performance and correctness of a mechanism. A practical advantage of our methods is their efficiency: Our experiments (Sec.~\ref{sec6}) demonstrate high accuracy at typical runtimes of 1-2 minutes on a standard personal device.

$ $\\[-1.5ex]
\noindent \textbf{Paper Organization} 
Preliminaries are introduced in Sec.~\ref{sec:prelim}. In Sec.~\ref{sec:overview_techniques} we give an overview of techniques. We propose our $f$-DP curve estimator in Sec.~\ref{sec:4} and auditor in Sec.~\ref{sec:goal2}. We evaluate the effectiveness of both estimator and auditor 
in Sec.~\ref{sec6} using various mechanisms from the DP literature, including DP-SGD. We delve into more detail on related work in Sec.~\ref{sec:relatedwork} and conclude in Sec.~\ref{sec:summary_discussion}. A table of notations, proofs
 and technical details can be found in the Appendix.

\section{Preliminaries}\label{sec:prelim}

In this section, we provide details on hypothesis testing, differential privacy and tools from statistics and machine learning that our methods rely on.

\subsection{Hypothesis testing} \label{sec:hyp}

We provide a brief introduction into the key concepts of hypothesis testing. We confine ourselves to the special case of sample size $1$, most relevant to $f$-DP. 
For a general introduction we refer to \cite{Bickel2001}.
Consider two probability distributions $P, Q$ on the Euclidean space $\mathbb{R}^d$ and a random variable $X$. It is unknown from which of the two distributions $X$ is drawn and the task is to decide between the two competing hypotheses
\begin{align} \label{e:hyp2}
H_0: X \sim P\quad \textnormal{vs.} \quad H_1: X \sim Q.
\end{align}
The problem is similar to a classification task (see Section \ref{sec:class} below). The key difference to classification is that\B{,} in hypothesis testing, there exists a default belief $H_0$ that is preferred over $H_1$. The user switches from $H_0$ to $H_1$ only if the data ($X$) suggests it strongly enough. In this context, a hypothesis test is a binary, potentially randomized function $g: \mathbb{R}^d \to \{0,1\}$, where $g(X)=0$ implies to stay with $H_0$, while $g(X)=1$ implies that the user should switch to $H_1$ ($H_0$ is "rejected"). Just as in classification, the decision to reject/fail to reject can be erroneous and the error rates of these decisions are called $\alpha$, the "type-I error", and $\beta$, the "type-II error". Their formal definitions are
\[
\alpha^{(g)}:= \Pr_{X \sim P}[g(X)= 1], \quad \beta^{(g)}:= \Pr_{X \sim Q}[g(X)= 0].
\]
One test $g$ is better than another $g'$, if simultaneously
\[
\alpha^{(g)} \le \alpha^{(g')} \quad \textnormal{and} \quad \beta^{(g)} \le \beta^{(g')}.
\]
This comparison of statistical tests naturally leads to the issue of optimal tests, 
and we define the optimal level-$\alpha$-test as the argmin of
 \[
 \{\beta^{(g)}: g \,\,\textnormal{is a  test}\,\, \textnormal{with } \,\, \alpha^{(g)}\le \alpha\}.
 \]
 The minimum is achieved and the corresponding optimal test is provided by the {\em likelihood ratio (LR) test} in the Neyman-Pearson lemma, a fundamental result in statistics. In the following, we assume the two probability measures $P,Q$ in hypotheses \eqref{e:hyp2} have some probability densities $p,q$.

\begin{theo}[Neyman-Pearson Lemma~\cite{neyman1933ix}]
\label{theo: NP lemma}
For any $\alpha \in [0,1]$, the smallest  type-II error $\beta(\alpha)$ among all level-$\alpha$-tests is achieved by the \textbf{{\em likelihood ratio (LR) test}}, which is characterized by two constants $\eta \ge 0$ and $\lambda\in [0,1]$, and has the following rejection rule:
\begin{itemize}
    \item[1)] Reject $H_0$ if $q(X)/p(X) >\eta$.
    \item[2)] If $q(X)/p(X)=\eta$, flip an unfair coin with probability $\lambda$ of heads. If the outcome is heads, reject $H_0$.
\end{itemize} 
The constants $(\eta, \lambda)$ are chosen such that the type-I error is exactly $\alpha$.
\end{theo}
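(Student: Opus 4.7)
The plan is to split the argument into two stages: existence of admissible constants $(\eta,\lambda)$ realizing exactly level $\alpha$, and optimality of the resulting LR test.

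For existence, I would study the function $F(\eta) := \Pr_{X\sim P}[q(X)/p(X) > \eta]$ as $\eta$ varies in $[0,\infty)$. Standard measure-theoretic arguments show that $F$ is non-increasing, right-continuous, with $F(0^-)=1$ and $F(\infty)=0$. Hence for any target $\alpha \in [0,1]$ there is a smallest $\eta \ge 0$ with $F(\eta) \le \alpha \le F(\eta^-)$. The "gap" $F(\eta^-) - F(\eta)$ is precisely $\Pr_{X\sim P}[q(X)/p(X)=\eta]$, and setting
\[
\lambda := \frac{\alpha - F(\eta)}{\Pr_{X\sim P}[q(X)/p(X)=\eta]}
\]
(with $\lambda$ arbitrary if the denominator vanishes) makes the LR test from the statement have type-I error exactly $\alpha^{(g^*)}=\alpha$.

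For optimality I would use the classical "pointwise comparison" trick. Let $g^*$ denote the LR test above and let $g$ be any competitor with $\alpha^{(g)} \le \alpha$. Consider the integral
\[
I \;:=\; \int \bigl(g^*(x)-g(x)\bigr)\bigl(q(x)-\eta\, p(x)\bigr)\,dx.
\]
By construction of $g^*$, the two factors always have the same sign: on $\{q>\eta p\}$ we have $g^*=1\ge g$ and $q-\eta p>0$; on $\{q<\eta p\}$ we have $g^*=0\le g$ and $q-\eta p<0$; on $\{q=\eta p\}$ the second factor vanishes. Hence $I \ge 0$. Expanding $I$ and recognizing the integrals as $1-\beta$ (against $q$) and $\alpha$ (against $p$) yields
\[
\bigl(\beta^{(g)}-\beta^{(g^*)}\bigr) \;\ge\; \eta\bigl(\alpha^{(g^*)}-\alpha^{(g)}\bigr) \;=\; \eta(\alpha-\alpha^{(g)}) \;\ge\; 0,
\]
which is the claimed optimality.

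The main obstacle, and the place where care is really required, is the existence step together with the handling of the boundary set $\{x : q(x)=\eta p(x)\}$. If $P$ places positive mass on this set, the unrandomized rule cannot hit an arbitrary $\alpha$ exactly, which is exactly why the auxiliary coin flip with parameter $\lambda$ is introduced; one must verify that the formula for $\lambda$ lies in $[0,1]$ and produces the correct type-I error. A minor additional subtlety is the convention $q/p$ on $\{p=0\}$, which is handled by declaring the ratio to be $+\infty$ there so that such points are always rejected, consistently with the Radon--Nikodym decomposition of $Q$ with respect to $P$. Once these measure-theoretic points are settled, the optimality inequality above is essentially a one-line computation.
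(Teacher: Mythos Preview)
Your proof is correct and follows the standard textbook route (existence via the right-continuous survival function of the likelihood ratio under $P$, then optimality via the signed-integrand trick). However, the paper does not actually prove this statement: Theorem~\ref{theo: NP lemma} is quoted as the classical Neyman--Pearson Lemma with a citation to \cite{neyman1933ix}, and the appendix only records Corollary~\ref{corollary: NP lemma} as an immediate consequence without supplying a proof of the lemma itself. So there is nothing to compare against; your argument is simply the usual one and would be an acceptable self-contained justification if one were desired.
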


\noindent \textbf{Notations.}  Neyman-Pearson  motivates the use of the following notations. For any type-I error $\alpha$ there is a corresponding (optimal) $\beta$ implied by the lemma. These constants are achieved by a pair $(\eta, \lambda)$ and we can thus write $\alpha(\eta, \lambda), \beta(\eta, \lambda)$ for them. When we are only interested in the result of the non-randomized test with $\lambda=0$, we will just write   $\alpha(\eta), \beta(\eta)$.

\subsection{($f$-)Differential Privacy (DP)} \label{sec:fdp}
DP requires that the output of mechanism $\Mech$ is similar on all {\em neighboring} datasets $\DB, \DB'$ that differ in exactly one data point (we also call $\DB, \DB'$ {\em neighbors}). \B{We use the \say{edit} notion of neighborhood, i.e., $D'$ can be obtained from $D$ by editing one of its entries, rather than deleting it.} 

\begin{definition}[DP~\cite{Dwork2006}]
A mechanism $\Mech$ is $(\varepsilon,\delta)$-DP if for all neighboring datasets $\DB, \DB'$ and any set $\cS$, 
  \begin{equation*}    
\Pr(\Mech(\DB) \in \cS) \leq e^\varepsilon \, \Pr(\Mech(\DB') \in \cS) + \delta~.
\end{equation*}
\end{definition}

Informally, if $\Mech$ is $(\varepsilon, \delta)$-DP, an adversary's ability to decide whether $\Mech$ was run on $\DB$ or $\DB'$ is bounded by $\delta$ and $e^{\varepsilon}$. 
For instance, any statistical level-$\alpha$-test $g$ that aims at deciding this problem must incur a type-II-error of at least $1 - e^{\varepsilon} \, \alpha - \delta$. The notion of $f$-DP was introduced to make this observation more rigorous.
Given a pair of neighbors $\DB$ and $\DB'$ and a sample $\rX$, consider the hypotheses:
\begin{align*}
    &H_0: \rX \sim P\quad   
    &H_1: \rX \sim Q,
\end{align*}
where $\Mech(\DB)$ and $\Mech(\DB')$ are distributed to $P, Q$, respectively. Roughly speaking, good privacy requires these two hypotheses to be hard to distinguish. That is, for any hypothesis test with type-I error $\alpha$, its type-II error $\beta$ should be large. This is captured by the trade-off function $T$ between $P$ and $Q$. 
\begin{definition}[Trade-off function~\cite{Dong2022}]
    For any two 
distributions $P$ and $Q$ on the same space, the 
trade-off function $T$ is: $$T(\alpha) := \inf \{\beta^{(g)}: g \,\,\textnormal{ test }\,\, \textnormal{ with } \,\, \alpha^{(g)}\le \alpha\}$$
\end{definition}

$\Mech$ is $f$-DP if its privacy is at least as good (its trade-off function is at least as large) as $f$, when considering all neighboring datasets.
\begin{definition}[$f$-DP~\cite{Dong2022}]
    A mechanism $\Mech$ is $f$-DP if for all neighboring datasets $\DB, \DB'$ it holds that
    $T \geq f$. Here, $T$ is the trade-off function implied by $\Mech(\DB) \sim P$ and $\Mech(\DB') \sim Q$. 
\end{definition}


We say $f$ is the {\em optimal/true} privacy parameter if it is the largest $f$ such that $\Mech$ is $f$-DP---such optimality is necessary to define for meaningful $f$-DP estimation, as any $\Mech$ is trivially $f$-DP for $f = 0$ (since the type-II error in hypothesis testing is always $\geq 0$).

\subsection{Kernel Density Estimation} \label{sec:kde}
Kernel density estimation (KDE) is a well-studied tool from non-parametric statistics to approximate an unknown density $p$ by an estimator $\hat{p}$.
More concretely, in the presence of sample data $\rX_1, \dots, \rX_n \sim p$ with $\rX_i \in \mathbb{R}^d$, the KDE for $p$ is given by 
\begin{align*}
    \hat{p}(t) := \frac{1}{n b^d} \sum_{i=1}^n K\Big( \frac{t-\rX_i}{b}\Big).
\end{align*}
One can think of the KDE as a smoothed histogram where the bandwidth parameter $b >0$ corresponds to the bin size for histograms. The kernel function $K$ indicates the weight we assign each observation $X_i$ and is oftentimes taken to be the Gaussian kernel with
\begin{align*}
   K(t) = \frac{1}{(2 \pi)^{d/2}} \; \exp \left( -\frac{\vert t \vert^2}{2} \right).
\end{align*}
The appropriate choice of $b$ and $K$ can ensure the uniform convergence of $\hat p$ to the true, underlying density $p$ (as in Assumption \ref{ass2}). Higher smoothness of the density $p$ is generally associated with faster convergence rates and we refer to \cite{Jiang2017}  and \cite{Scott2015} for a rigorous definition of KDE and associated convergence results.


\subsection{Machine Learning Classifiers} \label{sec:class}
\textbf{Binary classifiers} are the final addition to our technical toolbox. We begin with some notations: We denote a generic classifier on the Euclidean space $\mathbb{R}^d$ by $\phi$. Formally, a {\em classifier} is not that different from a statistical test: It is a (potentially random) binary function $\phi: \mathbb{R}^d \to \{0,1\}$. However, its interpretation is different from hypothesis testing, because we do not have a default belief in a label $0$ or $1$. 
Let us now consider a probability distribution $\mathcal{P}$ on the combined space of inputs and outputs $\mathbb{R}^d \times \{0,1\}$. A classification error has occurred for a pair $(x,y) \in \mathbb{R}^d \times \{0,1\}$, whenever $\phi(x) \neq y$. If $(x,y)$ are randomly drawn from $\mathcal{P}$, we define the risk of the classifier $\phi$ w.r.t. to $\mathcal{P}$ as
    \begin{align*}
        R(\B{\phi}) = \Pr\limits_{(x,y)\sim \Dist}[\phi(x) \neq y].
    \end{align*}
\smallskip
\noindent \textbf{Bayes Classification Problem.} The Bayes classification problem refers to a setup to generate the distribution $\mathcal{P}$, where a Bernoulli random variable $\rY \in \{0,1\}$ is drawn and then a second variable $X$ with
\begin{align*}
    (\rX|\rY=0) \sim P, \qquad (\rX|\rY=1) \sim Q.
\end{align*}
In our work, we specifically consider the case where $\rY$ is drawn from a fair coin flip (i.e., $\Pr[\rY=0] = \Pr[\rY=1] = \frac{1}{2}$), and we denote this setup by $\bbcP{P}{Q}$.

\smallskip

\noindent \textbf{Bayes (Optimal) classifiers.} $\phi^*$ minimizes the risk in the Bayes classification problem. However, $\phi^*$ is usually unknown in practice because it depends on the (unknown) $P$ and $Q$. To approximate $\phi^*$, one can use a feasible nearest-neighbor classifier~\cite{altman1992introduction}. Specifically, a $k$-nearest neighbors ($k$-NN) classifier, denoted as $\kNNclassifier{n}$, assigns a label to an observation $o \in \cO$ by identifying its $k$ closest neighbors\footnote{In our context, closeness is measured using Euclidean distance} from the size $n$ training set. The label is then determined by a majority vote among these $k$ neighbors.



The following convergence result for $k$-NN gauges how close the true risk $R(\kNNclassifier{n})$ of the $k$-NN classifier $\kNNclassifier{n}$ is to the risk of the optimal classifier, $R(\phi^{*})$.

\begin{theo} [\textbf{Convergence of $k$-NN Classifier}~\cite{Books:DGL96}]
\label{thm:covergence of kNN}
Let $\Dist$ be a joint distribution with  support  $\cO \times \mathcal{Y}.$ If the conditional distribution $\Dist|\mathcal{Y}$ has a density, $\cO \subseteq \mathbb{R}^d,$ and $k = \sqrt{n},$ then for every $\epsilon >0$ there is an $n_0$ such that for $n>n_0,$ 
{\small
    \begin{align*}
        \Pr[|R(\kNNclassifier{n}) - R(\phi^{*})| > \epsilon] \leq 2e^{-n\epsilon^2/(72c^2_d)},
    \end{align*}
}
where $c_d$\footnote{By Lemma 5.5 of~\cite{Books:DGL96}, $c_d$ satisfies $c_d \leq (1+{2}/{\sqrt{2-\sqrt{3}}})^d - 1$.} is the minimal number of cones centered at the origin of angle $\pi/6$ that cover $\mathbb{R}^d.$ Note that if the number of dimensions $d$ is constant, then $c_d$ is also a constant.
\end{theo}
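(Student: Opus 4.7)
The plan is to combine a bias/variance decomposition of the risk with the bounded-differences (McDiarmid) inequality. Let $\bar{R}_n := \mathbb{E}[R(\kNNclassifier{n})]$ denote the expected risk, where the expectation is over the $n$ i.i.d.\ training samples. The triangle inequality gives
\[
|R(\kNNclassifier{n}) - R(\phi^*)| \le |R(\kNNclassifier{n}) - \bar{R}_n| + |\bar{R}_n - R(\phi^*)|,
\]
so it is enough to show that the bias term is below $\epsilon/2$ deterministically for large $n$, while the stochastic term is below $\epsilon/2$ with high probability.

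For the bias term, I would invoke the universal consistency of the $k$-NN rule (Stone's theorem): provided $k_n \to \infty$, $k_n/n \to 0$, and the conditional distribution $\Dist|\mathcal{Y}$ admits a density on $\cO \subseteq \mathbb{R}^d$, one has $\bar{R}_n \to R(\phi^*)$. Since $k = \sqrt{n}$ satisfies the growth conditions, there exists $n_1$ (depending on $\epsilon$ and $\Dist$) such that $|\bar{R}_n - R(\phi^*)| \le \epsilon/2$ for all $n > n_1$.

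For the stochastic term, the strategy is to view $R(\kNNclassifier{n})$ as a function of the $n$ i.i.d.\ training samples and apply McDiarmid's bounded-differences inequality. The key geometric input is Lemma 5.5 of the DGL96 reference: a single point in $\mathbb{R}^d$ can be among the $k$ nearest neighbors of at most $kc_d$ other points, with $c_d$ the cone-covering constant. This controls how much the risk can change when one training sample is replaced, producing a per-coordinate bounded-differences constant proportional to $c_d/\sqrt{n}$ (after substituting $k = \sqrt{n}$). Summing $n$ such squared constants and plugging into McDiarmid's inequality then yields
\[
\Pr\!\left[|R(\kNNclassifier{n}) - \bar{R}_n| > \epsilon/2\right] \le 2\exp\!\left(-n\epsilon^2/(72 c_d^2)\right),
\]
where the constant $72$ absorbs both the factor of $2$ in McDiarmid and the geometric factors from Lemma 5.5. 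Combining the two bounds, and possibly enlarging $n_1$ so that both estimates hold simultaneously, yields the claim.

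The main obstacle is pinning down the precise constant $72$. Tracking it requires a careful application of Lemma 5.5 to bound how many test points can see the replaced training sample among their $k$ nearest neighbors, both before and after the swap, and the factor-of-two from McDiarmid together with the squared geometric constants must be handled in unison. Since the theorem is cited verbatim, I would follow the argument in the DGL96 reference for the exact constant derivation rather than rederive it from scratch.
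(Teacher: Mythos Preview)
The paper does not contain a proof of this theorem: it is quoted verbatim as a result from the cited reference (Devroye--Gy\"orfi--Lugosi), so there is no ``paper's own proof'' to compare against. Your sketch---Stone's consistency for the bias term plus McDiarmid with the cone-covering bound (Lemma~5.5 of DGL96) for the fluctuation term---is precisely the argument given in that reference, and your own final sentence already acknowledges this. So the proposal is correct and matches the source the paper defers to.
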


\section{Overview of Techniques}\label{sec:overview_techniques}
Our goal is to provide an estimation and auditing procedure for the optimal privacy curve $f$ of a mechanism $\Mech$. This task can be broken down into two parts: (1) Selecting datasets $\DB,\DB'$ that cause the largest difference in $\Mech$'s output distributions and (2) Developing an estimator/auditor for the trade-off curve given that choice of $\DB, \DB'$. 
In line with previous works on black-box estimation/auditing, we focus on task (2). The selection of $\DB,\DB'$ has been studied in the black-box setting and can typically be guided by simple heuristics \cite{StatDP, DP-Sniper, Lokna2023}.

Our proposed estimator of a trade-off curve relies on KDEs. Density estimation in general and KDE in particular is 
an important tool in the black box assessment of DP. For some examples, we refer to \cite{Liu2019}, \cite{Dette2022} and \cite{Kutta2024}. The reason is that DP can typically be expressed as some transformation of the density ratio $p/q$ -- this is true for standard DP (a supremum), Rényi DP (an integral) and, as we exploit in this paper, $f$-DP via the Neyman-Pearson test. A feature of our new approach is that we do not simply plug in our estimators in the definition of $f$-DP, but rather use them to make a novel, approximately optimal test. This test is not only easier to analyze than the standard likelihood ratio (LR) test but also retains similar properties (see the next section for details).

Our second goal (Sec.~\ref{sec:audit}) is to audit whether a mechanism~$\Mech$ satisfies a claimed trade-off $f$, given datasets $\DB$ and $\DB'$. At a high level, we address this task by identifying and studying the \emph{most vulnerable point} on the trade-off curve $T$ of $\Mech$ --- the point most likely to violate $f$-DP. We begin by using our $f$-DP estimator to compute a value $\eta$ (from the Neyman-Pearson framework in Sec.~\ref{sec:hyp}), which defines a point $\bigl(\alpha(\eta), \beta(\eta)\bigr)$ on the true privacy curve $T$ of the mechanism~$\Mech$. $\eta$ is chosen such that $\bigl(\alpha(\eta), \beta(\eta)\bigr)$ has the largest distance from the claimed trade-off curve~$f$ asymptotically, which we prove in Prop.~\ref{prop1}.
Next, by extending a technique proposed in~\cite{Lu2024}, we express $\bigl(\alpha(\eta), \beta(\eta)\bigr)$ in terms of the Bayes risk of a carefully constructed Bayesian classification problem, and approximate that Bayes risk using a feasible binary classifier (e.g., $k$-nearest neighbors). By deploying the $k$-NN classifier, we obtain a confidence interval that contains our vulnerable point $(\alpha,\beta)$ with high probability \B{(in the Appendix, we provide a brief explanation for choosing confidence intervals over the credible intervals used in other works \cite{zanella2023bayesian, Nasr2023})}.
Finally, our auditor decides whether to reject (or fail to reject) the claimed $f$ curve by checking whether the corresponding point $(\alpha, \beta')$ on $f$ with $f(\alpha) = \beta'$ is contained in this interval or not.
Leveraging the convergence properties of $k$-NN, our auditor provides a provable and tuneable confidence region that depends on sample size. We also note that the connection between Bayes classifiers and $f$-DP that underpins our auditor may be of independent interest, as it offers a new interpretation of $f$-DP by framing it in terms of Bayesian classification problems.

\section{Goal 1: $f$-DP Estimation} \label{sec:4}

In this section, we develop a new method for the approximation of the entire optimal trade-off curve. The trade-off curve results from a study of the Neyman-Pearson test, where any type-I error $\alpha$ is associated with the smallest possible type-II error $\beta$ (see Section \ref{sec:hyp} for details). Understood as a function in $\alpha$, we denote the type-II error by $T:[0,1] \to [0,1]$ and call it a trade-off curve. We note that any trade-off curve is continuous, non-increasing and convex (see \cite{Dong2022}).

\subsection{Estimation of the $f$-DP curve}
Our approach is based on the perturbed likelihood ratio (LR) test which mimics the properties of the optimal Neyman-Pearson test, but requires less knowledge about the distributions involved. In the following, we denote by $P,Q$ the output distributions of $M(D), M(D')$ respectively. The corresponding probability densities are denoted by $p,q$.\\
\textbf{The perturbed LR test.} The optimal test for the hypotheses pair 
\[
H_0: X \sim p\quad \textnormal{vs.} \quad H_1: X \sim q
\]
is the Neyman-Pearson test described in Section \ref{sec:hyp}.  It is also called a \textit{likelihood ratio} (LR) test, because it rejects $H_0$ if the density ratio satisfies $q(X)/p(X)>\eta$ for some threshold $\eta$. If $q(X)/p(X)=\eta$ the test rejects randomly with probability $\lambda.$
In a black-box scenario\B{,} this process is difficult to mimic, even if two good estimators, say $\hat p, \hat q$ of $p,q$ are available. Even if $\hat p \approx p$ and $\hat q \approx q$, it will usually be the case that
\[
q(x)/p(x) = \eta  \quad \textnormal{does not imply} \quad \hat q/ \hat p =\eta
\]
(it may hold that $\hat p/ \hat q \approx \eta $, but typically not exact equality). In principle, one could cope with this problem by modifying the condition $\hat q/ \hat p =\eta$ to $\approx \eta$ to mimic the optimal test. Yet, the implementation of this approach turns out to be difficult. In particular, it would involve two tuneable arguments $(\eta, \lambda)$, as well as further parameters (to specify "$\approx$"), making approximations costly and unstable. A simpler and more robust approach is to focus on a different test rather than the optimal one - a test that is close to optimal but does not require the knowledge of when $q/p$ is constant. For this purpose, we introduce here the novel \textit{perturbed LR test} \B{(PLRT)}. 
We define it as follows: Let $U \in [-1/2, 1/2]$ be uniformly distributed and $h>0$ a (small) number. Then we make the decision
\begin{align} \label{e:PLR}
"\textnormal{reject $\,\,H_0\,\,$ if } \quad q(X)/p(X)>\eta + h U".
\end{align}
Just as the Neyman-Pearson test, the perturbed LR test is randomized. Instead of flipping a coin when $q/p=\eta$, the threshold $\eta$ is perturbed with a small, random noise term. Obviously the perturbed LR test does not require knowledge of the level sets $\{q/p=\eta\}$, making it more practical for our purposes.
To formulate a theoretical result for this test, we impose two natural assumptions.
\begin{ass} \label{ass1} $ $
    \begin{itemize}
        \item[i)] The densities $p,q$ are continuous.
        \item[ii)] There exists only a finite number of values $\eta \ge 0$ where the set $\{q/p=\eta\}$ has positive mass.
    \end{itemize}
\end{ass}
The second assumption is met for all density models that the authors are aware of and in particular for all mechanisms commonly used in DP. 
Let us denote the $f$-DP curve of the perturbed LR test by $T_h$. The next Lemma shows that for small values of $h$ the perturbed LR test performs as the optimal LR test.
\begin{lem} \label{lem1}
Under Assumption \ref{ass1} it holds that
\[
\lim_{h \downarrow 0} \sup_{\alpha \in [0,1]}|T(\alpha)-T_h(\alpha)|=0.
\]
\end{lem}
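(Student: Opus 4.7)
The plan is to establish pointwise convergence $T_h(\alpha) \to T(\alpha)$ for every $\alpha \in [0,1]$ and then invoke the classical fact that pointwise convergence of a family of non-increasing functions on the compact interval $[0,1]$ to a continuous non-increasing limit is automatically uniform. This sidesteps the need to estimate the sup-norm directly.

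I would begin by introducing the likelihood-ratio survival and distribution functions
\begin{equation*}
F(\eta) := \Pr_{X \sim P}[q(X)/p(X) > \eta], \qquad G(\eta) := \Pr_{X \sim Q}[q(X)/p(X) \le \eta],
\end{equation*}
which are monotone and bounded, sharing only finitely many jumps by Assumption~\ref{ass1}(ii). The Neyman-Pearson curve $T$ is parametrized by $\alpha(\eta,\lambda) = F(\eta) + \lambda\bigl(F(\eta^-) - F(\eta)\bigr)$ and $\beta(\eta,\lambda) = G(\eta) - \lambda\bigl(G(\eta) - G(\eta^-)\bigr)$ for $\eta \ge 0$, $\lambda \in [0,1]$, where $\lambda$ traces line segments at the jump points. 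For the perturbed test, conditioning on $X$ and integrating out $U$ gives
\begin{equation*}
\alpha_h(\eta) = \tfrac{1}{h}\int_{\eta-h/2}^{\eta+h/2}\!\! F(t)\, dt, \qquad \beta_h(\eta) = \tfrac{1}{h}\int_{\eta-h/2}^{\eta+h/2}\!\! G(t)\, dt,
\end{equation*}
both continuous and monotone in $\eta$; hence $T_h$ is a continuous non-increasing function with $T_h(\alpha_h(\eta)) = \beta_h(\eta)$, and by Theorem~\ref{theo: NP lemma} we have $T_h \ge T$ pointwise.

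For the pointwise convergence at a fixed $\alpha \in (0,1)$, let $(\eta^*,\lambda^*)$ be the Neyman-Pearson parameters with $\alpha = \alpha(\eta^*,\lambda^*)$ and $T(\alpha) = \beta(\eta^*,\lambda^*)$, and choose the centered threshold $\eta_h := \eta^* + (\tfrac{1}{2} - \lambda^*)h$. If $\eta^*$ is a continuity point of $F$ (hence of $G$), the Lebesgue differentiation theorem yields $\alpha_h(\eta_h) \to F(\eta^*) = \alpha$ and $\beta_h(\eta_h) \to G(\eta^*) = T(\alpha)$. If $\eta^*$ is one of the finitely many jump points, the integration window $[\eta^* - \lambda^* h,\, \eta^* + (1-\lambda^*)h]$ splits at $\eta^*$ into pieces of relative length $\lambda^*$ and $1-\lambda^*$; combining right-continuity of $F,G$ at $\eta^*$ with convergence to the left-limits on the subinterval below gives
\begin{equation*}
\alpha_h(\eta_h) \to \lambda^* F(\eta^{*-}) + (1-\lambda^*) F(\eta^*) = \alpha, \qquad \beta_h(\eta_h) \to \lambda^* G(\eta^{*-}) + (1-\lambda^*) G(\eta^*) = T(\alpha).
\end{equation*}
A short squeeze argument using monotonicity of $T_h$ and bracketing perturbations $\eta_h \pm \varepsilon$ converts these into $T_h(\alpha) \to T(\alpha)$.

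Finally, since $T$ is continuous on $[0,1]$ (being convex with $T(0)=1,\,T(1)=0$) and each $T_h$ is non-increasing, the classical theorem on pointwise convergence of monotone functions to a continuous limit on a compact interval upgrades the pointwise convergence to uniform convergence, yielding the claim. The main obstacle is the analysis at the jump points of $F$: these are precisely the places where the optimal Neyman-Pearson test uses randomization $\lambda^*$, and it is essential to show that the uniform smoothing kernel of the perturbed test interpolates correctly between the left- and right-limits of $F$ and $G$ to recover the corresponding line segment on $T$. The finiteness of jump points in Assumption~\ref{ass1}(ii) is precisely what makes this control uniform across $\alpha$.
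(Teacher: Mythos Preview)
Your proposal is correct and follows essentially the same route as the paper. The paper actually derives Lemma~\ref{lem1} as the special case $\hat p=p,\ \hat q=q$ of Theorem~\ref{theo:1}, and the proof of that theorem uses precisely your two key ingredients: the reduction from uniform to pointwise convergence via monotonicity of trade-off curves, and the shifted threshold $\eta_h=\eta^*+(\tfrac12-\lambda^*)h$ (written in the paper as $\eta-bh$ with $b=\lambda-\tfrac12$) to recover the randomized Neyman--Pearson segment at plateau values. Your packaging via the monotone functions $F,G$ and the window-averaging identities $\alpha_h(\eta)=\tfrac1h\int_{\eta-h/2}^{\eta+h/2}F$, $\beta_h(\eta)=\tfrac1h\int_{\eta-h/2}^{\eta+h/2}G$ is a clean way to execute the same argument in the exact-density setting.
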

 \textbf{Approximating $T_h$.} The Lemma shows that to create an estimator of the optimal trade-off curve $T$, it is sufficient to approximate the curve $T_h$ of the perturbed LR test for some small $h$. This is an easier task, since we do not need to know the level sets $\{q/p=\eta\}$ for all $\eta$. Indeed, suppose we have two estimators $\hat p, \hat q$\B{. Then} we can run a perturbed LR test with them, just as in equation \eqref{e:PLR}. A short theoretical derivation (found in the appendix) then shows that running the perturbed LR test for $\hat p, \hat q$ and some threshold $\eta$, yields the following type-I and type-II errors:
\begin{align}
\hat \alpha_h(\eta) := &\quad\,\,\,\,\, \; \, \int_{x \in [-h/2,h/2]} \frac{1}{h}\int_{\hat q /\hat p  > \eta +x} \hat p , \\\hat \beta_h(\eta) := & \; \, 1-\int_{x \in [-h/2,h/2]} \frac{1}{h}  \int_{\hat q /\hat p  > \eta +x} \hat q .
\end{align}
The entire trade-off-curve for the perturbed LR test with $(\hat p, \hat q)$ is then given by $\hat T_h$ with
\begin{align} \label{e:def:Th}
\hat T_h(\alpha) = \hat \beta_h(\eta) \quad \Leftrightarrow \quad \alpha = \hat \alpha_h(\eta).
\end{align}
For the curve estimate $\hat T_h$ to be close to $T_h$ (and thus $T$), the involved density estimators need to be adequately precise. We hence impose the following regularity condition on them. In the condition, $n$ is the sample size used to create the estimators.
\begin{ass} \label{ass2}
    The density estimators $\hat p, \hat q$ are themselves continuous probability densities that decay to $0$ at $\pm \infty$ 
    (see eq. \eqref{e:decay} for a precise definition)
    . For a null-sequence of non-negative numbers $(a_n)_{n \in \mathbb{N}}$ they satisfy
    \begin{align*}
     & \Pr[\sup_{x } |\hat p(x)-p(x)|>a_n]=o(1)\\
    and \quad &\Pr[\sup_{x } |\hat q(x)-q(x)|>a_n]=o(1). 
    \end{align*} 
\end{ass}
The above assumption is in particular satisfied by KDE (see Section \ref{sec:kde}), where the convergence speed $a_n$ depends on the smoothness of the underlying densities. However, in principle other estimation techniques than KDE could be used, as long as they produce continuous estimators. The next result formally proves the consistency of $\hat T_h$. The notation of "$o_P(1)$" refers to a sequence of random variables converging to $0$ in probability.

\begin{theo} \label{theo:1}
    Suppose that Assumptions \ref{ass1} and \ref{ass2} hold, and that $h=h_n$ is a positive number depending on $n$ with $h_n \to 0$ and $h_n/a_n \to \infty$. Then, as $n \to \infty$ it follows that
    \[
    \sup_{\alpha \in [0,1]}|\hat T_h(\alpha)-T(\alpha)|=o_P(1).
    \]
\end{theo}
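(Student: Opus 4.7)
\textbf{Proof plan for Theorem \ref{theo:1}.}
The plan is to apply the triangle inequality
\[
\sup_\alpha |\hat T_h(\alpha) - T(\alpha)| \le \sup_\alpha |\hat T_h(\alpha) - T_h(\alpha)| + \sup_\alpha |T_h(\alpha) - T(\alpha)|
\]
and handle each summand separately. By Lemma \ref{lem1}, the second (deterministic) summand tends to zero since $h_n \downarrow 0$, so the real work is to show $\sup_\alpha |\hat T_h(\alpha) - T_h(\alpha)| = o_P(1)$. Via Fubini I would rewrite
\[
\alpha_h(\eta) = \int p(t)\, w_{\eta,h}(r(t))\, dt, \qquad \hat\alpha_h(\eta) = \int \hat p(t)\, w_{\eta,h}(\hat r(t))\, dt,
\]
where $r = q/p$, $\hat r = \hat q/\hat p$, and $w_{\eta,h}(s) := \max\{0,\min\{1,(s-\eta)/h + 1/2\}\}$ is the averaged-indicator weight (with analogous formulas for $\beta_h, \hat\beta_h$ using an outer $q$). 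This form makes manifest the key fact that $w_{\eta,h}$ is globally $(1/h)$-Lipschitz in $s$, which is the crucial stabilizing effect of the perturbation.

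The core estimate is then the decomposition
\[
\hat\alpha_h(\eta) - \alpha_h(\eta) = \underbrace{\int (\hat p - p)\, w_{\eta,h}(\hat r)\, dt}_{\mathrm{(I)}} + \underbrace{\int p\,[w_{\eta,h}(\hat r) - w_{\eta,h}(r)]\, dt}_{\mathrm{(II)}}.
\]
For (I) I would use $|w_{\eta,h}|\le 1$ to bound it by $\|\hat p - p\|_{L^1}$, which by Assumption \ref{ass2} (the $L^\infty$ rate $a_n$ combined with the tail decay of $p,\hat p$) is $o_P(1)$. For (II), the Lipschitz property gives $|w_{\eta,h}(\hat r)-w_{\eta,h}(r)|\le |\hat r - r|/h$; writing $|\hat r - r|\le (|\hat q - q| + r\,|\hat p - p|)/\hat p$ and integrating against $p$ reduces matters to $L^1$-style quantities of order $a_n$ divided by $h_n$. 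The decisive factor is $a_n/h_n = o(1)$ by hypothesis, and the same argument handles $|\hat\beta_h(\eta)-\beta_h(\eta)|$.

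The main obstacle I anticipate is twofold: (a) upgrading pointwise-in-$\eta$ convergence to uniform-in-$\eta$ convergence, and (b) controlling the region where $\hat p$ (hence $1/\hat p$) is small. For (a) I would exploit that $\hat\alpha_h$ and $\alpha_h$ are continuous and monotone non-increasing in $\eta$, so pointwise convergence on a countable dense set together with monotonicity gives uniformity, by a P\'olya-type argument. For (b) I would truncate to a compact set on which $p \ge \tau_n$ for a slowly decaying $\tau_n \to 0$ with $a_n/\tau_n \to 0$ (possible by the slack in $h_n/a_n \to \infty$); outside this set the tail decay from Assumption \ref{ass2} makes the contribution negligible. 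Finally, uniform convergence of the parametric pairs $(\hat\alpha_h(\eta),\hat\beta_h(\eta))$ to $(\alpha_h(\eta),\beta_h(\eta))$ transfers to $\sup_\alpha |\hat T_h(\alpha) - T_h(\alpha)| = o_P(1)$ because both curves are continuous, convex, non-increasing functions from $[0,1]$ to $[0,1]$ with matching endpoints, completing the proof.
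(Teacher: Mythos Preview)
Your route differs genuinely from the paper's. You split via $T_h$ and invoke Lemma~\ref{lem1}; the paper does the reverse, proving $\hat T_h \to T$ directly and then obtaining Lemma~\ref{lem1} as the special case $\hat p=p$, $\hat q=q$. Hence citing Lemma~\ref{lem1} here is circular within the paper's logical order unless you supply an independent proof of it, and that proof still needs the plateau analysis (Assumption~\ref{ass1}(ii)) you were hoping to sidestep. For the stochastic piece $\hat T_h - T_h$, your Lipschitz-weight decomposition is cleaner and more quantitative than what the paper does: the paper passes to a.s.\ convergence via Skorohod, truncates to a fixed compact $[-K,K]$, shows pointwise convergence of the integrands, and applies dominated convergence, with an explicit case split according to whether $\{q/p=\eta\}$ is null (Case~1) or has positive mass (Case~2, handled by shifting the threshold to $\eta-(\lambda-\tfrac12)h$ so that the perturbed indicator reproduces the Neyman--Pearson coin-flip probability; this is the only place $a_n=o(h_n)$ is used). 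Your bound makes the role of $a_n/h_n\to 0$ transparent throughout and avoids the case split, at the cost of pushing the plateau work into Lemma~\ref{lem1}. One correction to your sketch: in the $\hat\beta_h$ analogue of (II) the outer density is $q$ while the denominator still carries $\hat p$, so the truncated bound is of order $a_n/(h_n\tau_n^2)$ rather than $a_n/(h_n\tau_n)$; you therefore need $\tau_n\gg\sqrt{a_n/h_n}$, which the assumed slack still permits.
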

 $ $\\[-2ex]
The above result proves that simultaneously for all $\alpha$, the curve $\hat T_h$ approximates the optimal trade-off function $T$. Thus, we have achieved the first goal of this work. The (very favorable) empirical properties of $\hat T_h$ will be studied in Section \ref{sec6}. We have also incorporated Algorithm \ref{alg:pointwise_KDE_estimator} for an overview of the procedure in the appendix. 

\subsection{Finding maximum vulnerabilities} We conclude this section by some preparations for the second goal - auditing $f$-DP. The precise problem of auditing is described in Section \ref{sec:audit}. Here, we only mention that the task of auditing is to check (in some sense) whether $f$-DP holds for a claimed trade-off curve, say $f=T^{(0)}$.
As an initial step, to check $T^{(0)}$-DP, we create the estimator $\hat T_h$ for the optimal curve $T$. If $T^{(0)}$-DP holds, this means that
\begin{align} \label{e:H0fDP}
     T(\alpha)\ge  T^{(0)}(\alpha)\quad \forall \alpha \in [0,1].
\end{align}
A priori, we cannot say whether this is true or not. However, by comparing our estimator $\hat T_h$ with  $T^{(0)}$ we can gather some evidence. For example, if $\hat T_h(\alpha)$ is much smaller than $ T^{(0)}(\alpha)$ for some $\alpha$, it then seems that the claim in \eqref{e:H0fDP} is probably false. We will develop a rigorous criterion for what "much smaller" means in the next section. For now, we will confine ourselves to identifying a point where privacy seems most likely to be broken. We therefore define 
\begin{align} \label{e:def:eta}
\hat \eta^* \in \textnormal{argmax} \big\{T^{(0)}(\hat \alpha_h(\eta))-\hat T_h(\hat \alpha_h(\eta)): \eta\ge 0\big\} 
\end{align}
and the next result shows that the discrepancy between $T^{(0)}$ and $T$ is indeed maximized in $\hat \eta^* $ for large $n$.\\
\begin{prop} \label{prop1}
    Suppose that the assumptions of Theorem \ref{theo:1} hold. Then, it follows that 
    \begin{align*}
    &T^{(0)}(\hat \alpha_h(\hat \eta^*)) - T(\hat \alpha_h(\hat \eta^*)) \\
    =&\sup_{\alpha \in [0,1]}\big[ T^{(0)}(\alpha)-T(\alpha)\big]+o_P(1).
    \end{align*}
\end{prop}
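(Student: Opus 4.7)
The plan is to combine the uniform convergence of $\hat T_h$ from Theorem~\ref{theo:1} with a standard argmax-versus-supremum argument. Setting $D(\alpha) := T^{(0)}(\alpha) - T(\alpha)$ and $\hat D(\alpha) := T^{(0)}(\alpha) - \hat T_h(\alpha)$, the determinism of $T^{(0)}$ turns Theorem~\ref{theo:1} into the uniform bound $\sup_{\alpha \in [0,1]} |\hat D(\alpha) - D(\alpha)| = o_P(1)$. The upper bound $D(\hat\alpha_h(\hat\eta^*)) \leq \sup_\alpha D(\alpha)$ is immediate, so the task reduces to establishing the matching lower bound up to an $o_P(1)$ term.

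I would do this via the chain
\begin{align*}
D(\hat\alpha_h(\hat\eta^*)) &\geq \hat D(\hat\alpha_h(\hat\eta^*)) - o_P(1) \\
&= \sup_{\eta \geq 0} \hat D(\hat\alpha_h(\eta)) - o_P(1) \\
&\geq \sup_{\alpha \in [0,1]} \hat D(\alpha) - o_P(1) \\
&\geq \sup_{\alpha \in [0,1]} D(\alpha) - o_P(1),
\end{align*}
in which the first and fourth inequalities use the uniform consistency of $\hat D$, the equality on the second line is the definition of $\hat\eta^*$, and the third inequality demands that the image of $[0,\infty)$ under $\hat\alpha_h$ comes arbitrarily close to the location of $\sup_\alpha \hat D(\alpha)$.

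To justify the third step, I observe that $\eta \mapsto \hat\alpha_h(\eta) = h^{-1}\int_{\eta - h/2}^{\eta + h/2} g(\tau)\,d\tau$, with $g(\tau) := \int_{\{\hat q/\hat p > \tau\}} \hat p$, is a convolution of a bounded monotone function with a uniform kernel and is therefore continuous and non-increasing on $[0,\infty)$; it satisfies $\hat\alpha_h(\eta) \to 0$ as $\eta \to \infty$, so its range is the interval $[0, \hat\alpha_h(0)]$. Because $\hat q \geq 0$ forces $g(\tau) = 1$ for $\tau < 0$, one has $\hat\alpha_h(0) = \tfrac{1}{2} + h^{-1}\int_0^{h/2} g(\tau)\,d\tau \to 1$ as $h \to 0$ (as happens, e.g., for KDE with Gaussian kernels, where $\hat q$ is strictly positive on the support of $\hat p$). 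Since $\hat T_h$ and $T^{(0)}$ are continuous trade-off functions vanishing at $\alpha = 1$, $\hat D$ is continuous with $\hat D(1) = 0$, so any part of $\sup_\alpha \hat D(\alpha)$ attained in $(\hat\alpha_h(0), 1]$ can be transferred to a nearby point in $[0, \hat\alpha_h(0)]$ at a cost governed by the modulus of continuity of $\hat D$ on a shrinking neighborhood of~$1$.

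The main obstacle I foresee is making precisely this boundary argument quantitative, i.e.\ showing $\sup_{\alpha \in (\hat\alpha_h(0), 1]} \hat D(\alpha) = o_P(1)$. A clean route is to exploit the convexity of $\hat T_h$ and $T^{(0)}$ together with the endpoint identities $\hat T_h(1) = T^{(0)}(1) = 0$: convex non-increasing functions vanishing at $1$ admit a Lipschitz estimate near $\alpha = 1$ in terms of the interval length, yielding a bound of the form $|\hat D(\alpha)| \lesssim 1 - \hat\alpha_h(0)$ for $\alpha \in [\hat\alpha_h(0), 1]$, which is $o_P(1)$ by the previous paragraph. Once this boundary estimate is in place, the four-line chain above delivers the proposition as a direct consequence of the uniform consistency from Theorem~\ref{theo:1}.
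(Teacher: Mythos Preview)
Your argument is correct and follows the same three-step structure as the paper's proof: replace $T$ by $\hat T_h$ using Theorem~\ref{theo:1}, invoke the definition of $\hat\eta^*$ as the maximizer, and switch back from $\hat T_h$ to $T$ again via uniform convergence. The paper simply asserts the identity $\hat D(\hat\alpha_h(\hat\eta^*)) = \sup_{\alpha\in[0,1]}\hat D(\alpha)$ directly (treating the range of $\hat\alpha_h$ as all of $[0,1]$ by construction of $\hat T_h$), whereas you are more careful in separating $\sup_\eta$ from $\sup_\alpha$ and handling the boundary near $\alpha=1$; your extra care fills in a detail the paper leaves implicit, but the underlying approach is the same.
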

The threshold $\hat \eta^*$ demarcates the greatest weakness of the $T^{(0)}$-privacy claim and it is therefore ideally suited as a starting point for our auditing approach in Section \ref{sec:audit}.

\section{Goal 2: Auditing $f$-DP} \label{sec:goal2}

In this section, we develop methods for uncertainty quantification in our assessment of $T$. We begin with Section \ref{sec:conf}, where we derive (two dimensional) confidence regions for a pair of type-I and type-II errors. Our approach relies on the approximation of Bayes optimal classifiers using the $k$-nearest neighbor ($k$-NN) method. The resulting confidence regions are used in Section \ref{sec:audit} as a subroutine of a general-purpose $f$-DP auditor that combines the estimators from KDE and the confidence regions from $k$-NN.


\subsection{Pointwise confidence regions} \label{sec:conf}

In this section, we introduce the BayBox estimator, an algorithm designed to provide point-wise estimates of the trade-off curve $T$ with theoretical guarantees. Specifically, for a given threshold $\eta > 0$, the BayBox estimator outputs an estimate of the trade-off point $(\alpha(\eta), \beta(\eta))$. This estimate is guaranteed to be within a small additive error of the true trade-off point, with high probability.

The BayBox estimator is backed up by the observation that the quantity $\alpha(\eta)$ (also $\beta(\eta)$) can be expressed as the Bayes risk of a carefully constructed Bayesian classification problem. For instance, to compute $\alpha(\eta)$ when $\eta \geq 1$, a theoretical derivation (provided in the appendix) shows that this computation is equivalent to computing the Bayes risk for the Bayesian classification problem $\bbcP{\MixtureD{P}{\eta}}{Q}$\footnote{Refer to Section \ref{sec:class} for the notation and setup of the Bayesian classification problem.}. The mixture distribution $\MixtureD{P}{\eta}$ is formally defined in the following.


\begin{definition}[Mixture Distribution]
Let $P$ be a distribution and $\eta \in [1, +\infty)$. The mixture distribution $\MixtureD{P}{\eta}$ is defined as:
\begin{align*}
    \MixtureD{P}{\eta} =
    \begin{cases}
        P & \text{with probability } \frac{1}{\eta}, \\
        \bot & \text{with probability } 1 - \frac{1}{\eta}.
    \end{cases}
\end{align*}
\end{definition}

We note that recent work \cite{Lu2024} showed that the parameters of approximate DP can be expressed in terms of the Bayes risk of carefully constructed Bayesian classification problems. They further showed how to construct such classification problems using mixture distributions. Building on this foundation, our results significantly extend their approach by establishing a direct link between the theory of optimal classification and $f$-DP.

\begin{algorithm}[!htp]
\footnotesize
\algorithmicrequire \; \parbox[t]{\dimexpr0.9\linewidth-\algorithmicindent}{Black-box access to $\Mech$; Threshold $\eta > 0$; Sample size $n$.}\\[0.1cm]
\algorithmicensure \, An estimate $(\tilde{\alpha}(\eta), \tilde{\beta}(\eta))$ of $(\alpha(\eta), \beta(\eta))$ for tuple $(P, Q)$, where $\Mech(\DB)$ and $\Mech(\DB')$ are distributed according to $P, Q$, respectively.
\begin{algorithmic}[1]
    \State Set the classifier $\phi$ for the Bayesian classification problem $\bbcP{\MixtureD{P}{\eta}}{Q}$ if $\eta \geq 1$; otherwise, set $\phi$ for the problem $\bbcP{P}{\MixtureD{Q}{1/\eta}}$. By default, use the $k$-NN classifier $\kNNclassifier{n}$ with $k = \sqrt{n}$.
    \Function{\textnormal{BayBox \B{Estimator}} $\bbe{\phi}(M, \DB, \DB', \eta,n)$}{}
    \State Set $cnt_{\alpha} \leftarrow 0$ and $cnt_{\beta} \leftarrow 0$
    \For{$i \in [n]$}
        \State $x \leftarrow \Mech(\DB)$; $x' \leftarrow \Mech(\DB')$
        \State If $\phi(x) = 1$ then $cnt_{\alpha} \leftarrow cnt_{\alpha} + 1$
        \State If $\phi(x') = 1$ then $cnt_{\beta} \leftarrow cnt_{\beta} + 1$
    \EndFor
    \State Return $(\tilde{\alpha}(\eta), \tilde{\beta}(\eta)) \leftarrow (\frac{cnt_{\alpha}}{n}, 1 - \frac{cnt_{\beta}}{n})$
    \EndFunction
\end{algorithmic}
\caption{BayBox: A Black-Box Bayesian Classification Algorithm for $f$-DP Estimation}
\label{alg: general BayBox estimator}
\end{algorithm}

\B{The key insight to connect classification and $f$-DP is that the trade-off point $(\alpha(\eta), \beta(\eta))$ can be expressed as the expected classification error of the Bayes optimal classifier. We propose a simple Monte Carlo estimator for the expected classification error and an implementation is given by the BayBox estimator in Algorithm~\ref{alg: general BayBox estimator}. In theory, if the Bayes optimal classifier $\phi^*$ were known and used as input of the BayBox algorithm, the output of BayBox would be an unbiased estimator for $(\alpha(\eta), \beta(\eta))$ that has a small error with high probability. A formal statement is provided in Lemma \ref{lemma: accuracy stat of general BayBox estimator} of the appendix.
In practice $\phi^*$ is unknown, but can be approximated using a k-NN classifier. A statement of the theoretical approximation properties is given in the next theorem. The notation "$\mathbb{E}$" refers to the expectation of a random variable, conditional on the threshold $\eta$.}

\B{
\begin{theo}
    \label{thm: accuracy stat of kNN BayBox estimator}
    Suppose that Assumption \ref{ass1} holds. Let $\eta$, $(\alpha(\eta), \beta(\eta))$, $(\tilde{\alpha}(\eta), \tilde{\beta}(\eta))$, and $\phi$ be defined as in Algorithm~\ref{alg: general BayBox estimator}. Set $\phi$ to the $k$-NN classifier $\kNNclassifier{n}$, with $k = \sqrt{n}$, for the corresponding Bayesian classification problem
   :\\[0.6ex]
    \textnormal{1)}  Then, for any $\gamma \in (0,1)$ and any $n \ge 2$ it holds with  probability $\ge 1-\gamma$ that
      \begin{align*}
    &\max\big\{|\tilde{\alpha}(\eta) - \mathbb{E}[\tilde{\alpha}(\eta)]| ,|\tilde{\beta}(\eta) - \mathbb{E}[\tilde{\beta}(\eta)]|\big\} \leq w(\gamma),
    \end{align*}
    \textnormal{2)} Moreover, for any $\gamma \in (0,1)$ and for all $n$ sufficiently large it holds with probability $\ge 1-\gamma$ that
     \begin{align*}
    &\max\big\{|\tilde{\alpha}(\eta) - \alpha(\eta)|, |\tilde{\beta}(\eta) - \beta(\eta)|\big\} \leq (25 c_d) w(\gamma)~,
    \end{align*} 
Here,  $c_d$ is a constant depending on the dimension $d$ with $c_d \le 4.9^d$ and the bound $w(\gamma)$ is defined as
      \begin{align}
    w(\gamma):=\sqrt{\ln(4/\gamma)/(2n)}~. \label{e:wgamma}
    \end{align}
\end{theo}}
\B{The two statements in the above theorem are distinct and have different interpretations. Part 1) shows that the output of the BayBox algorithm $(\tilde \alpha(\eta), \tilde \beta(\eta))$ is randomly fluctuating in a narrow region of width $w(\gamma)$ around its expectation $(\mathbb{E} \tilde \alpha(\eta),\mathbb{E}  \tilde \beta(\eta))$. The expectation $(\mathbb{E} \tilde \alpha(\eta),\mathbb{E}  \tilde \beta(\eta))$ can be shown to always lie on or above the optimal trade-off curve (Remark \ref{lem:bias}, in the appendix) 
and in this sense the output  $(\tilde \alpha(\eta), \tilde \beta(\eta))$ can be slightly biased (it may overstate privacy).
Part 2) of the theorem entails the stronger statement that $(\tilde \alpha(\eta), \tilde \beta(\eta))$ is actually close to the true value $(\alpha(\eta), \beta(\eta))$. The price is a looser bound by a factor of $(25 c_d) $, which arises from bounding the distances $|\mathbb{E}\tilde \alpha(\eta)-\alpha(\eta)|$ and $|\mathbb{E}\tilde \beta(\eta)-\beta(\eta)|$.
In principle, auditing mechanisms for $f$-DP can be based on either part 1) or part 2) of the theorem, and in the next section we give details. Practically, using part 1) yields better results (more accurate detection for lower sample sizes) and will be used in our below methodology. Notice that it is also a finite sample bound and non-asymptotic.}

\subsection{Auditing $f$-DP} \label{sec:audit}

\textbf{Outline} In the remainder of this section, we present an $f$-DP auditor that fuses the localization of maximum vulnerabilities (by the KDE method) with the confidence guarantees (afforded by the $k$-NN method). We can describe the problem as follows: Usually, when a DP mechanism $M$ is developed it comes with a privacy guarantee for users. In the case of standard DP this takes the form of a single parameter $\varepsilon_0$. In the case of $f$-DP a privacy guarantee is associated with a continuous trade-off curve $T^{(0)}$. Essentially the developer promises that the mechanism will afford at least $T^{(0)}$-DP. The task of the auditor is to empirically and reliably check this claim.

\noindent\textbf{The auditor} We proceed in two steps. Since we do not want to force the two steps to depend on the same sample size parameters, we introduce two (potentially different) sample sizes $n_1, n_2$. First, using the KDE method, we find an estimated value of maximum vulnerability $\hat \eta^*$ (based on a sample of size $n_1)$. This is possible according to  Proposition \ref{prop1}. Second, we apply the BayBox algorithm with input $\hat \eta^*$ and sample size $n_2$, giving us outputs $(\tilde \alpha(\hat \eta^*), \tilde \beta(\hat \eta^*))$. \B{Then, we draw on Theorem \ref{thm: accuracy stat of kNN BayBox estimator}  to check the $T^{(0)}$-DP claim. More precisely, recall that the pair $(\mathbb{E} \tilde \alpha(\hat \eta^*),\mathbb{E}  \tilde \beta(\hat \eta^*))$ lies on or above the optimal, unknown trade-off curve $T$. This is intuitively clear, because any classifier will have a worse (at best equal) performance as the Bayes optimal classifier, which lies exactly on the curve $T$. Now, from Theorem \ref{thm: accuracy stat of kNN BayBox estimator} part 1) we know that the pair $(\mathbb{E} \tilde \alpha(\hat \eta^*),\mathbb{E}  \tilde \beta(\hat \eta^*))$ is included with high probability inside the box }
\begin{align} \label{e:defsq}
\square_\gamma:= \,\big[\tilde{\alpha}(\hat \eta^*) - w(\gamma),\tilde{\alpha}(\hat \eta^*) + w(\gamma)\big] &\\
\qquad\times  \big[\tilde{\beta}(\hat \eta^*) - w(\gamma),\tilde{\beta}(\hat \eta^*) + w(\gamma)\big]. \nonumber & 
\end{align}
\B{This means that $\square_\gamma$ includes points that are above the optimal trade-off curve $T$ with high probability.  
Now there are two cases: First, if the claim of $T^{(0)}$-DP is true (i.e. $T^{(0)} \le T$), then some points in $\square_\gamma$ must also be above $T^{(0)}$. For our auditor this means that if there are points in $\square_\gamma$ that are above $T^{(0)}$, we will detect "no privacy violation" (see for an illustration Figure \ref{fig:not_faulty_sgd_gauss}). In other words, our findings are compatible with  $T^{(0)}$-DP. 
Conversely, if we observe that the entire box $\square_\gamma$ is below $T^{(0)}$, then our auditor will detect a "privacy violation" and our findings are at odds with  $T^{(0)}$-DP (see for an illustration Figure \ref{fig:faulty_sgd_gauss}).}
Algorithm \ref{auditor} summarizes the procedure we have just described. It uses a small geometrical argument to check more easily whether the entire box is below $T^{(0)}$ or not (see lines $6-7$ of the algorithm).

\begin{algorithm}[h]
\footnotesize
\algorithmicrequire \; \parbox[t]{\dimexpr0.9\linewidth-\algorithmicindent}{Mechanism $\Mech$, neighboring databases $\DB, \DB'$, sample sizes $n_1, n_2$, confidence level $\gamma$, threshold vector $\eta$, claimed curve $T^{(0)}$.}\\[0.1cm]
\algorithmicensure \, "Violation" or "No Violation".
\begin{algorithmic}[1]
    \Function{\textnormal{Auditor}$(\Mech, \DB, \DB', n_1, n_2, \gamma,\eta, T^{(0)})$}{}
        \State Compute $\hat{T}_h$ using $\ptlr{h}{\cA}(M, \DB, \DB', \eta,n_1)$ for all $\eta_i \in \eta$.
        \State Compute $\hat{\eta}^* \in \arg\max \left\{T^{(0)}(\hat{\alpha}_h(\eta)) - \hat{T}_h(\hat{\alpha}_h(\eta)) : \eta \ge 0\right\}$.
        \State Run the $k$-NN BayBox estimator $\bbe{\kNNclassifier{n_2}}(M, \DB, \DB', \hat \eta^*,n_2)$ to obtain $(\tilde{\alpha}(\hat{\eta}^*), \tilde{\beta}(\hat{\eta}^*))$.
        \State Calculate the threshold $w(\gamma)$ from eq. \eqref{e:wgamma}
        \State Calculate $i^*$ as the solution to $T^{(0)}(i^*) = \tilde{\beta}(\hat{\eta}^*) + w(\gamma)$.
        \If{$i^* > \tilde{\alpha}(\hat{\eta}^*) + w(\gamma)$}
          \State \Return "Violation".
        \Else
           \State \Return  "No Violation".
        \EndIf
    \EndFunction
\end{algorithmic}
\caption{Privacy Violation Detection Algorithm}\label{alg:auditor}
\label{auditor}
\end{algorithm}


\noindent\textbf{Theoretical analysis} To provide theoretical guarantees for the algorithm, we add a mathematical assumption on the trade-off curve of $p \sim M(D), q \sim M(D')$.

\begin{ass} \label{ass3}
    The optimal trade-off curve $T$ corresponding to the output densities $p,q$ is strictly convex.
\end{ass}
We can now formulate the main theoretical result for the auditor. 

\begin{theo} \label{theo:auditor}
    Suppose that Assumptions \ref{ass1} and \ref{ass2} hold, let $\gamma \in (0,1)$ be user-determined and denote the output of \B{Auditor($M, D, D', n_1, n_2, \gamma$,$\eta$, $T^{(0)}$)} by $A$.
    \begin{itemize}
        \item[1)] Then, if $T^{(0)}(\alpha) \le T(\alpha)$ for all $\alpha \in [0,1]$ (no violation of $T^{(0)}$-DP), it follows for any $n_1,n_2 \ge 2$ 
    \[
    \Pr\Big[ A = "\textnormal{No Violation}"\Big]\ge 1-\gamma.
    \]
 
    \item[2)] Suppose that additionally Assumption \ref{ass3} holds. Then, if $T^{(0)}(\alpha^*) > T(\alpha^*)$ for some $\alpha^* \in [0,1]$ (a violation of $T^{(0)}$-DP), it follows that   
    \[
    \lim_{n_1 \to \infty} \,\,\liminf_{n_2 \to \infty} \,\,\Pr\Big[ A = "\textnormal{Violation}"\Big]= 1.
    \]
    \end{itemize}
\end{theo}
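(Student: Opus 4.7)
The plan is to combine the geometric condition for the auditor's verdict with the convergence guarantees of the two subroutines: by monotonicity of $T^{(0)}$, Lines 7--8 of Algorithm~\ref{auditor} output "Violation" iff $\tilde\beta(\hat\eta^*) + w(\gamma) < T^{(0)}(\tilde\alpha(\hat\eta^*) + w(\gamma))$, i.e., iff the upper-right corner (and hence the entire box $\square_\gamma$) lies strictly below $T^{(0)}$. Both parts then reduce to controlling the position of $\square_\gamma$ relative to $T^{(0)}$.

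For part (1), the key observation is that $(\alpha(\hat\eta^*), \beta(\hat\eta^*))$ lies on $T$, so under the no-violation condition $T \ge T^{(0)}$ one has $\beta(\hat\eta^*) = T(\alpha(\hat\eta^*)) \ge T^{(0)}(\alpha(\hat\eta^*))$, irrespective of how $\hat\eta^*$ is chosen. By Theorem~\ref{thm: accuracy stat of kNN BayBox estimator}, for $n_2$ large enough the event $E := \{(\alpha(\hat\eta^*),\beta(\hat\eta^*)) \in \square_\gamma\}$ occurs with probability at least $1-\gamma$, and on $E$ the above observation together with monotonicity of $T^{(0)}$ yields
\[
\tilde\beta(\hat\eta^*) + w(\gamma) \;\ge\; \beta(\hat\eta^*) \;\ge\; T^{(0)}(\alpha(\hat\eta^*)) \;\ge\; T^{(0)}(\tilde\alpha(\hat\eta^*) + w(\gamma)),
\]
so the violation condition fails and "No Violation" is returned; taking $\liminf$ in $n_2$ and then $n_1$ gives the stated $1-\gamma$ bound.

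For part (2), set $\Delta^* := \sup_\alpha[T^{(0)}(\alpha) - T(\alpha)] > 0$. Proposition~\ref{prop1} yields $T^{(0)}(\hat\alpha_h(\hat\eta^*)) - T(\hat\alpha_h(\hat\eta^*)) \to \Delta^*$ in probability as $n_1\to\infty$. I would upgrade this to
\[
T^{(0)}(\alpha(\hat\eta^*)) - \beta(\hat\eta^*) \;=\; \Delta^* + o_P(1),
\]
by establishing $|\hat\alpha_h(\hat\eta^*) - \alpha(\hat\eta^*)| = o_P(1)$ using uniform density convergence (Assumption~\ref{ass2}), continuity of the densities (Assumption~\ref{ass1}), and $h_n \to 0$, together with the continuity of $T^{(0)}$ and $T$. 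Fixing such $\hat\eta^*$ and letting $n_2 \to \infty$, Theorem~\ref{thm: accuracy stat of kNN BayBox estimator} implies $\tilde\alpha(\hat\eta^*) \to \alpha(\hat\eta^*)$ and $\tilde\beta(\hat\eta^*) \to \beta(\hat\eta^*)$ in probability (for any $\epsilon>0$, pick $\gamma' \le \gamma$ with $w(\gamma') < \epsilon$, valid for large $n_2$), and $w(\gamma) \to 0$. Consequently
\[
T^{(0)}(\tilde\alpha(\hat\eta^*) + w(\gamma)) - \tilde\beta(\hat\eta^*) - w(\gamma) \;\xrightarrow{P}\; \Delta^* \,>\, 0,
\]
so $\square_\gamma$ lies strictly below $T^{(0)}$ with probability tending to $1$, and "Violation" is returned.

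The main obstacle I anticipate is the intermediate step in part (2) that relocates the Proposition~\ref{prop1} conclusion from the random point $\hat\alpha_h(\hat\eta^*)$ to the related random point $\alpha(\hat\eta^*)$: since $\hat\eta^*$ is data-dependent, one must control the error $|\hat\alpha_h(\eta) - \alpha(\eta)|$ uniformly in $\eta$, which requires analyzing how the perturbed level sets $\{\hat q/\hat p > \eta + x\}$ approximate $\{q/p > \eta\}$. Assumption~\ref{ass1}(ii) confines the thresholds on which $q/p$ has plateaus, while Assumption~\ref{ass3} (strict convexity of $T$) ensures $\eta \mapsto \alpha(\eta)$ is a strictly monotone homeomorphism, so the uniform trade-off convergence of Theorem~\ref{theo:1} translates into convergence of the type-I errors at matching thresholds. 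Once this lemma is in place, the remainder of the argument is routine continuity and concentration.
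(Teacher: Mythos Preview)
Your proposal is correct and mirrors the paper's proof: Part~1 is the same box-geometry argument, and Part~2 hinges on the same key lemma $\sup_\eta|\hat\alpha_h(\eta)-\alpha(\eta)|=o_P(1)$, which the paper obtains by invoking Case~1 of the proof of Theorem~\ref{theo:1} (Assumption~\ref{ass3} rules out plateaus, so Case~1 applies to every $\eta$ and the paper then works on the set $\Psi=\{\eta:T^{(0)}(\alpha(\eta))-T(\alpha(\eta))\ge v^*/2\}$ rather than chasing exact convergence to $\Delta^*$, but that is cosmetic). One caution: your final sentence hints at deducing this lemma from the \emph{statement} of Theorem~\ref{theo:1} via the homeomorphism $\eta\mapsto\alpha(\eta)$, but uniform convergence of $\hat T_h$ to $T$ as functions of $\alpha$ does not by itself align the two $\eta$-parametrizations of the curves, so stick with your earlier density-level plan, which is exactly the paper's route.
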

Part 1) of the theorem states that the risk of falsely detecting a violation can be made arbitrarily small ($\le \gamma$) by the user. On the other hand, if some violation exists, part 2) assures that it will be reliably detected for large enough sample sizes. We note that for smaller values of $\gamma$ larger sample sizes are typically needed to detect violations. This follows from the definition of the box $\square_\gamma$ in \eqref{e:defsq}. \\\B{The theoretical Assumptions \ref{ass1}-\ref{ass2} of the theorem are comparable to related works \cite{Kutta2024,Lu2024} and require smoothness of the output distributions $p,q$. Such assumptions are required to avoid known impossibility results in privacy estimation (see \cite{gorla2023impossibility}). Assumption \ref{ass3} of a strictly convex trade-off function is often satisfied (e.g. for Gaussian type mechanisms), but can be further relaxed. A simple to prove but fairly general relaxation is that $T$ is only strictly convex in a sufficiently small, open neighborhood of the set $argmax(T^{(0)}-T)$. We do not include it here, to avoid making the results even more technical. } 
\begin{rem}
The auditor in Algorithm \ref{alg:auditor} uses the threshold $\hat \eta^*$ (see eq.
\ref{e:def:eta}), to locate the maximum vulnerability. We point out that any other method to find vulnerabilities would still enjoy the guarantee from part 1) of Theorem \ref{theo:auditor} (it is a property of $k$-NN), but not necessarily of part 2). It might be an interesting subject of future work to consider other ways of choosing $\hat \eta^*$ (e.g. based on the two dimensional Euclidean distance between $T^{(0)}$ and $\hat T_h$ rather than the supremum distance).
\end{rem}

\B{
\begin{rem}
Our black-box algorithms for estimation and auditing face computational limitations due to sample size requirements and the curse of dimensionality. These challenges arise from the black-box setting itself, where we require larger amounts of data samples to make up for missing information and knowledge with regard to algorithm structure. In addition, higher dimensional algorithm output only increases the need for larger sampling efforts. This makes the auditing of machine learning on large, real-world datasets challenging. Here, white-box methods that aim at minimizing the amount of trained models needed for a privacy audit can be of help (see e.g.\cite{Nasr2023}). We discuss how modifications of our algorithms could help with these limitations in Section \ref{sec:summary_discussion}.
\end{rem}
}

\section{Experiments} \label{sec6}
We investigate the empirical performance of our new procedures in various experiments to demonstrate their effectiveness.
Recall that our procedures are developed for two distinct goals, namely estimation of the optimal trade-off curve $T$ (see Section \ref{sec:4}) and auditing a privacy claim $T^{(0)}$ (see Section \ref{sec:goal2}). We will run experiments for both of these objectives. \\
\textbf{Experiment Setting:} 
Throughout the experiments, we consider databases $\DB,\DB' \in [0,1]^r$, where the participant number is always $r=10$. As discussed in Section \ref{sec:overview_techniques}, we first choose a pair of neighboring datasets such that there is a large difference in the output distributions of $\Mech(D)$ and $\Mech(D')$. We can achieve this by simply choosing $D$ and $D'$ to be as far apart as possible (while still remaining neighbors) and we settle on the choice 
\begin{equation}\label{eq_databases}
    \DB=(0,\hdots, 0)\quad \textnormal{and} \quad \DB'=(1,0,\hdots, 0)
\end{equation}
for all our experiments.

\subsection{Mechanisms}\label{sec:algorithms}
In this section, we test our methods on two frequently encountered mechanisms from the auditing literature: the Gaussian mechanism and differentially private Stochastic Gradient Descent (DP-SGD). We study two other prominent DP algorithms -- the Laplace and Subsampling mechanism -- in Appendix \ref{AppB}. \\

\noindent \textbf{Gaussian mechanism.}
We consider the summary statistic $S(x)= \sum_{i=1}^{10} x_i$ and the mechanism
\begin{equation*}
    M(x):= S(x)+Y~,
\end{equation*}
where $Y\sim \mathcal N (0, \sigma^2)$. The statistic $S(x)$ is privatized by the random noise $Y$ if the variance $\sigma^2$ of the Normal distribution is appropriately scaled. We choose $\sigma = 1$ for our experiments and note that - in our setting - the optimal trade-off curve is given by 
\begin{align*}
     T_{Gauss}(\alpha)= \Phi(\Phi^{-1}(1-\alpha)- \mu)
\end{align*}
with $\mu = 1$. We point the reader to \cite{Dong2022} for more details. \\

\noindent \textbf{DP-SGD.} The DP-SGD mechanism is designed to (privately) approximate a solution for the empirical risk minimization problem
\begin{equation*}
\theta^*=argmin_{\theta\in \Theta} \mathcal L_x(\theta) \quad \text{with} \quad \mathcal L_x(\theta)=\frac{1}{r}\sum_{i=1}^{r} \ell(\theta, x_i)~.
\end{equation*}
Here, $\ell$ denotes a loss function, $\Theta$ a closed convex set and $\theta^*\in \Theta$ the unique optimizer. For sake of brevity, we provide a description of DP-SGD in the appendix (see Algorithm \ref{alg:noisy_sgd}). In our setting, we consider the loss function $\ell(\theta, x_i)=\frac{1}{2} (\theta-x_i)^2$, initial model $\theta_0=0$ and $\Theta=\mathbb{R}$. The remaining parameters are fixed as $\sigma=0.2, \rho = 0.2, \tau = 10, m=5$. In order to have a theoretical benchmark for our subsequent empirical findings, we also derive the theoretical trade-off curve $T_{SGD}$ analytically for our setting and choice of databases (see Appendix \ref{AppB} for details). Our calculations yield
\begin{equation*}
    T_{SGD}(\alpha)=\sum_{I\subset \{1,\hdots, \tau \}} \frac{1}{2^{\tau}}\Phi\Big(\Phi^{-1} (1-\alpha)-\frac{\mu_I}{\bar\sigma}\Big)~,
\end{equation*}
where $\mu_I$ is chosen as in \eqref{mu_I} and $\bar{\sigma}$ as in \eqref{sigma_bar}.

\subsection{Simulations}
We begin by outlining the parameter settings of our KDE and $k$-NN methods for our simulations. We then discuss the metrics employed to validate our theoretical findings and, in a last step, present and analyze our simulation results.\\
\textbf{Parameter settings:}
For the KDEs, we consider different sample sizes of $n_1=10^2,10^3,10^4,10^5,10^6$ and we fix the perturbation parameter at $h=0.1$. For the bandwidth parameter $b$ (see Sec. \ref{sec:kde}), we use the method of \cite{bandwidth}. To approximate the optimal trade-off curve, we use $1000$ equidistant values for $\eta$ between $0$ and $15$ (see Algorithm \ref{alg:pointwise_KDE_estimator} for details on the procedure). For the $k$-NN, we set the training sample size to \B{$n_2=10^3,10^4,10^5$} and testing sample size to $10^3,10^4$ and $10^5$. \\

\noindent \textbf{Estimation}
The first goal of this work is estimation of the optimal trade-off curve $T$. In our experiments, we want to illustrate the uniform convergence of the estimator $\hat T_h$ to the optimal curve $T$, derived in Theorem \ref{theo:1}. Therefore, we consider increasing sample sizes $n_1$ to study the decreasing error. The distance of $\hat T_h$ and $T$ in each simulation run is measured by the  uniform distance\footnote{Of course, one cannot practically maximize over all (infinitely many) arguments $\alpha \in [0,1]$. The estimator $\hat T_h$ is made for a grid of values for $\eta$ (see our parameter settings above) and we maximize over all gridpoints.} 
\[
    Error_T:=\sup_{\alpha \in [0,1]}|\hat T_h(\alpha)-T(\alpha)|.
\]
To study not only the distance in one simulation run, but across many, we calculate $Error_T$ in $1000$ independent runs and take the (empirical) mean squared error
\begin{equation}\label{eq:mse}
    MSE(Error_T):= \Ex{Error_T^2}.
\end{equation}
The results are depicted in Figure \ref{fig:estimation_mse} for the DP algorithms described in this section and the appendix. On top of that, we also construct figures that upper and lower bound the worst case errors for the Gaussian mechanism and DP-SGD over the $1000$ simulation runs. These plots visually show how the error of the estimator $\hat T_h$ shrinks as $n_1$ grows. 
The results are summarized in Figures \ref{fig:gaussian}-\ref{fig:sgd}.\\
\begin{figure}
\centering\includegraphics[width=0.75\linewidth]{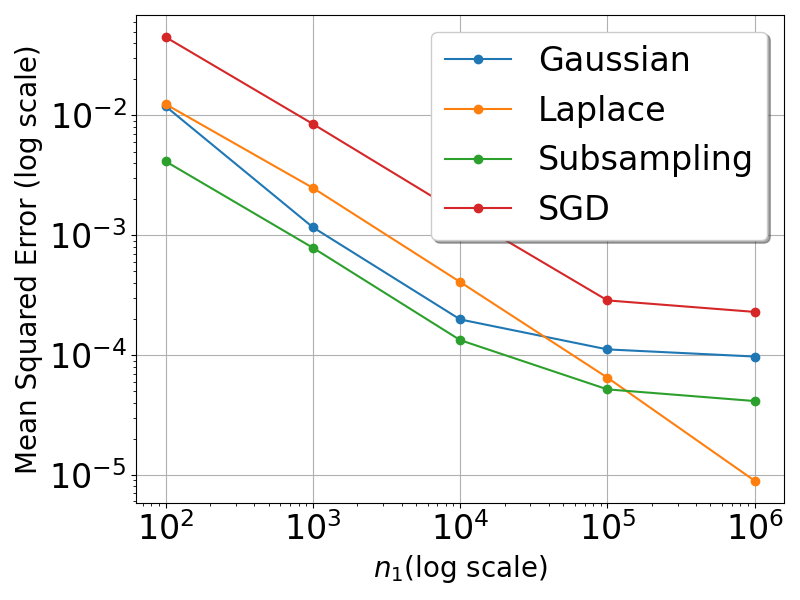}
    \caption{\centering
    MSE defined in \eqref{eq:mse} to empirically validate Theorem \ref{theo:1} for varying sample sizes $n_1$ and over $1000$ simulation runs each.}\label{fig:estimation_mse}
\end{figure}
\begin{figure*}[h!]
    \centering
    \subfloat[$n_1=10^3$]{\includegraphics[width=0.3\textwidth]{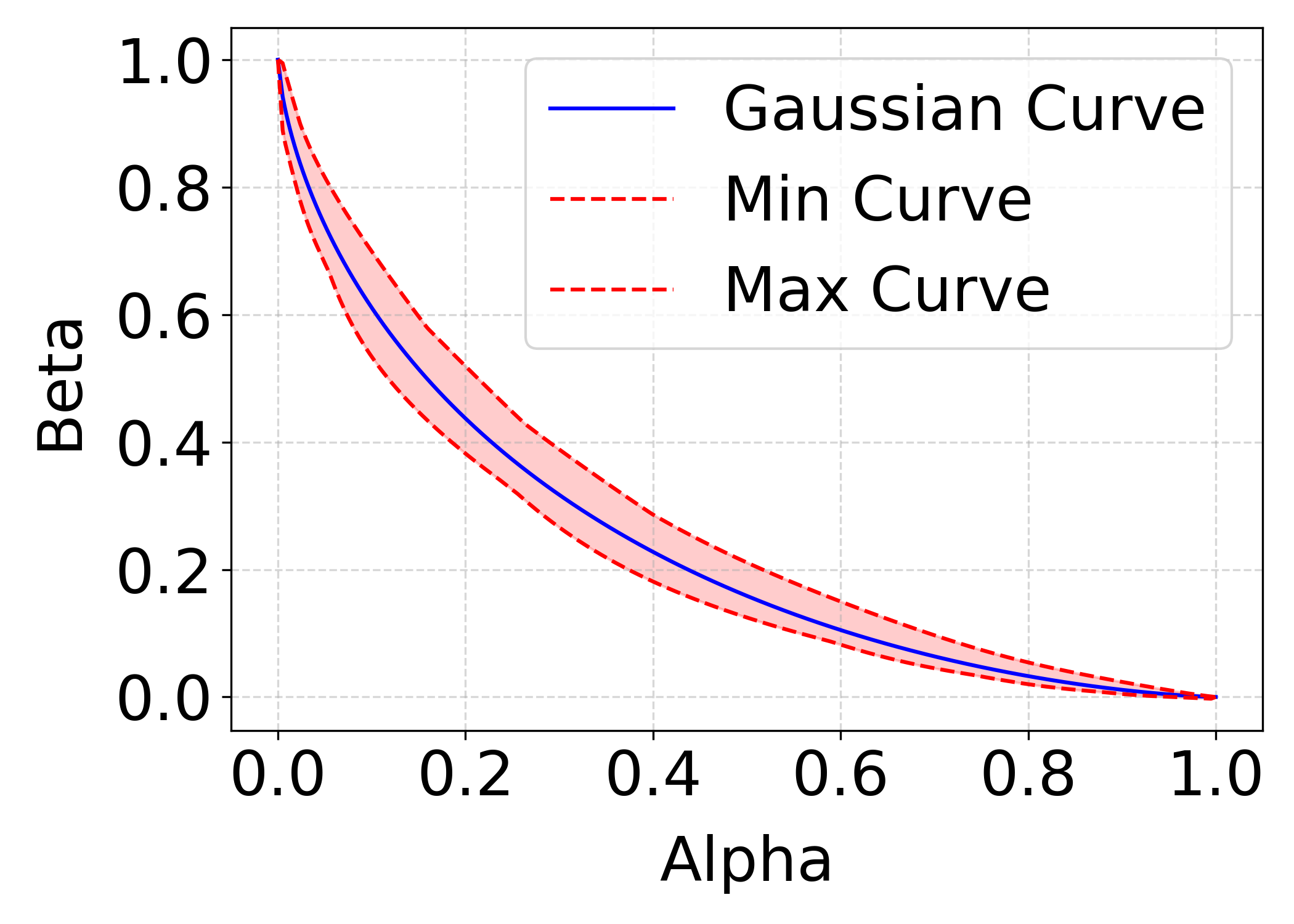}}
    \hfill
    \subfloat[$n_1=10^4$]{\includegraphics[width=0.3\textwidth]{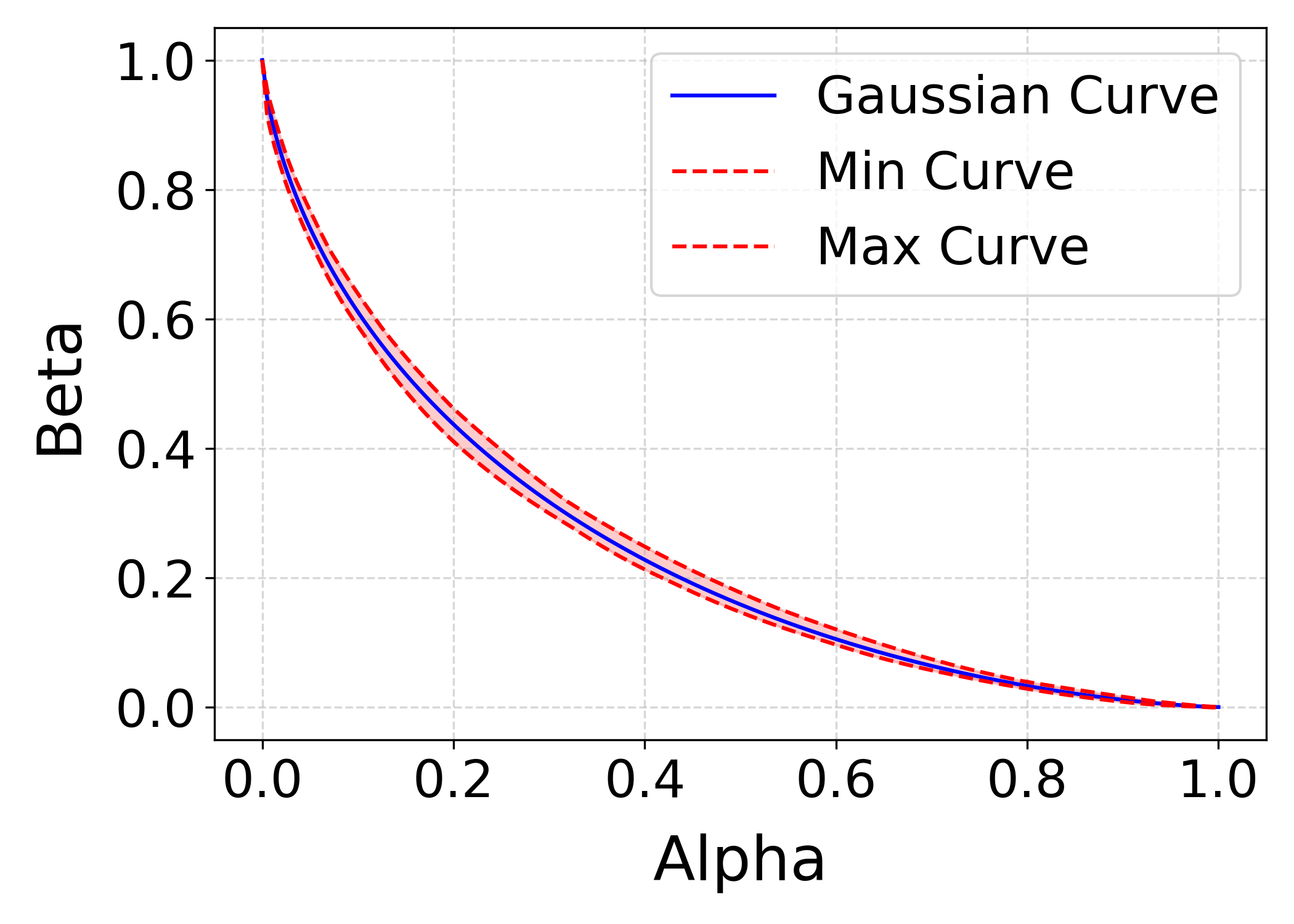}}
    \hfill
    \vspace{-0.2cm}
    \subfloat[$n_1=10^5$]{\includegraphics[width=0.3\textwidth]{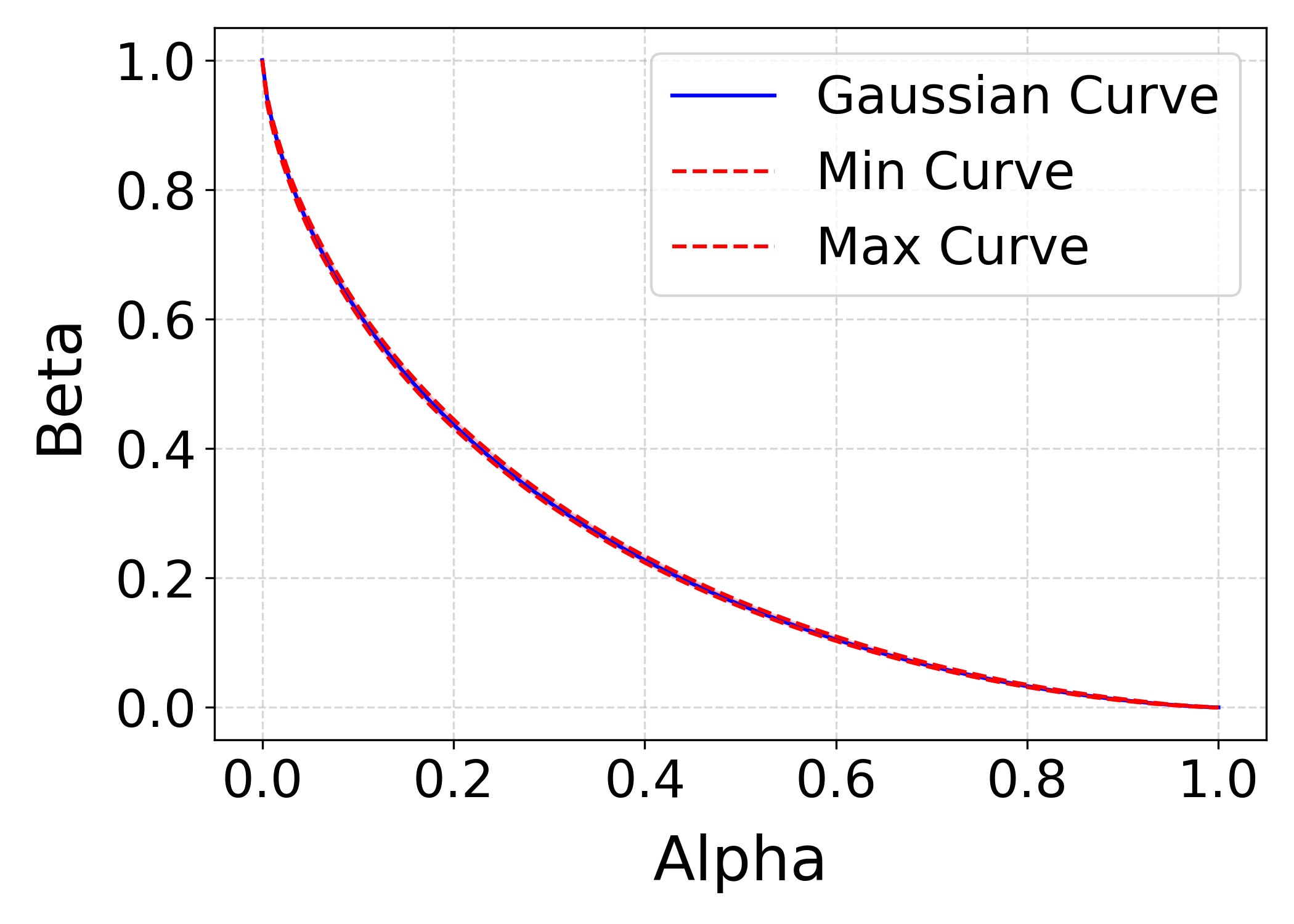}}
    \caption{Estimation of the Gaussian Trade-off curve $T_{Gauss}$ for varying sample sizes and $\mu=1$. Min- and Max Curve lower- and upper bound the worst point-wise deviation from the true curve $T_{Gauss}$ over $1000$ simulations.}
    \label{fig:gaussian}
\vspace{-0.1cm}
\centering
    \subfloat[$n_1=10^3$]{\includegraphics[width=0.3\textwidth]{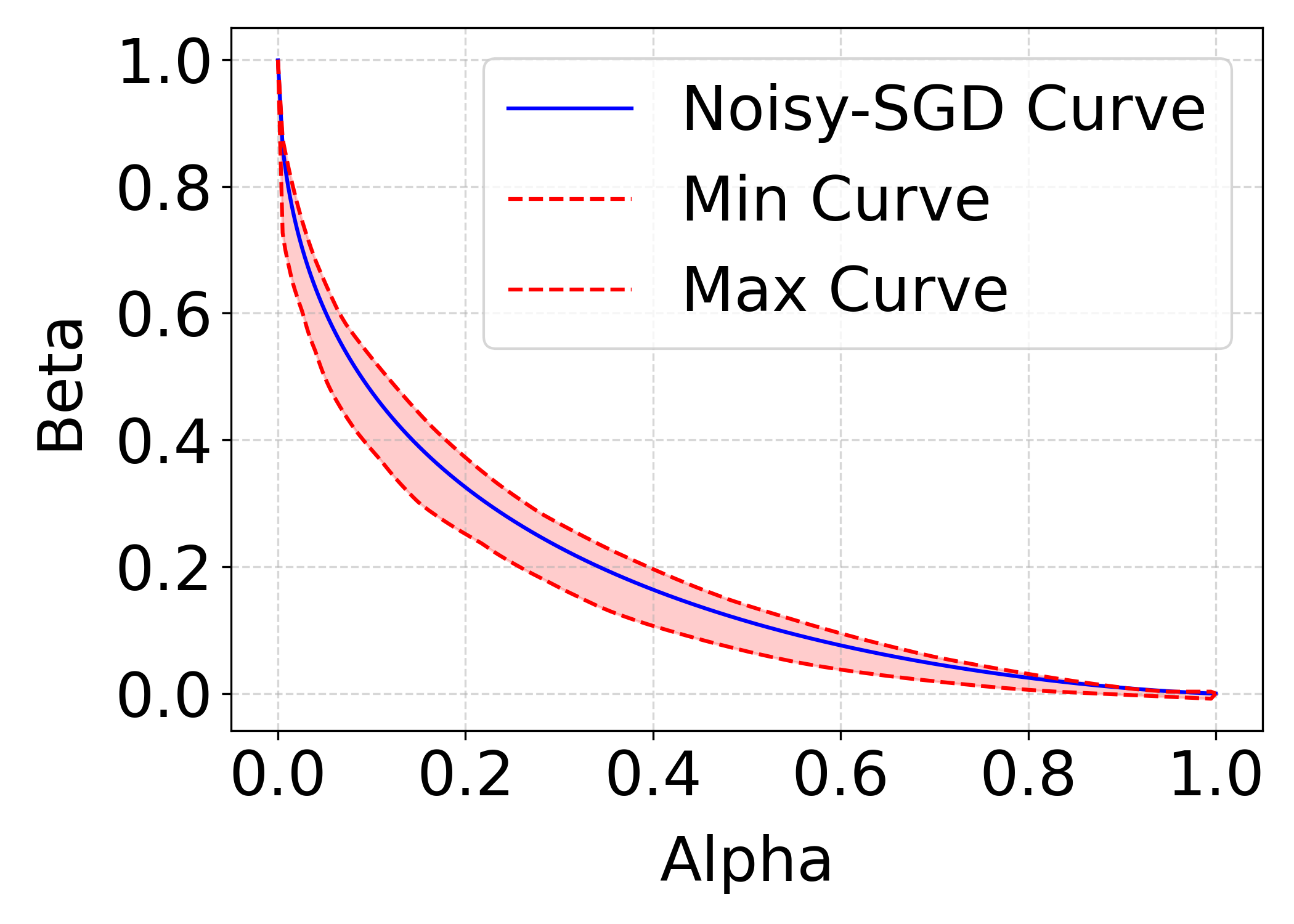}}
    \hfill
    \subfloat[$n_1=10^4$]{\includegraphics[width=0.3\textwidth]{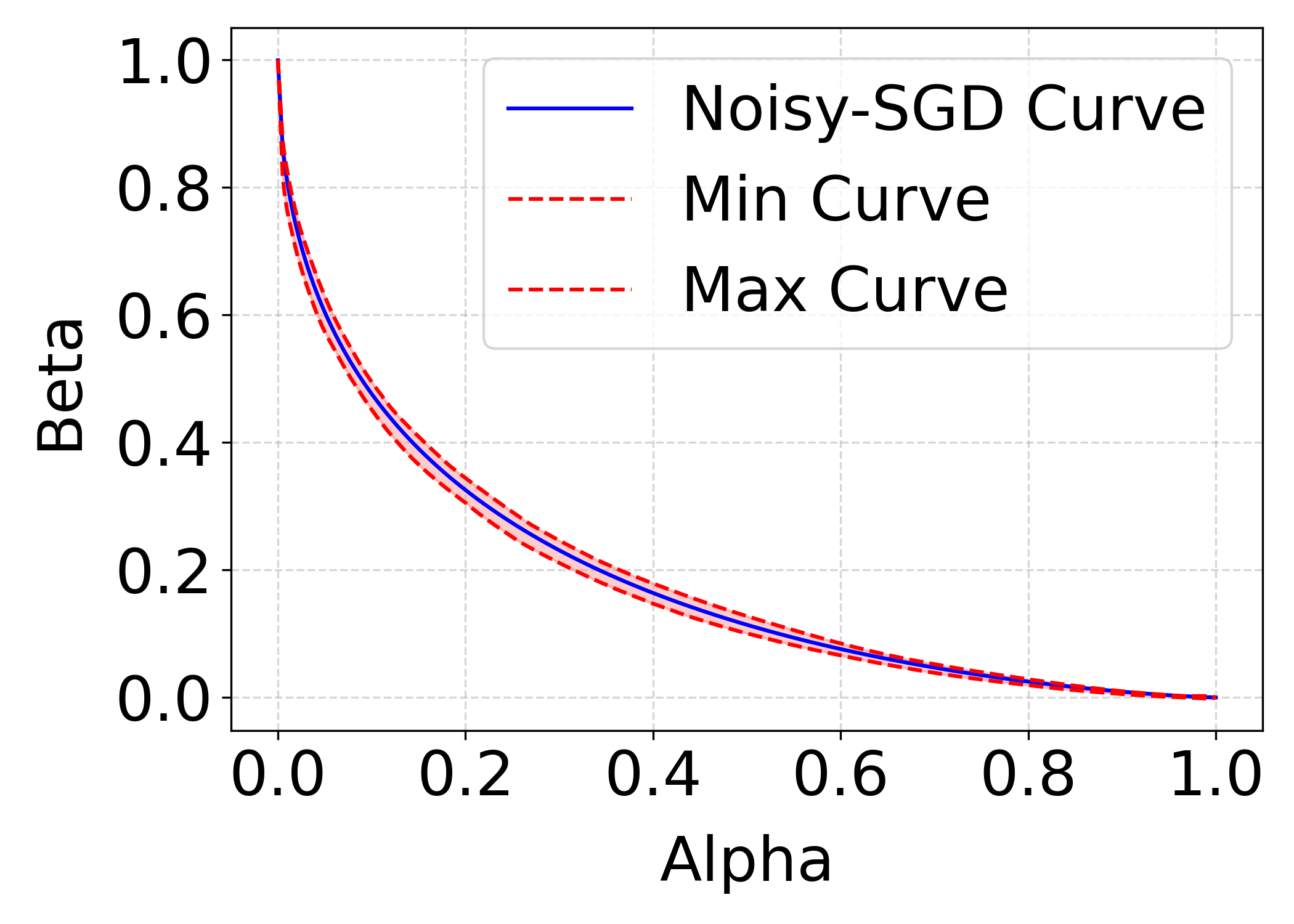}}
    \hfill
    \vspace{-0.2cm}
    \subfloat[$n_1=10^5$]{\includegraphics[width=0.3\textwidth]{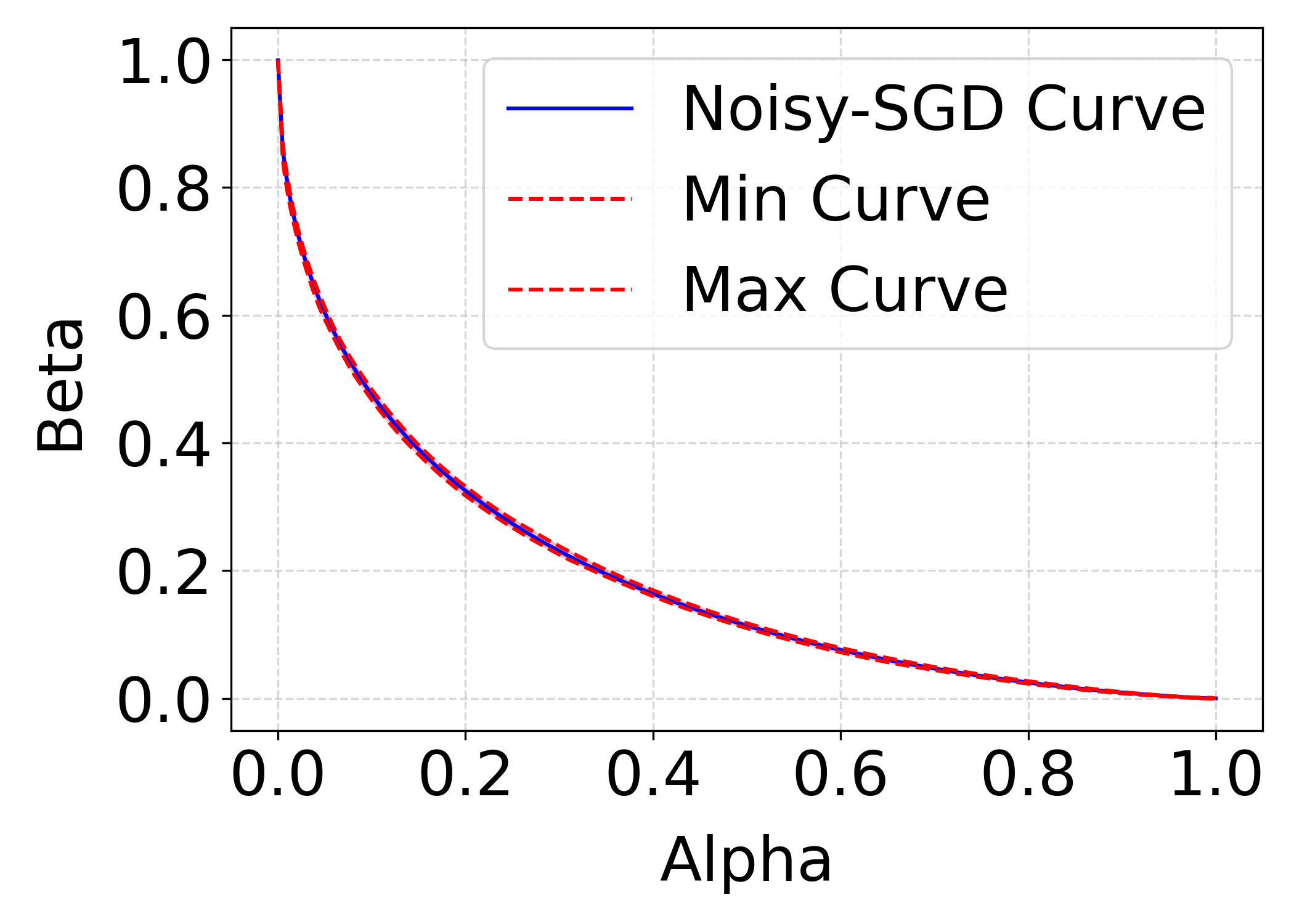}}
    \caption{Estimation of the DP-SGD Trade-off curve $T_{SGD}$ for varying sample sizes. Min- and Max Curve lower- and upper bound the worst point-wise deviation from the true curve $T_{SGD}$ over $1000$ simulations.}
    \label{fig:sgd}
\end{figure*}

\noindent {\textbf{Inference}\label{Inference}}
Next, we turn to the second goal of this work: Auditing a $T^{(0)}$-DP claim for a postulated trade-off curve $T^{(0)}$. 
The theoretical foundations of our auditor can be found in Theorem \ref{theo:auditor}. The theorem makes two guarantees: First, that for a mechanism $M$ satisfying $T^{(0)}$-DP the auditor will (correctly) not detect a violation, except with low, user-determined probability $\gamma$. Second, if $M$ violates  $T^{(0)}$-DP, the auditor will (correctly) detect the violation for sufficiently large sample sizes $n_1,n_2$. Together, these results mean that if a violation of $T^{(0)}$-DP is detected by the auditor, the user can have high confidence that $M$ does indeed not satisfy $T^{(0)}$-DP. 
For the first part, we consider a scenario, where the claimed trade-off curve $T^{(0)}$ is the correct one $T^{(0)}=T$ ($M$ does not violate $T^{(0)}$-DP). For the second part, we choose a function $T^{(0)}$ above the true curve $T$ ($M$ violates $T^{(0)}$-DP). We will consider both scenarios for the Gaussian mechanism and DP-SGD.
We run our auditor (Algorithm \ref{alg:auditor}) with parameters $n_1=10^4$ and $\gamma=0.05$ fixed. The choice of $\gamma=0.05$ is standard for confidence regions in statistics and we further explore the impact of $n_1$ and $\gamma$ in additional experiments in Appendix \ref{AppB}. Here, we focus on the most impactful parameter, the sample size $n_2$ and study values of  \B{$n_2 = 10^3,10^4,10^5$}. \\
Technically, the auditor only outputs a binary response that indicates whether a violation is detected or not. However, in our below experiments, we depict the inner workings of the auditor and geometrically illustrate how a decision is reached. More precisely, in Figure \ref{fig:not_faulty_sgd_gauss} we depict the claimed trade-off curve $T^{(0)}$ as a blue line. The auditor makes an estimate for the true trade-of curve $T$, namely $\hat T_h$ depicted as the orange line. The location, where the orange line (estimated DP) and the blue line (claimed DP) are the furthest apart is indicated by the vertical, dashed green line. This position is associated with the threshold $\hat \eta^*$ in Algorithm \ref{alg:auditor}. As a second step, $\hat \eta^*$ is used in the $k$NN method to make a confidence region, depicted as a purple square (this is $\square_\gamma$ from \eqref{e:defsq}). If the square is fully below the claimed curve $T^{(0)}$, a violation is detected (Figure \ref{fig:faulty_sgd_gauss}) and if not, then no violation is detected (Figures \ref{fig:gaussian} and \ref{fig:sgd}). As we can see, detecting violations requires $n_2$ to be large enough, especially when $T^{(0)}$ and $T$ are close to each other. \\
For the incorrect $T^{(0)}$-DP claims, we have done the following: For the Gaussian case (Figure \ref{fig:faulty_sgd_gauss}), we have used a trade-off curve with parameter $\mu=0.5$ instead of the true $\mu=1$. For DP-SGD, we have used the trade-off curve corresponding to $\tau = 5$ instead of the true $\tau = 10$ iterations (Figure \ref{fig:faulty_sgd_gauss}). 

\begin{figure*}[t]
    \centering
    \subfloat[\centering \B{$n_2=10^3$},\textbf{Ground Truth:} No Violation; \newline \textbf{Decision:} \textcolor{green}{"No Violation"}{\textcolor{green}{\scalebox{1.5}{\ding{51}}}}]{\includegraphics[width=0.3\textwidth]{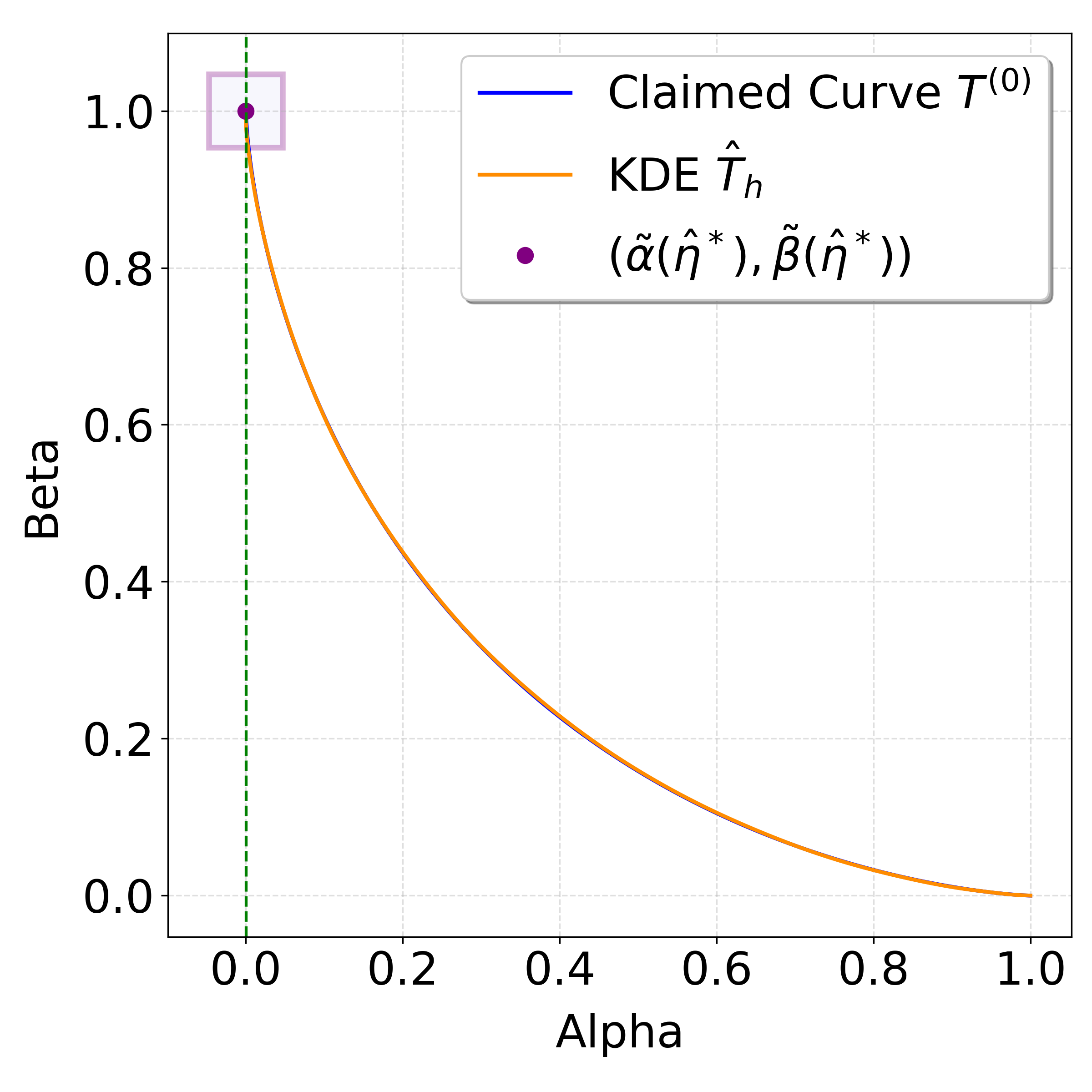}}
    \hfill
    \subfloat[\centering  \B{$n_2=10^4$},\textbf{ Ground truth:} No Violation; \newline \textbf{Decision:} \textcolor{green}{"No Violation"}{\textcolor{green}{\scalebox{1.5}{\ding{51}}}}]{\includegraphics[width=0.3\textwidth]{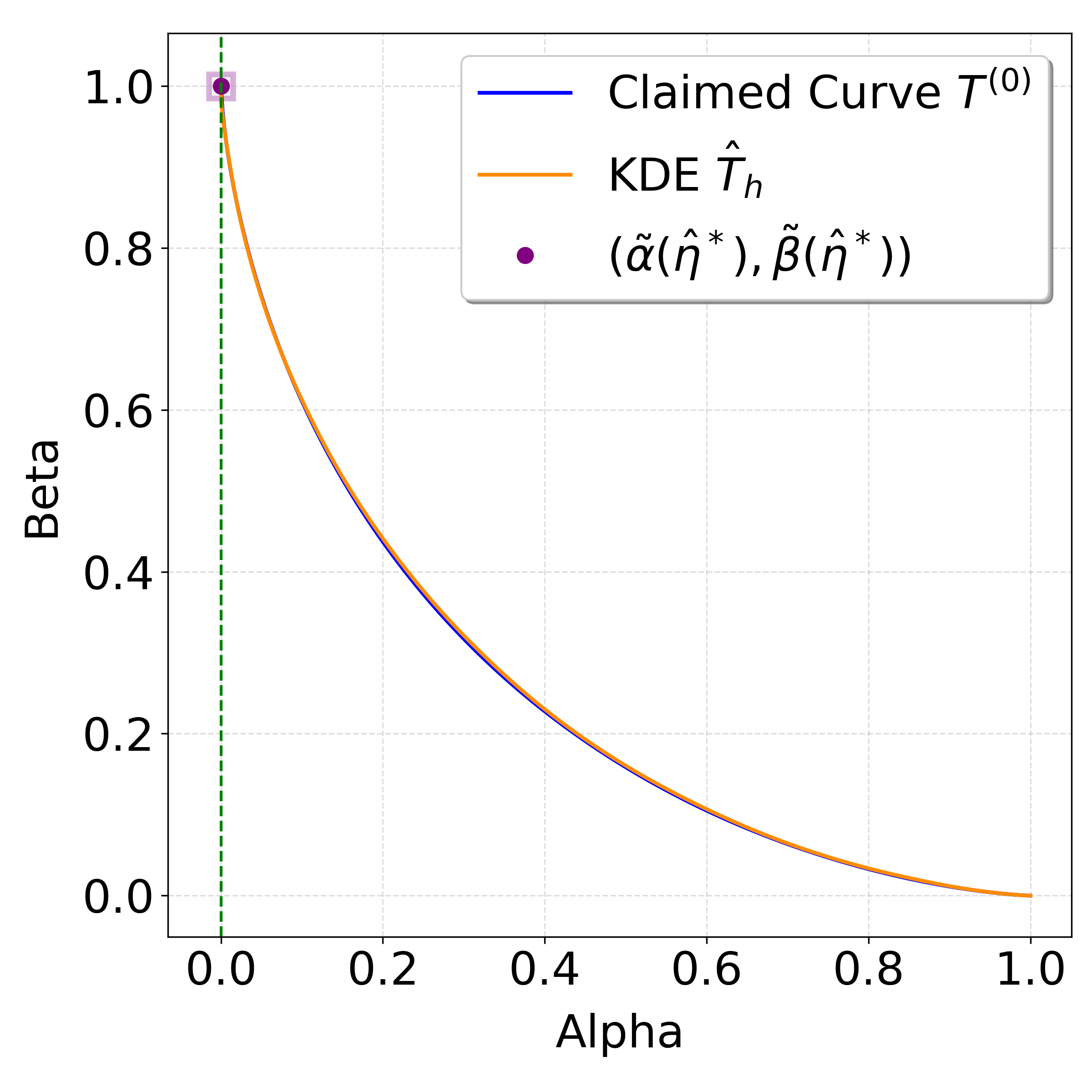}}
    \hfill
    \subfloat[\centering  \B{$n_2=10^5$}, \textbf{ Ground truth:}No Violation; \newline \textbf{Decision:} \textcolor{green}{"No Violation"}{\textcolor{green}{\scalebox{1.5}{\ding{51}}}}]{\includegraphics[width=0.3\textwidth]{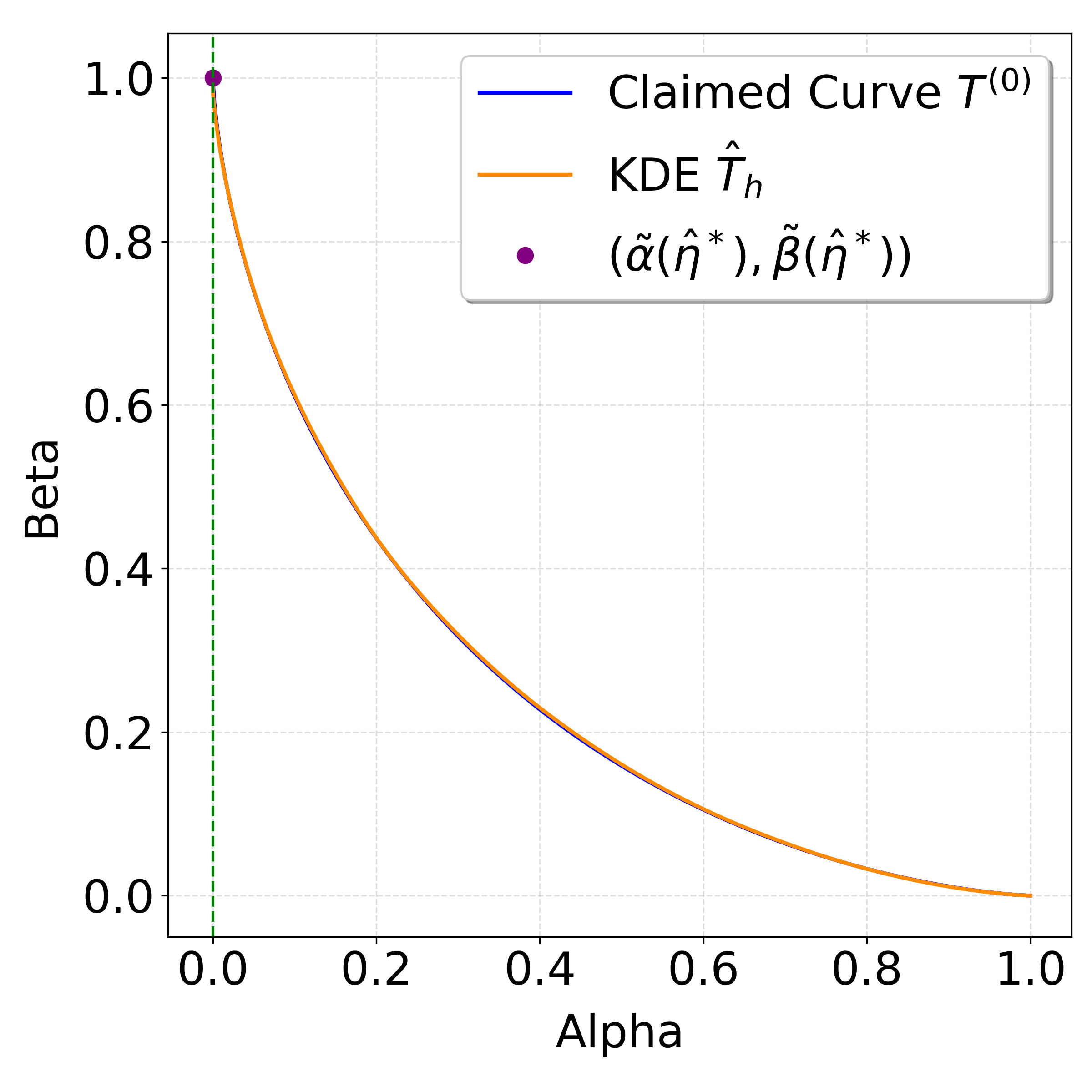}}
   \vspace{-1em}
    \subfloat[\centering  \B{$n_2=10^3$}, \textbf{Ground truth:} No Violation; \newline \textbf{Decision:} \textcolor{green}{"No Violation"}{\textcolor{green}{\scalebox{1.5}{\ding{51}}}}]{\includegraphics[width=0.3\textwidth]{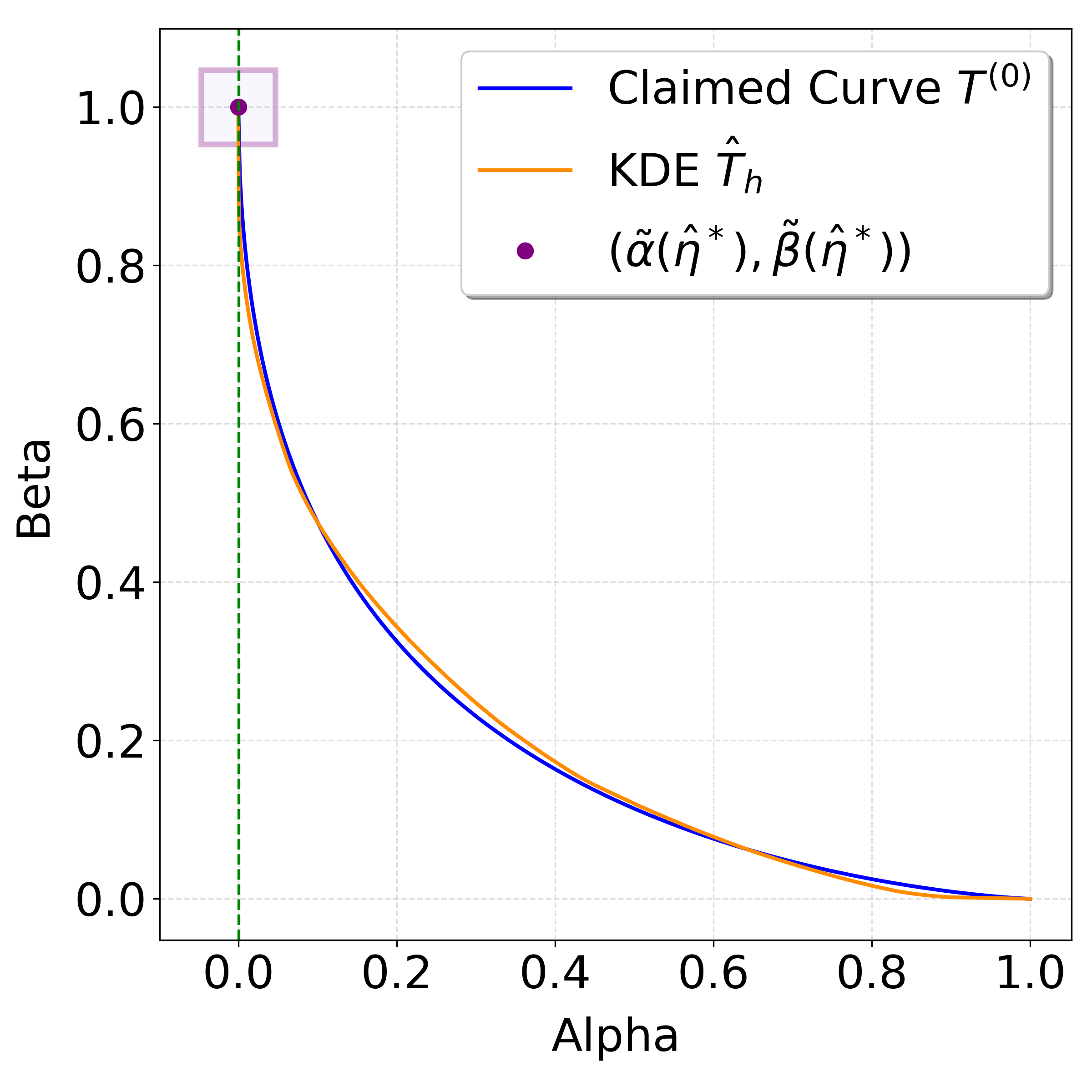}}
    \hfill
    \subfloat[\centering  \B{$n_2=10^4$}, \textbf{Ground truth:} No Violation; \newline \textbf{Decision:} \textcolor{green}{"No Violation"}{\textcolor{green}{\scalebox{1.5}{\ding{51}}}}]{\includegraphics[width=0.3\textwidth]{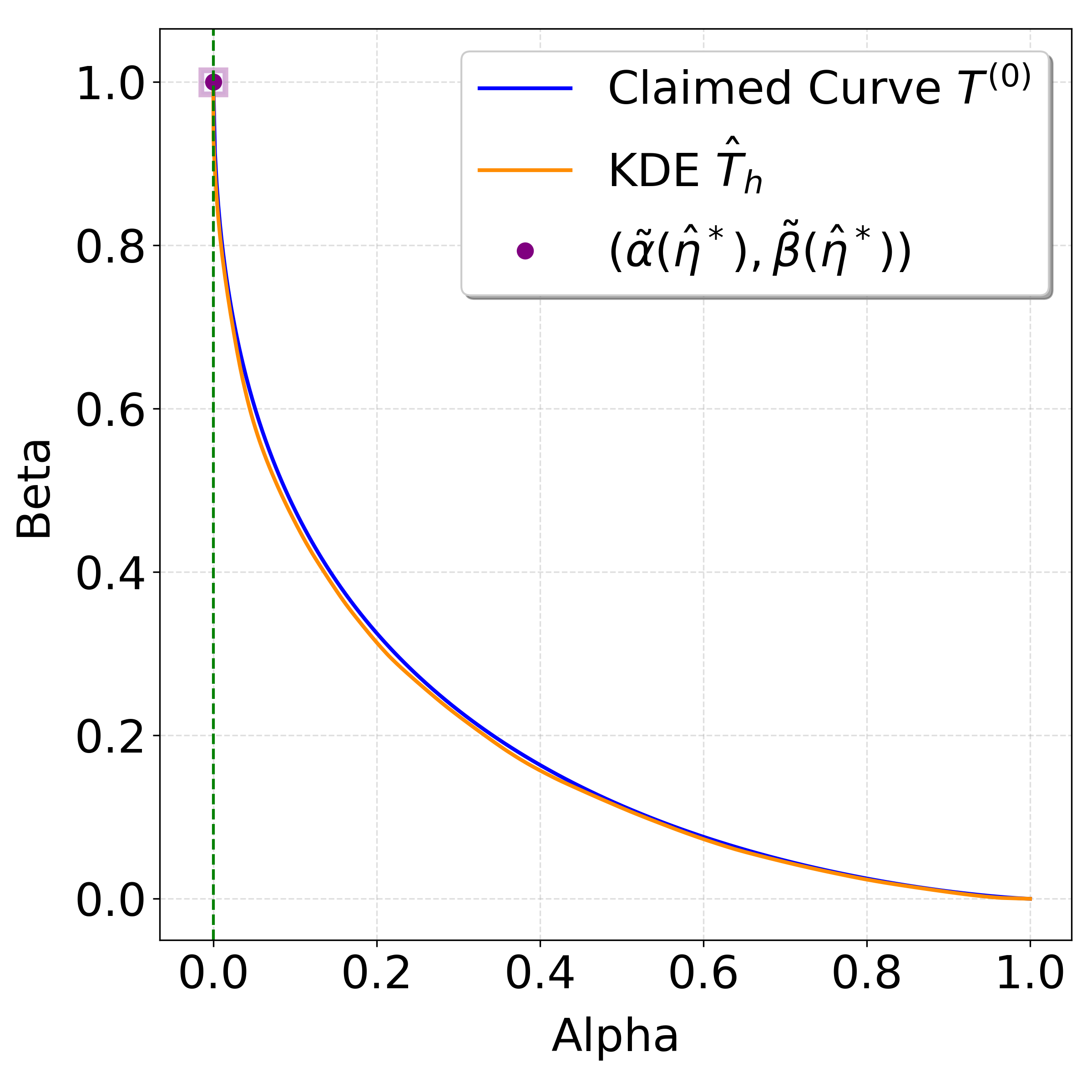}}
    \hfill
    \subfloat[\centering  \B{$n_2=10^5$}, \textbf{Ground truth:} No Violation; \newline \textbf{Decision:} \textcolor{green}{"No Violation"}{\textcolor{green}{\scalebox{1.5}{\ding{51}}}}]{\includegraphics[width=0.3\textwidth]{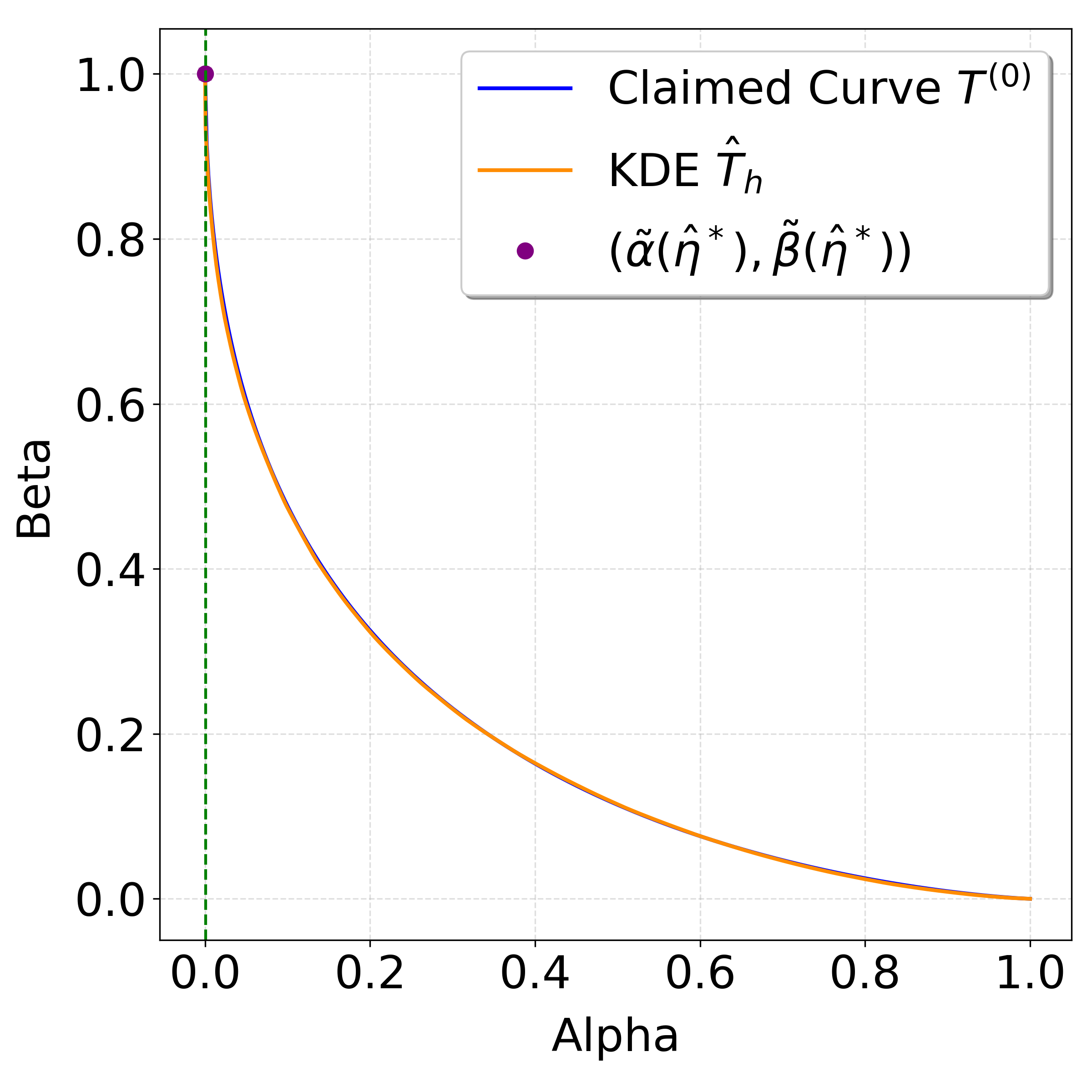}} \caption{\textbf{Auditing a correct Mechanism:} Claimed curve $\textcolor{blue}{T^{(0)}} = T_{Gauss}$ (a,b,c) and $\textcolor{blue}{T^{(0)}} = T_{SGD}$ (d,e,f). We depict the critical vertical line (obtained with step 3 in Algorithm \ref{alg:auditor}) with intercept $(\hat\alpha(\hat\eta^*), \hat \beta(\hat \eta^*))$, the $k$-NN point estimator {\textcolor{purple}{\ding{108}}} $(\tilde\alpha(\hat\eta^*), \tilde \beta(\hat\eta^*))$ and the confidence region $\textcolor{purple}{\square}$. The sample size for the KDE is $n_1=10^4$ and the confidence parameter is $\gamma=0.05$.}
    \label{fig:not_faulty_sgd_gauss}
\end{figure*}
\begin{figure*}[t]
    \centering
    \subfloat[\centering \B{$n_2=10^3$}, \textbf{Ground truth:} Violation; \newline \textbf{Decision:} \textcolor{red}{"No Violation"}{\textcolor{red}{\scalebox{1.5}{\ding{55}}}}]{\includegraphics[width=0.3\textwidth]{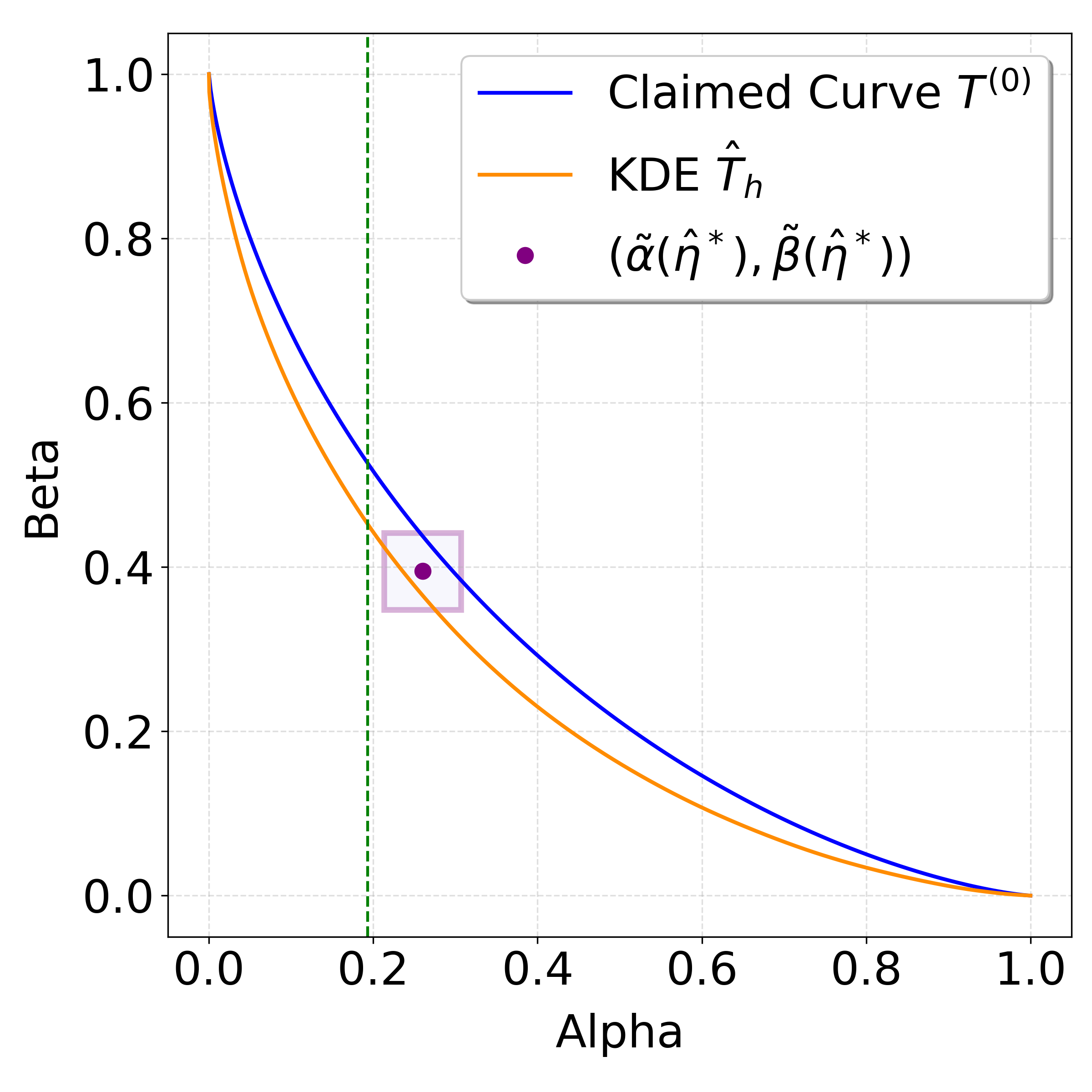}}
    \hfill
    \subfloat[\centering \B{$n_2=10^4$}, \textbf{Ground truth:} Violation; \newline \textbf{Decision:} \textcolor{green}{"Violation"}{\textcolor{green}{\scalebox{1.5}{\ding{51}}}}]{\includegraphics[width=0.3\textwidth]{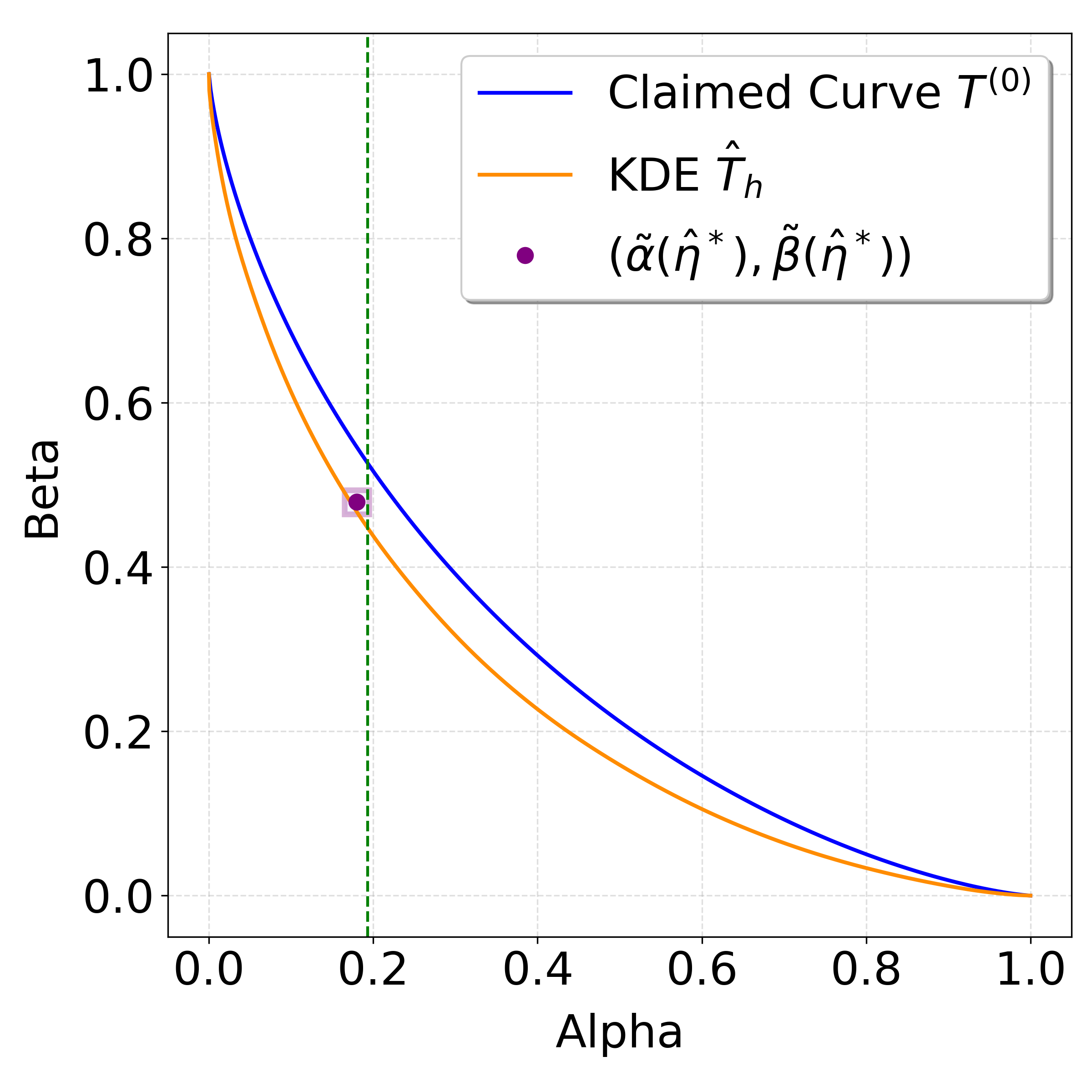}}
    \hfill
    \subfloat[\centering  \B{$n_2=10^5$}, \textbf{Ground truth:} Violation; \newline \textbf{Decision:} \textcolor{green}{"Violation"}{\textcolor{green}{\scalebox{1.5}{\ding{51}}}}]{\includegraphics[width=0.3\textwidth]{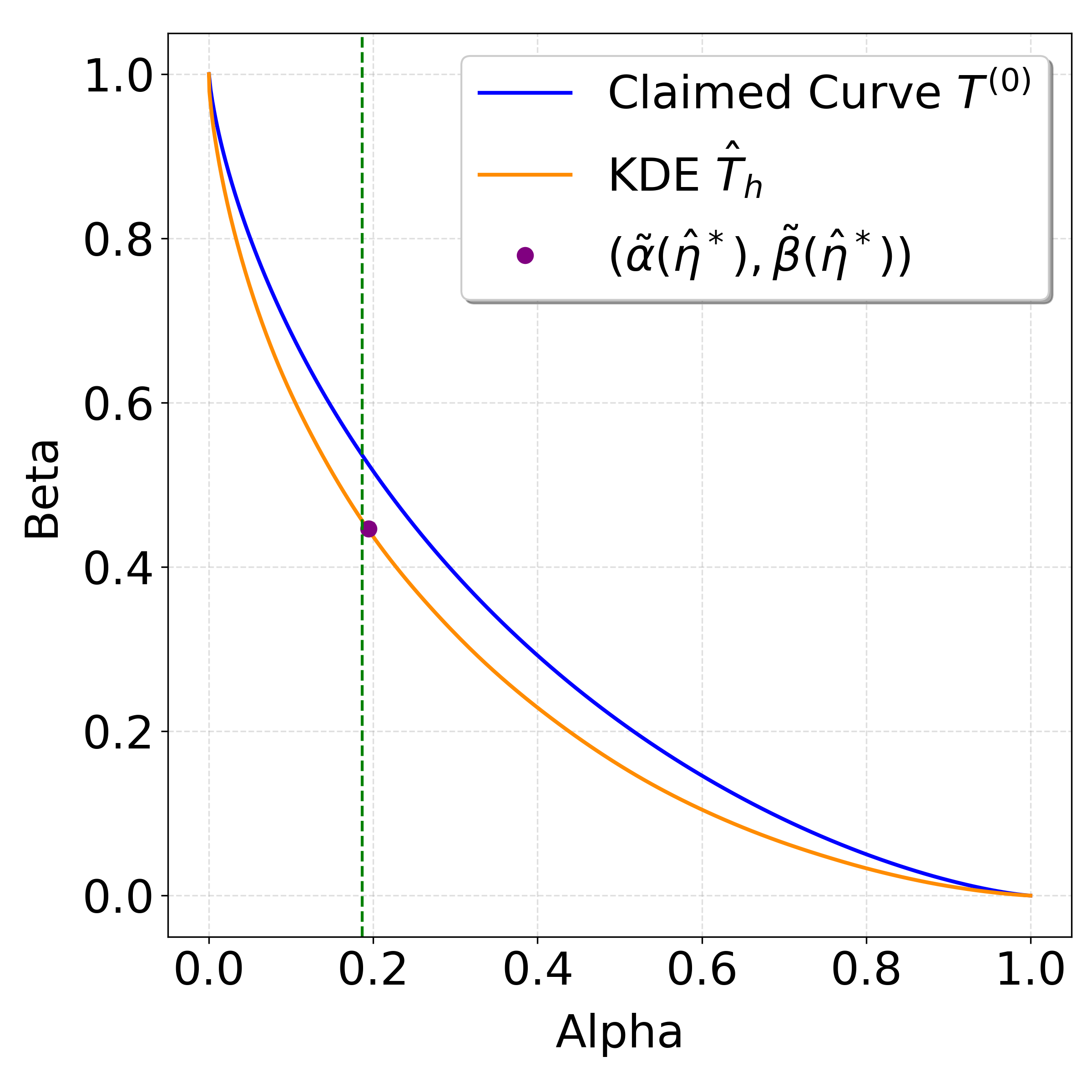}}
    \vspace{-1em}
    \subfloat[\centering \B{$n_2=10^3$}, \textbf{Ground truth:} Violation; \newline \textbf{Decision:} \textcolor{red}{"No Violation"}{\textcolor{red}{\scalebox{1.5}{\ding{55}}}}]
    {\includegraphics[width=0.3\textwidth]{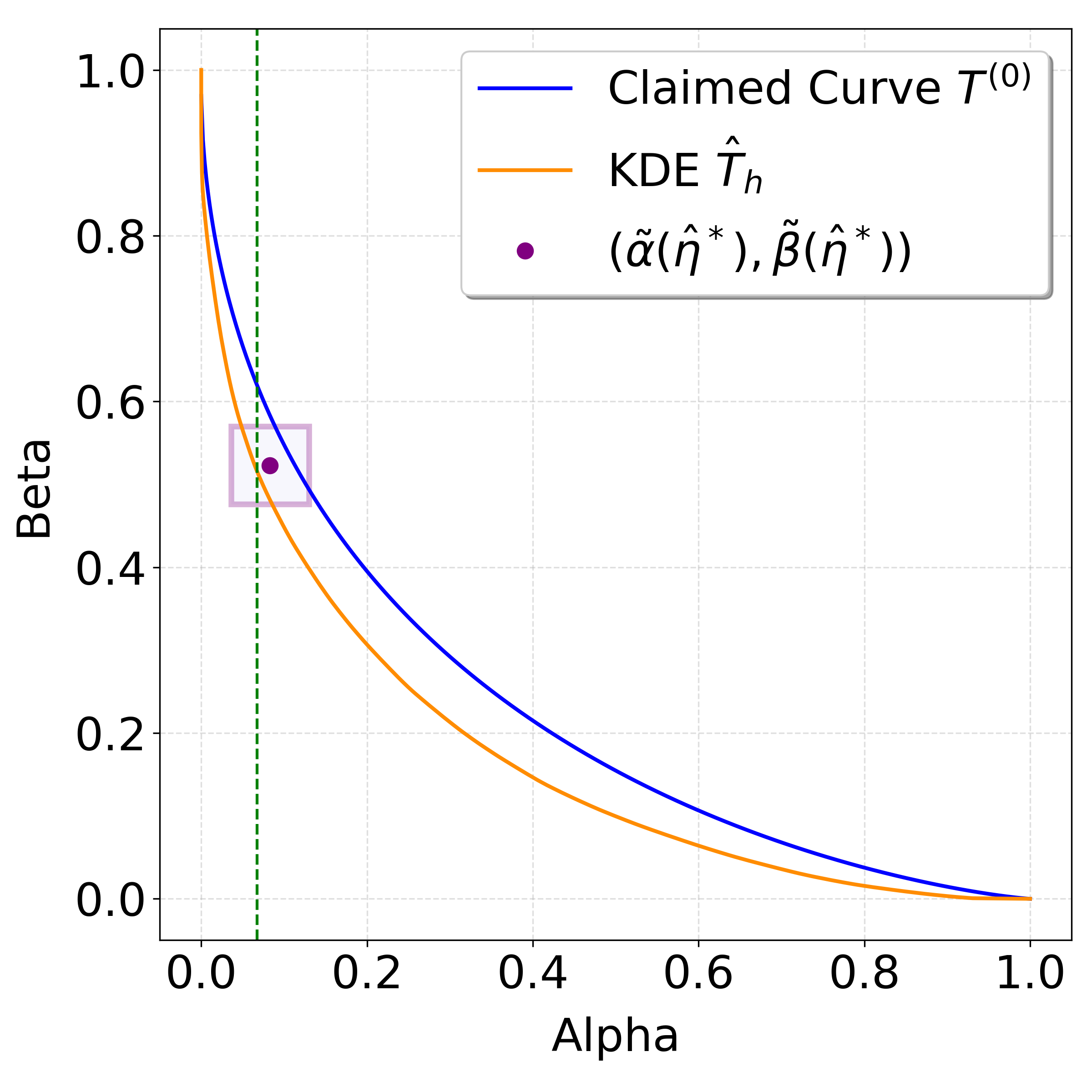}}
    \hfill
    \subfloat[\centering \B{$n_2=10^4$}, \textbf{Ground truth:} Violation; \newline \textbf{Decision:} \textcolor{green}{"Violation"}{\textcolor{green}{\scalebox{1.5}{\ding{51}}}}]{\includegraphics[width=0.3\textwidth]{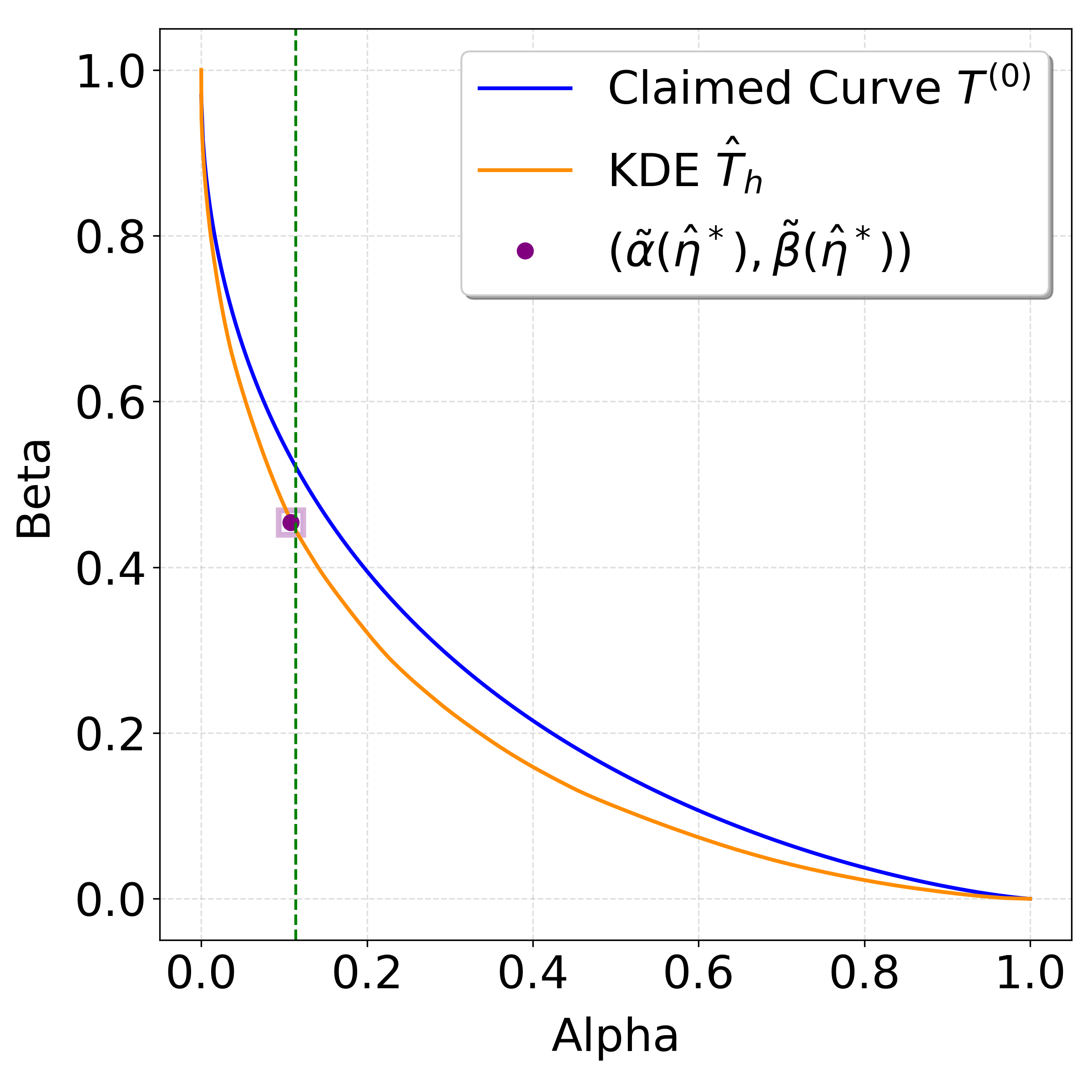}}
    \hfill
    \subfloat[\centering \B{$n_2=10^5$}, \textbf{Ground truth:} Violation; \newline \textbf{Decision:} \textcolor{green}{"Violation"}{\textcolor{green}{\scalebox{1.5}{\ding{51}}}}]{\includegraphics[width=0.3\textwidth]{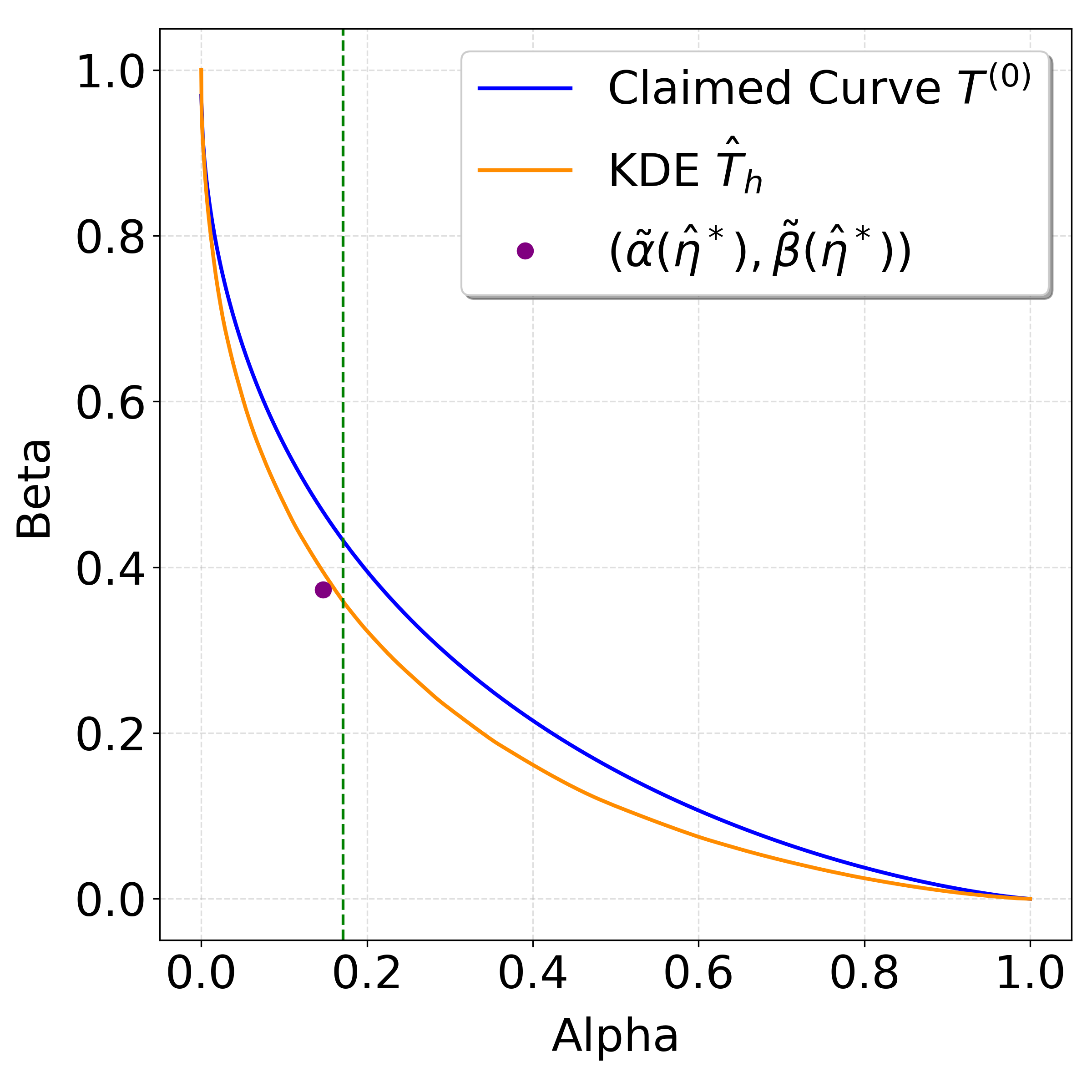}}
     \caption{\textbf{Auditing a faulty Mechanism:} Claimed Curve $\textcolor{blue}{T^{(0)}} = T_{Gauss}$ (a,b,c) with $\mu=0.5$ and $\textcolor{blue}{T^{(0)}} = T_{SGD}$ (d,e,f) with $\tau =5$. Both mechanisms assume stronger privacy ($\mu=0.5<1$ and $\tau =5<10$). We depict the critical vertical line (obtained with step 3 in Algorithm \ref{alg:auditor}) with intercept $(\hat\alpha(\hat\eta^*), \hat \beta(\hat \eta^*))$, the $k$-NN point estimator {\textcolor{purple}{\ding{108}}} $(\tilde\alpha(\hat\eta^*), \tilde \beta(\hat\eta^*))$ and the confidence region $\textcolor{purple}{\square}$. The sample size for KDE is $n_1=10^4$ and the confidence parameter is $\gamma=0.05$.}
    \label{fig:faulty_sgd_gauss}
\end{figure*}

\noindent\textbf{Implementation Details} The implementation is done using python and R. \footnote{\scriptsize{\url{https://github.com/stoneboat/fdp-estimation}}}. 
For the simulations, we have used a local device and a server. All runtimes were collected on a local device with an Intel Core i5-1135G7 processor (2.40 GHz), 16 GB of memory, and running Ubuntu 22.04.5, averaged over $10$ simulations. Thus, we demonstrate reasonable runtimes even on a standard personal computer (see Appendix \ref{app:add_simulations}). 
Additionally, we used a server with four AMD EPYC 7763 64-Core (3.5 GHz) processors and 2 TB of memory running Ubuntu 22.04.4 for repetitive simulations. For python, we have used Python 3.10.12 and the libraries "numpy" \cite{2020NumPy-Array}, "scikit-learn" \cite{pedregosa2011scikit} and "scipy" \cite{2020SciPy-NMeth}. For R, we used R version 4.3.1 and the libraries "fdrtool" \cite{fdrtool} and "Kernsmooth" \cite{Kernsmooth}.

\subsection{Interpretation of the results}
For Goal 1 (estimation), we see in Figure \ref{fig:estimation_mse} the fast decay of the estimation error of $\hat T_h$ for the optimal trade-off curve. The estimation error decays quickly in $n_1$, regardless of whether there are plateau values in the sense of Assumption \ref{ass1} (e.g. Laplace Mechanism) or not (e.g. Gaussian Mechanism).
These quantitative results are supplemented by the visualizations in  
Figures~\ref{fig:gaussian}--\ref{fig:sgd}, where we depict the largest distance of $\hat T_h$ and $T$ in $1000$ simulation runs (captured by the red band). Even for the modest sample size of $n_1 = 10^3$, this band is fairly tight and for $n_1 = 10^5$ the estimation error is almost too minute to plot. We find this convergence astonishingly fast. It may be partly explained by the estimator $\hat T_h$ being structurally similar to $T$ -  after all $\hat T_h$ is also designed to be a trade-off curve for an almost optimal LR test.
The approximation over the entire unit interval corresponds to the uniform convergence guarantee in Theorem~\ref{theo:1}. 

For Goal 2 (inference), we recall that a  $T^{(0)}$-DP violation is detected if the box $\square_\gamma$ (purple) lies completely below the postulated curve $T^{(0)}$ (blue). In Figure \ref{fig:not_faulty_sgd_gauss}, we consider the case of no violation where $T=T^{(0)}$, and we expect not to detect a violation. This is indeed what happens, since $\square_\gamma$ intersects with the curve $T^{(0)}$ in all considered cases. Interestingly, we observe that $\square_\gamma$ has a center close to $\alpha=0$ in the cases where no violation occurs (such a behavior might give additional visual evidence to users that no violation occurs).
In Figure \ref{fig:faulty_sgd_gauss}, we display the case of faulty claims, where the privacy breach is caused by a smaller variance for both mechanisms under investigation. In accordance with Theorem \ref{theo:auditor}, we expect a detection of a violation if $n_2$ is large enough. This is indeed what happens, at a sample size of \B{$n_2=10^4$} for the Gaussian mechanism and at \B{$n_2=10^4$} for DP-SGD. \B{Note} that larger samples $n_2$ are needed to expose claims $T^{(0)}$ that are closer to the truth $T$ (as for DP-SGD in our example). For larger $n_2$ the square $\square_\gamma$ shrinks (see eq. \eqref{e:defsq}), leading to a higher resolution of the auditor. 

\subsection{\B{Real-World Example - CIFAR-10}}
\B{We consider the CIFAR-10 dataset and train a small private convolutional neural network using Opacus\footnote{\B{\scriptsize{\url{https://github.com/pytorch/opacus}}}} \cite{Opacus}, a standard library for training PyTorch models with DP.\\
\textbf{Parameter settings:}
To simulate a pair of neighboring databases, we take the same subset of size 1,000 of the CIFAR-10 training data and replace the index 0 by differing images. In $D$, we use an all-black (all-zero) synthetic image, while in group $D'$, we use an all-white (all-255) image. Both are labeled arbitrarily as "airplane". We train a 4-layer convolutional neural network on 1,000 images, using a batch size of $m=512$ (so in total 2 batches) for 1,10,15,20 and 25 epochs. 
We set the learning rate to $\rho=0.1$, the clipping parameter to $1.0$ and use $\sigma = 1.0$ for the noise multiplier. We train the model 1,000 times on $D$ and $D'$ respectively, which took approximately 7 hours on a machine with 64 CPU cores while training in parallel.\\
\textbf{Score:} 
For classical neural network structures, taking a reasonable score function will be essential for proper auditing. In image classification, assigning the correct label to the distinct image can yield reasonable results. In our black-box setting, we tried different scores, namely the logits, the cross-entropy (CE), with which the model is trained, and the Kullback Leibler divergence (KL). It is also crucial for our methodology that the score function is indeed one dimensional, as the KDEs performance heavily degrades for higher dimensions. 
In Figure \ref{fig:estimation_cnn} and Figure \ref{fig:auditing_cnn}, we depict the performance of estimation and auditing after 25 epochs. For auditing we have used the KL loss. In Figure \ref{fig:lower_bound_cnn}, we illustrate how tight the lower bounds are over various epochs, which yield different theoretical $\varepsilon$.\\}
\begin{figure}[t]
\centering\includegraphics[width=0.7\linewidth]{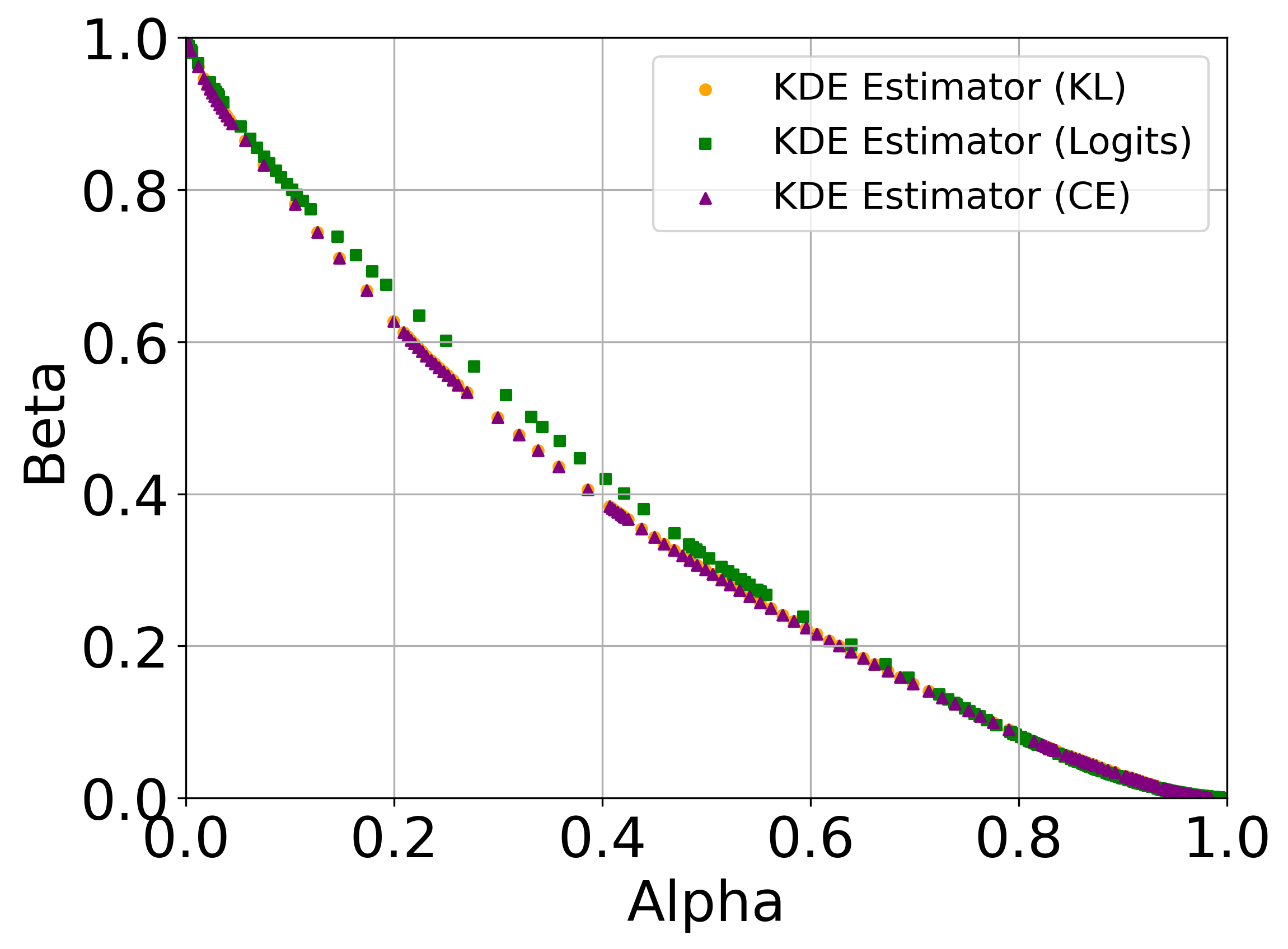}
    \caption{\centering \B{Algorithm \ref{alg:pointwise_KDE_estimator} based on 1,000 models for different loss functions. KL and CE are overlapping. }}\label{fig:estimation_cnn}
\end{figure}
\begin{figure}[t]
\centering\includegraphics[width=0.7\linewidth]{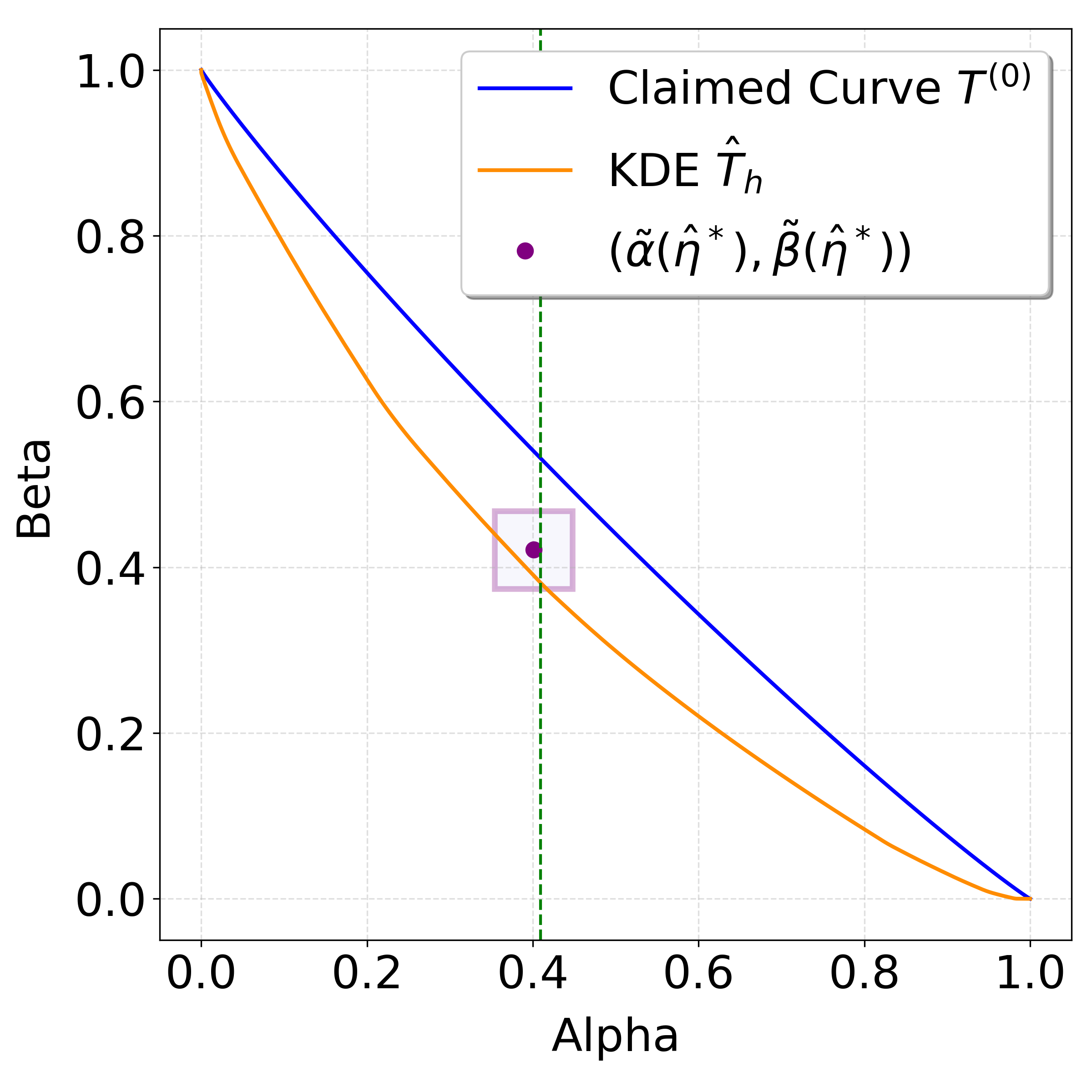}
    \caption{\centering \B{Algorithm \ref{alg:auditor} based on 1,000 models using the KL. $T^{(0)}$ is a gaussian trade-off curve with $\mu=0.15$.  \textbf{Decision:} "Violation".}}\label{fig:auditing_cnn}
\end{figure}
\begin{figure}
\centering\includegraphics[width=0.7\linewidth]{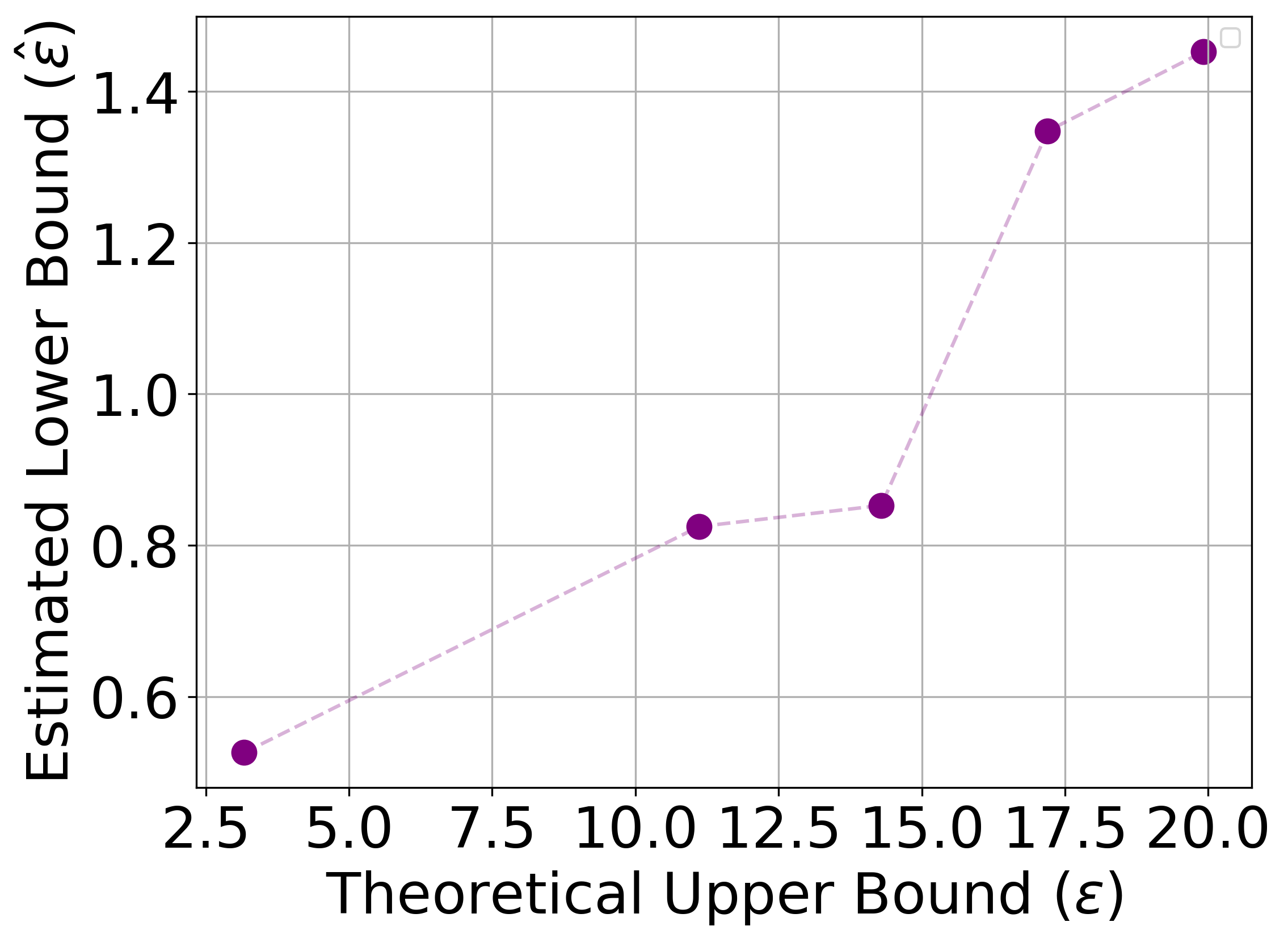}
    \caption{\centering \B{Comparison of theoretical $\varepsilon$ and empirical lower bound for $\delta=0.001$.}}\label{fig:lower_bound_cnn}
\end{figure}
\noindent\B{\textbf{Interpretation of results:} In Figure \ref{fig:estimation_cnn}, we compare several scoring functions for estimation. As expected, 
CE performs best, since the model is trained to minimize it. Nonetheless, related losses (e.g., KL) perform similarly, underscoring the black-box nature of our method. In principle, one could try multiple scores and select the most promising. For auditing (see Figure \ref{fig:auditing_cnn}), we consider a misspecified training process, 
which can result from incorrect noise calibration, excessive training epochs, or misconfigured clipping. 
We also 
compare the empirical lower bound $\hat{\varepsilon}$ and theoretical $\varepsilon$ obtained using Opacus. We compute the empirical lower bound from the estimated trade-off curve. Specifically, we use binary search to find the Gaussian trade-off function that is closest to our estimated curve in terms of $\ell_1$ distance. We then leverage the known relationship between approximate differential privacy and the Gaussian trade-off function to convert this into a lower bound on $\varepsilon$. 
Similar to \cite{Nasr2023}, we see that the estimates are loose (see Figure \ref{fig:lower_bound_cnn}). Consequently, using the proposed method for auditing is only possible for relatively strong privacy violations. More accurate models, on the other hand, allow for tighter estimates an better auditing, as already highlighted in \cite{Nasr2023}.}

\section{Related Work} \label{sec:relatedwork}




\B{In this section, we provide a more detailed overview of related works on auditing privacy guarantees and f-DP.} One avenue to assessing $f$-DP is to resort to a method that provides estimates for the $(\varepsilon,\delta)$-parameter of $M$ and to subsequently exploit the link between standard and $f$-differential privacy to obtain an estimate of $f$. To be more concrete, an algorithm that is $(\varepsilon,\delta)$-DP is also $f_{\varepsilon, \delta}$-DP (see \cite{Dong2022}) with trade-off function 
\begin{align} \label{f_epsilon_delta}
    f_{\varepsilon, \delta}(\alpha) := \max \left\{ 0, 1 - \delta - e^{\varepsilon} \, \alpha, e^{- \varepsilon} (1 -\delta - \alpha) \right\}.
\end{align} 
Thus, an estimator for $(\varepsilon, \delta)$ could, in principle, also provide an estimate for the trade-off curve \B{$f$} of $M$. \B{This approach is pursued in \cite{Koskela2024} with the help of a black-box estimator of $\delta$ for fixed $\varepsilon > 0$. The maximum over the $f_{\varepsilon, \delta}(\cdot)$-estimates then provides an approximation of the trade-off curve of $M$. In contrast, our auditing procedure is based on a single, direct estimate of the trade-off curve of $M$, which makes our approach more expedient.} In fact, the runtimes reported for estimation of $f$ in \B{Appendix \ref{app:add_simulations}} confirm the efficacy of our approach. Moreover, from an auditing perspective, results with regard to convergence and reliability in \cite{Koskela2024} are only obtained for the \B{$\delta(\epsilon)$}-estimate in the standard DP framework. Our work, on the other hand, provides formal statistical guarantees for the inference of the trade-off $f$.

Interestingly, the relation between standard and $f$-DP can also be exploited in the opposite direction, that is, to use estimates of the trade-off curve $f$ to obtain estimates for $(\varepsilon, \delta)$. \B{This approach was first taken in \cite{Nasr2023} and subsequently adopted in other works \cite{Annamalai2024, Annamalai2024_B,Mahloujifar2024} for the purpose of auditing the privacy claims of DP-SGD, a cornerstone of differentially private machine learning. Essentially, these works aim at auditing DP by converting confidence intervals for the type-I and type-II error of a distinguishing attack into estimates of $(\varepsilon,\delta)$. And while previous works \cite{Jagielski_2020, zanella2023bayesian, Steinke_2023} investigated the type-I and type-II errors of distinguishing attacks in the standard DP model to obtain such estimates, \cite{Nasr2023} was the first to exploit the tight characterization of these errors in the Gaussian DP model to obtain even tighter lower bounds for $\varepsilon$.}
\B{Auditing in \cite{Nasr2023} focuses on the white-box scenario, where the auditor does not only have access to the training datasets $D$ and $D'$, but can also examine all intermediate model updates that go into computing the corresponding final models $\theta$ and $\theta'$. This setting is further enhanced by allowing the auditor to actively intervene in the training process via self-crafted gradients or datasets that can be inserted into the computations that yield the final model outputs. \say{Opening the black-box} in this manner results in tighter empirical estimates with fewer observations.} \B{A black-box scenario discussed in \cite{Nasr2023} restricts the available information about the instance of DP-SGD under investigation. In this setting, the auditor's access is limited to the training datasets, the corresponding final models and the specific loss function $\ell$ that the training algorithm uses. Important parameters and features such as model weights, noise scales, sampling rates, learning rates and etc. remain hidden from the auditor. This scenario can thus be considered \say{parameter black-box}. Though far more restrictive than the above white-box setting, this scenario is also specific to the DP-SGD mechanism, which is characterized by the repeated use of Gaussian noise. It therefore allows for auditing procedures that work with the class of (subsampled) Gaussian trade-off curves as in \cite{Nasr2023} or for distinguishing attacks such as LiRA \cite{Carlini_2022}, which models the distribution of losses as Gaussians.} \B{Subsequent works have adopted the approach in \cite{Nasr2023} to study differentially private synthetic data generation  \cite{Annamalai2024_C}, the impact of shuffling on the privacy of DP-SGD \cite{Annamalai2024_B}, how to obtain tighter audits in the black-box setting of \cite{Nasr2023} with specially crafted worst-case parameters \cite{Annamalai2024}, or to improve the analysis of distinguishing attacks \cite{Mahloujifar2024}. }
\B{In our work, we further tighten the parameter black-box scenario in \cite{Nasr2023} by avoiding assumptions that a specific mechanism or family of trade-off curves are under investigation. 
Our approach can thus be deemed \say{mechanism black-box} and is more aligned with a number of existing works that study other common variants of DP through this lens \cite{DP-Sniper, Dette2022, gorla2023impossibility, Lu2024, Kutta2024,Kong2024}. With no knowledge of the underlying privacy mechanism, we cannot assume normally distributed output data in this setting. Hence, we developed tools like the perturbed LR test and BayBox estimator, which do not require modeling distributions as Gaussians and instead rely on mild regularity assumptions (like smoothness) that are common in the mechanism black-box setting \cite{Dette2022,gorla2023impossibility,Lu2024}.}
\B{Even though it was designed for the more restrictive mechanism black-box scenario, our method's performance is similar to that of the black-box approach from \cite{Nasr2023} in our experiments on more realistic datasets. Moreover, our approach compares favorably to other works that operate in the same mechanism black-box setting. Here, the required number of output samples to achieve reasonable levels of accuracy usually surpasses the maximum sample size $n = 10^5$ in our experiments on auditing \cite{Dette2022,gorla2023impossibility, Kong2024} and can even reach into the millions \cite{DP-Sniper, Kutta2024, Lu2024}. Thus, our estimation and auditing methods are effective and flexible tools that add to the existing literature on DP validation.}

\section{Conclusion \B{and Future Work}}\label{sec:summary_discussion}

In our work, we construct the first general-purpose $f$-DP estimator and auditor \B{for the mechanism} black-box setting, by \B{combining} techniques from statistics and classification theory. Our constructions enjoy not only formal guarantees---convergence of our estimator and a tuneable confidence region for our auditor---but also \B{perform well in experiments on standard algorithms from the DP literature}. \B{Our approach has limitations as well. In practice, training even a single machine learning model can be costly, making it impractical to sample thousands. This is a problem in black-box settings, which require larger samples sizes to achieve desired levels of accuracy.}

\B{However, our approach benefits from a plug-and-play design, allowing these limitations to be mitigated by substituting alternative estimators -- such as neural network-based estimators -- that might be more effective in high-dimensional settings.
Replacing the $k$-NN
classifier does not affect the theoretical guarantees provided in Theorem~\ref{theo:auditor}, although it may 
increase the chance of failing to reject a false claim if the alternative 
is not well chosen. For the PLRT component, adopting a density estimator that satisfies Assumption~\ref{ass2} preserves all guarantees established in Theorem~\ref{theo:1}; otherwise, the guarantees may no longer hold. In general, designing improved estimators and classification tools for the black-box scenario is a promising direction for future work.}

 \section{Acknowledgments}
 \noindent This work was funded by the Deutsche Forschungsgemeinschaft (DFG, German Research Foundation) under Germany’s
 Excellence Strategy - EXC 2092 CASA - 390781972. Tim Kutta was partially funded by the AUFF Nova Grant 47222. Yun Lu is supported by NSERC (RGPIN-03642-2023). Vassilis Zikas and Yu Wei are supported in part by Sunday Group, Incorporated. 
Holger Dette was partially supported by the DFG Research unit 5381 {\it Mathematical Statistics in the Information Age}, project number 460867398. 
\newpage
 \section*{Ethics considerations}

 We attest that the algorithms our privacy auditor/estimator identify as having privacy vulnerabilities are already recognized for (or specifically created with) these issues. Thus, this work does not expose new vulnerabilities. Additionally, we do not make use of private datasets in our experiments. Thus, our experiment results do not introduce privacy leaks. We attest that our auditor/estimator do not have any biases stemming from a conflict of interest. While our auditor can be used on other existing mechanisms, our implementation mitigates risks by ensuring our output is limited to alerts to potential privacy violation, and does not leak any additional information about the dataset(s) tested.

 \section*{Open science}

 To comply with the Open Science Policy, we will make all artifacts publicly available\footnote{\scriptsize{\url{https://doi.org/10.5281/zenodo.15599462}}}. In our experiments section, we  ensure the transparency and reproducibility of our methodology by describing the dataset used, the specification of our machines, and citing all algorithms tested.

{\footnotesize
\bibliographystyle{acm}

}

\appendix
\section{Appendix}
The appendix is dedicated to the technical details of our results. The proofs can be found in an extended arxiv version.
\begin{table}[ht]
\centering
\caption{Overview of Notation Used in the Paper}
\label{tab:notation}
\begin{tabular}{ll}
\toprule
\textbf{Notation} & \textbf{Description} \\
\midrule
\( D,D'\) & Pair of adjacent databases \\
\( M \) & ($f$-)DP Mechanism \\
\( \Pr{}[ ], \Ex{ } \) & Probability, Expectation\\
\( P,Q \) & Output distributions of $M(D), M(D')$ \\
\(\MixtureD{P}{\eta}\) & Mixture distribution with parameter $\eta$\\
\( p,q\) & Probability densities of $P,Q$  \\
\( \alpha, \beta \) & type-I \& type-II errors \\
& (typically of the Neyman-Pearson test) \\
\(\hat \alpha_h, \hat \beta_h\) & Estimated errors using KDE \\
\(\tilde \alpha, \tilde \beta\) & Estimated errors using $k$-NN \\
& (typically of the Neyman-Pearson test) \\
\( T\) & optimal trade-off curve for $p,q$\\
\( T^{(0)}\) & trade-off curve that is audited\\
\( T_h\) & trade-off curve of perturbed LR test \\
\( \hat T_h\) & estimated trade-off curve using KDE \\
\(  \eta\) & threshold in LR tests \\
& vulnerability \\
\( \hat \eta^*\) & estimated threshold of maximum \\
& vulnerability \\
\(  \lambda\) & randomization parameter in \\
 & Neyman-Pearson test \\
 \(  h\) & randomization parameter in \\
 & perturbed LR test \\
\( \phi, \kNNclassifier{n} \) & generic classifier, $k$-NN classifier \\
\( \phi^* \) & Bayes optimal classifier \\
\( \gamma, w(\gamma) \) & confidence level \& 
margin of error\\
$\square_\gamma$ & confidence region for \\
 & type-I-type-II errors\\
 $n, n_1, n_2$ & sample size parameters \\
\bottomrule
\end{tabular}
\end{table}
\section{Additional Experiments and Details} \label{AppB}

In this section, we provide some additional details on our experiments and implementations.

\subsection{Implementation details}

Algorithm \ref{alg:pointwise_KDE_estimator} gives a pseudo-code of our trade-off curve estimator $\hat T_h$, presented in Section \ref{sec:4}. 

\begin{algorithm}[h]
\footnotesize
\algorithmicrequire \; \parbox[t]{\dimexpr0.9\linewidth-\algorithmicindent}{Black-box access to $\Mech$; Threshold $\eta > 0$; Sample size $n$, databases $\DB, \DB'$.}\\[0.1cm]
\algorithmicensure \, An estimate $(\hat{\alpha}(\eta), \hat{\beta}(\eta))$ of $(\alpha(\eta), \beta(\eta))$ for tuple $(P, Q)$, where $\Mech(\DB)$ and $\Mech(\DB')$ are distributed according to $P, Q$, respectively.
\begin{algorithmic}[1]
    \State Choose perturbation parameter $h$. 
    \State Set the density estimation algorithm $\cA$. By default, use the KDE algorithm.
    \Function{\textnormal{\B{PLRT Estimator}} $\ptlr{h}{\cA}(M, \DB, \DB', \eta,n)$}{}
    \State Compute the estimated densities $\hat{p}$ and $\hat{q}$ by running $\mathcal{A}$ on $n$ independent copys of $\Mech(\DB)$ and $\Mech(\DB')$, respectively.
    \State Compute $\hat{\alpha}(\eta) \leftarrow \int_{x \in [-h/2,h/2]} \frac{1}{h}\int_{\hat q /\hat p  > \eta +x} \hat p$ 
    \State Compute $\hat{\beta}(\eta) \leftarrow 1 -\int_{x \in [-h/2,h/2]} \frac{1}{h}  \int_{\hat q /\hat p  > \eta +x} \hat q$ 
    \State Return $(\hat{\alpha}(\eta), \hat{\beta}(\eta))$
    \EndFunction
\end{algorithmic}
\caption{\B{PLRT}: A Perturbed Likelihood Ratio Test Algorithm for $f$-DP Estimation}
\label{alg:pointwise_KDE_estimator}
\end{algorithm}

Next, we turn to the DP-SGD algorithm from our Experiments section. The pseudocode for that algorithm can be found in Algorithm \ref{alg:noisy_sgd} below. Note that we add Gaussian noise $Z_t \sim \mathcal{N}(0, \sigma^2)$ to the parameter $\theta_t$ at each iteration of DP-SGD. The operator $\Pi_{\Theta}$ projects the averaged and perturbed gradient onto the space $\Theta$ and is thus similar to clipping that gradient. We can derive the exact trade-off function of this algorithm for our choice of databases in \eqref{eq_databases} and our specifications from Section \ref{sec:algorithms}. More concretely, we first consider the distribution of DP-SGD on $D = (0, \dots, 0)$ and note that 
\begin{align*}
    \theta_{t+1} = \theta_t - \rho \, (\theta_t + Z_{t+1}) 
\end{align*}
for each $t \in \{0, \dots, \tau\}$. Some calculations then yield that $\Theta_{\tau} \sim \mathcal{N}(0, \bar{\sigma}^2)$ with 
\begin{align} \label{sigma_bar}
    \bar{\sigma}^2 = \rho^2 \, \sigma^2 \, \frac{1 - (1 - \rho)^{2 \tau}}{1 - (1 - \rho)^{2}}.
\end{align}
Similarly, we have for $D' = (1, 0, \dots, 0)$ that 
\begin{align*}
     \theta_{t+1} = (1 - \rho) \, \theta_t + \rho \,  Z_{t+1} 
\end{align*}
for each $t \in \{0, \dots, \tau\}$. Here, $Z_t$ is a Gaussian mixture with 
\begin{align*}
    Z_t \sim \frac{1}{2} \, \mathcal{N}\left(0,\sigma^2\right) + \frac{1}{2} \, \mathcal{N}\left(\frac{1}{m},\sigma^2\right).
\end{align*}
We can then see that $ \theta_{\tau} = \tilde{Z}_1 + \dots + \tilde{Z}_{\tau} $
where the $\tilde{Z}_t$ are independent Gaussian mixtures with
\begin{align*}
    \tilde{Z}_t & \sim \frac{1}{2} \, \mathcal{N}\left(0, \rho^2 \, (1- \rho)^{2 (\tau - t)} \, \sigma^2\right) \\ & + \frac{1}{2} \, \mathcal{N}\left(\frac{\rho (1 - \rho)^{\tau - t}}{m}, \rho^2 \, (1- \rho)^{2 (\tau - t)} \, \sigma^2\right).
\end{align*}
By defining 
\begin{align} \label{mu_I}
    \mu_I := \sum\limits_{t \in I} \frac{\rho (1 - \rho)^{\tau - t}}{m}
\end{align}
and choosing $\bar{\sigma}$ as in \eqref{sigma_bar}, we get that
\begin{align*}
    \theta_{\tau} \sim \sum\limits_{t \subset \{1, \dots, \tau\}} \frac{1}{2^{\tau}} \mathcal{N}(\mu_I, \bar{\sigma}^2).
\end{align*}
Having derived the distribution of $M(D)$ and $M(D')$, we take a look at the corresponding LR-test $g$ and note that it can be expressed as
\begin{align*}
    g(x) = \begin{cases}
       1  & x > c \\
       0  & x \leq c \\
    \end{cases}
\end{align*}
for some threshold $c$. A few calculations then yield the trade-off curve
\begin{align*}
    T_{SGD}(\alpha)=\sum_{I\subset \{1,\hdots, \tau\}} \frac{1}{2^{\tau}}\Phi\Big(\Phi^{-1} (1-\alpha)-\frac{\mu_I}{\bar\sigma}\Big)~.
\end{align*}

\begin{algorithm}[h]
\footnotesize
\algorithmicrequire \; \parbox[t]{\dimexpr0.9\linewidth-\algorithmicindent}{Dataset $x = (x_1, \ldots, x_r)$, loss function $\ell(\theta, x)$,\\ Parameters: initial state $\theta_0$, learning rate $\rho$, batch size $m$, time horizon $\tau$, noise scale $\sigma$, closed and convex space $\Theta$.}\\[0.1cm]
\algorithmicensure \, Final parameter $\theta_{\tau}$.
\begin{algorithmic}[1]
    \For{$t = 1, \ldots, \tau$}
        \State \textbf{Subsampling:} Take a uniformly random subsample $I_t \subseteq \{1, \ldots, r\}$ with batch size $m$.
        \For{$i \in I_t$}
            \State \textbf{Compute gradient:} $v_t^{(i)} \leftarrow \nabla_\theta \ell(\theta_t, x_i)$
        \EndFor
        \State \textbf{Average, perturb, and descend:}
        \[
        \theta_{t+1} \leftarrow \theta_t - \rho \; \Pi_{\Theta} \left( \frac{1}{m} \sum_{i \in I_t} v_t^{(i)} + Z_t \right)
        \]
    \EndFor
    \State \textbf{Output:} $\theta_{\tau}$
\end{algorithmic}
\caption{DP-SGD Algorithm}
\label{alg:noisy_sgd}
\end{algorithm}

\begin{figure*}[h!]
    \centering
    \subfloat[$n_1=10^3$]{\includegraphics[width=0.3\textwidth]{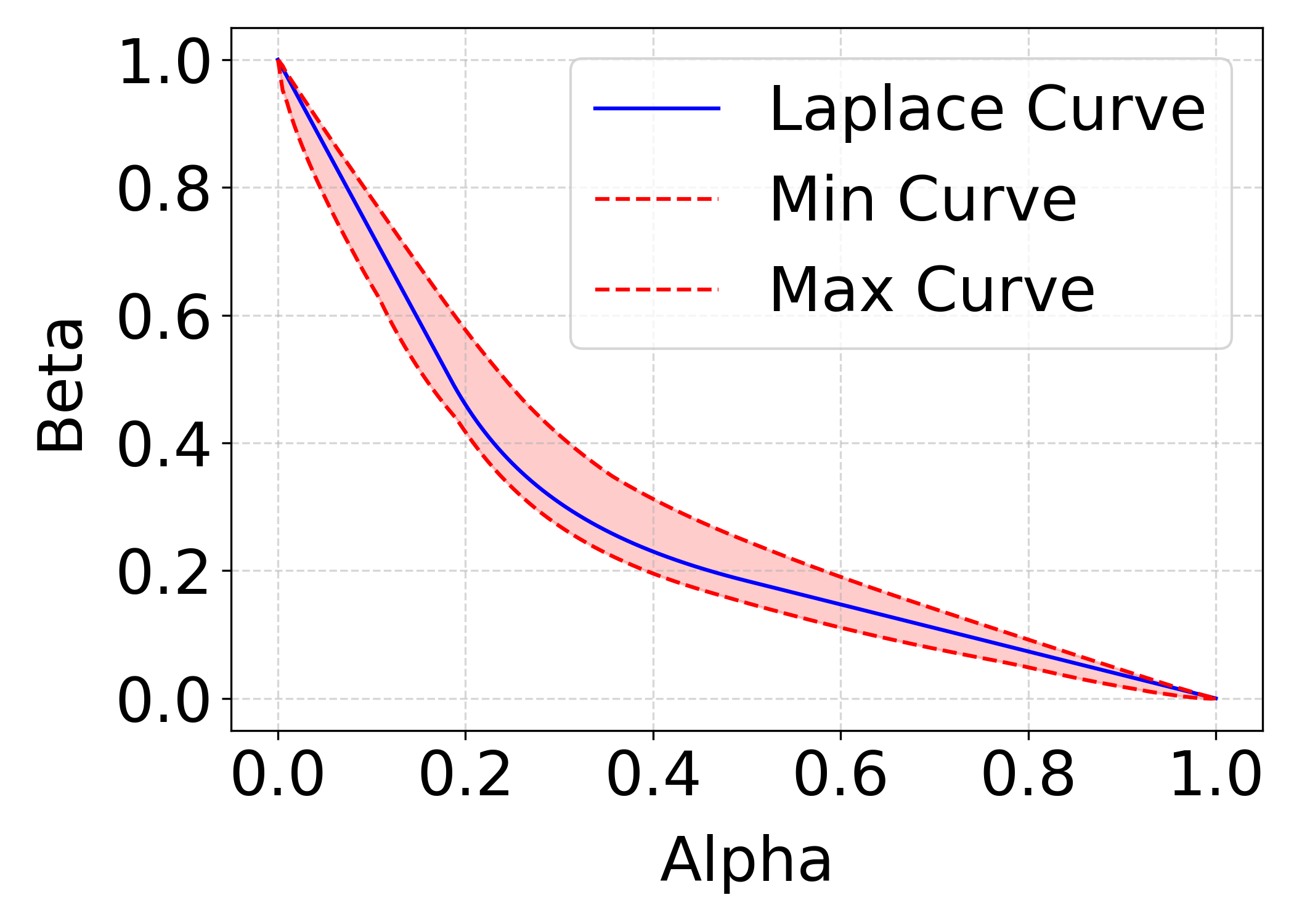}}
    \hfill
    \subfloat[$n_1=10^4$]{\includegraphics[width=0.3\textwidth]{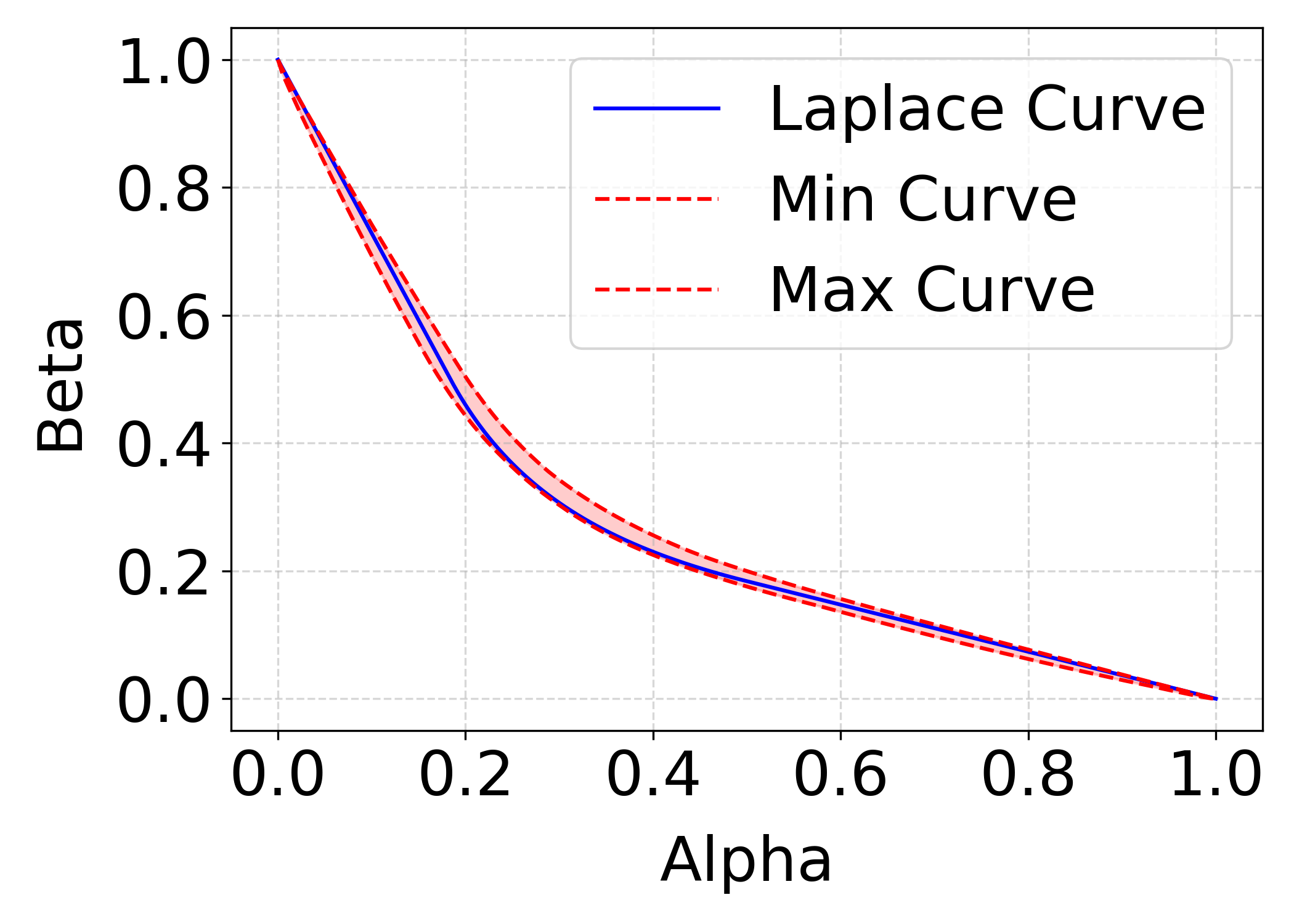}}
    \hfill
    \vspace{-0.2cm}
    \subfloat[$n_1=10^5$]{\includegraphics[width=0.3\textwidth]{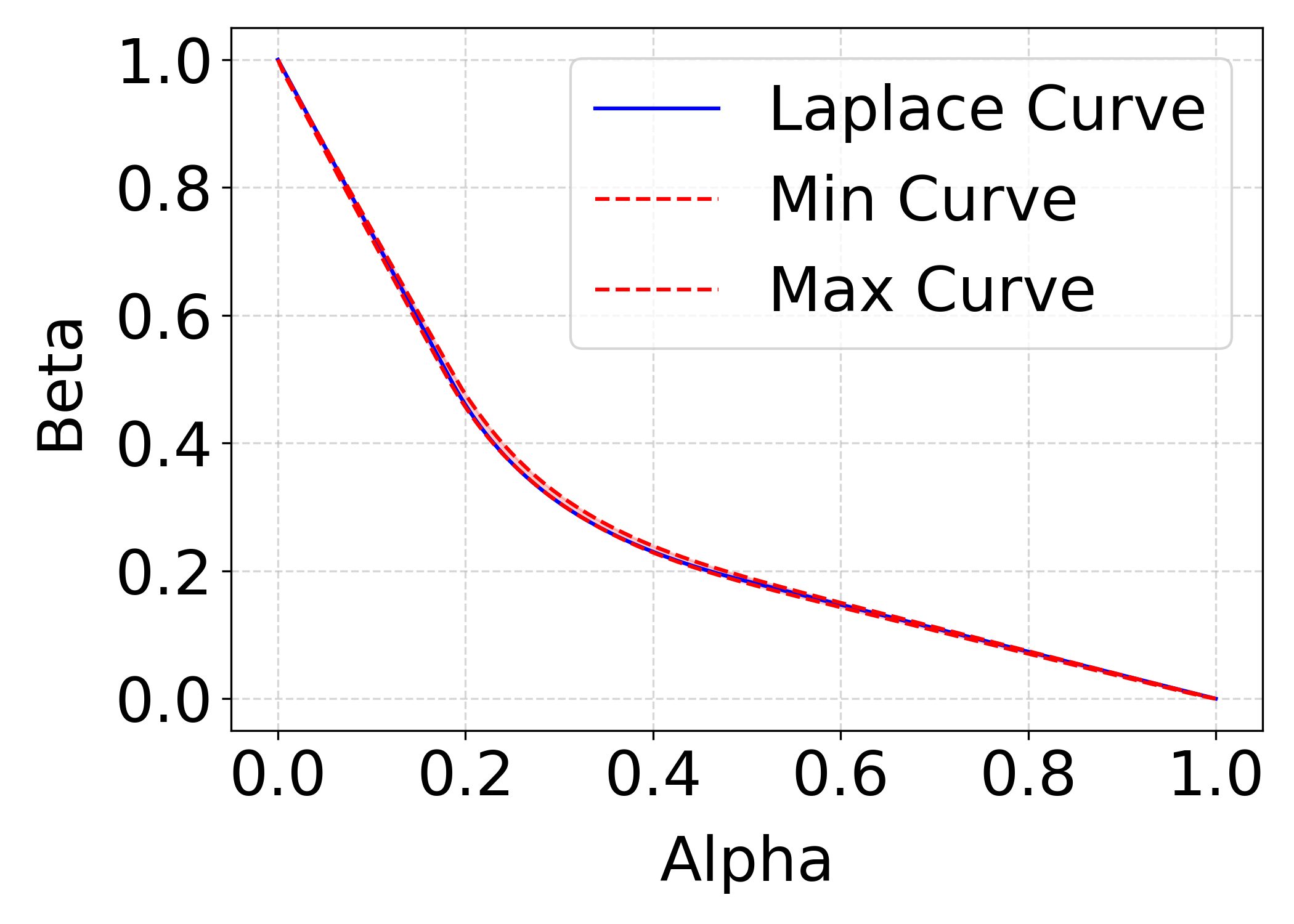}}
  \caption{Estimation of the Laplace Trade-off curve $T_{Lap}$ for varying sample sizes. Min- and Max Curve lower- and upper bound the worst point-wise deviation from the true curve $T_{Lap}$ over $1000$ simulations.}
    \label{fig:laplace}
\vspace{-0.1cm}
    \centering
    \subfloat[$n_1=10^3$]{\includegraphics[width=0.3\textwidth]{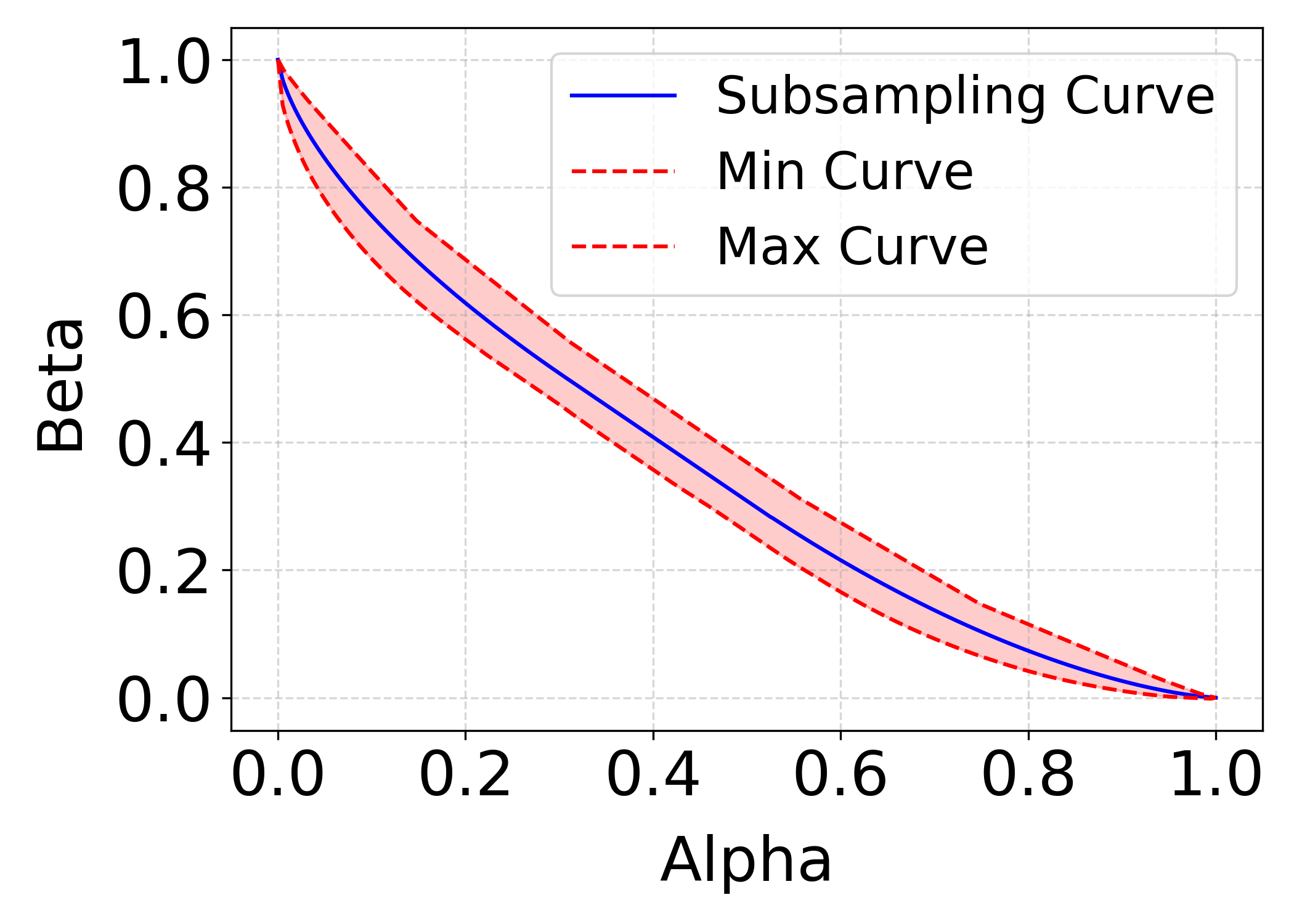}}
    \hfill
    \subfloat[$n_1=10^4$]{\includegraphics[width=0.3\textwidth]{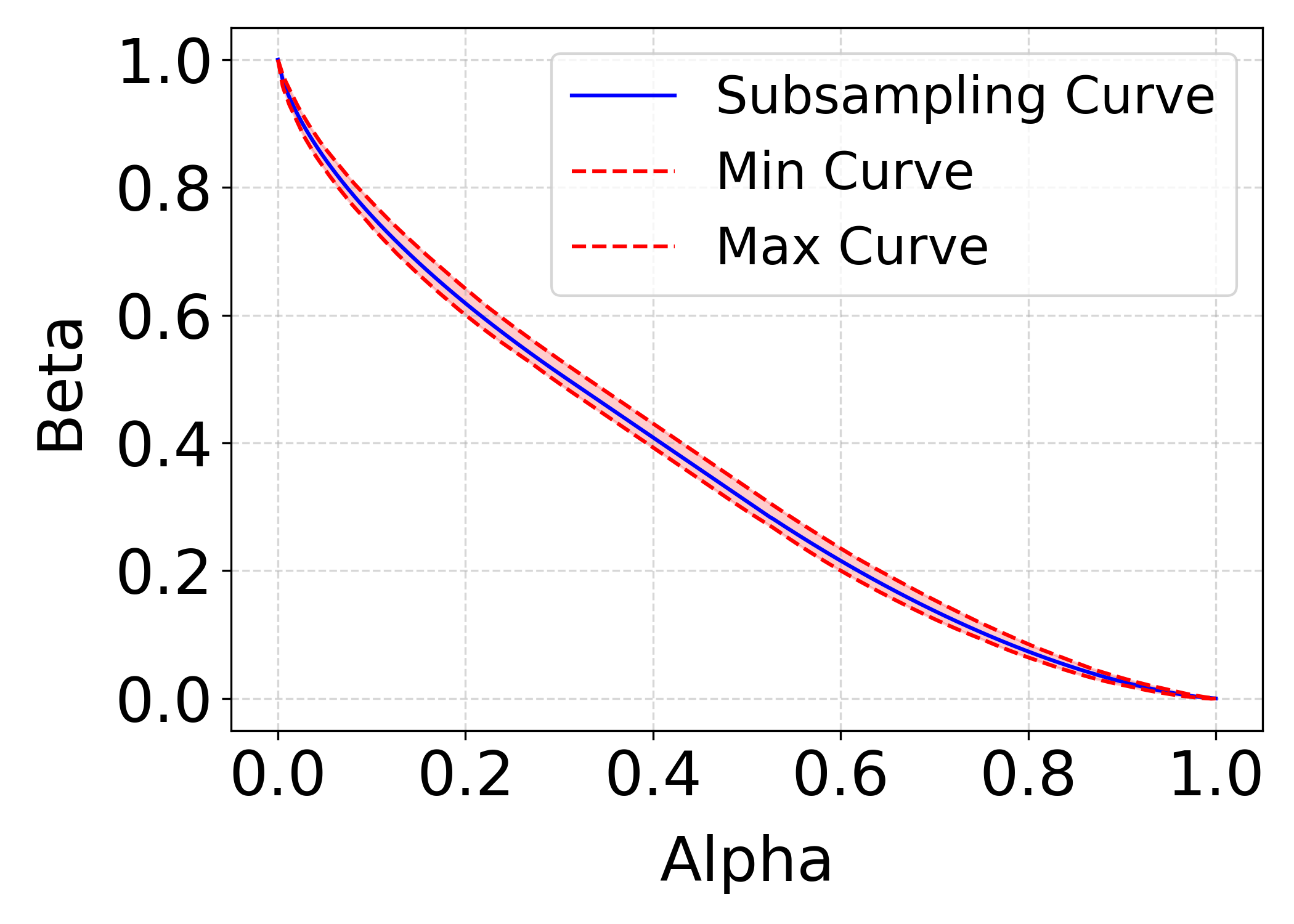}}
    \hfill
    \vspace{-0.2cm}
    \subfloat[$n_1=10^5$]{\includegraphics[width=0.3\textwidth]{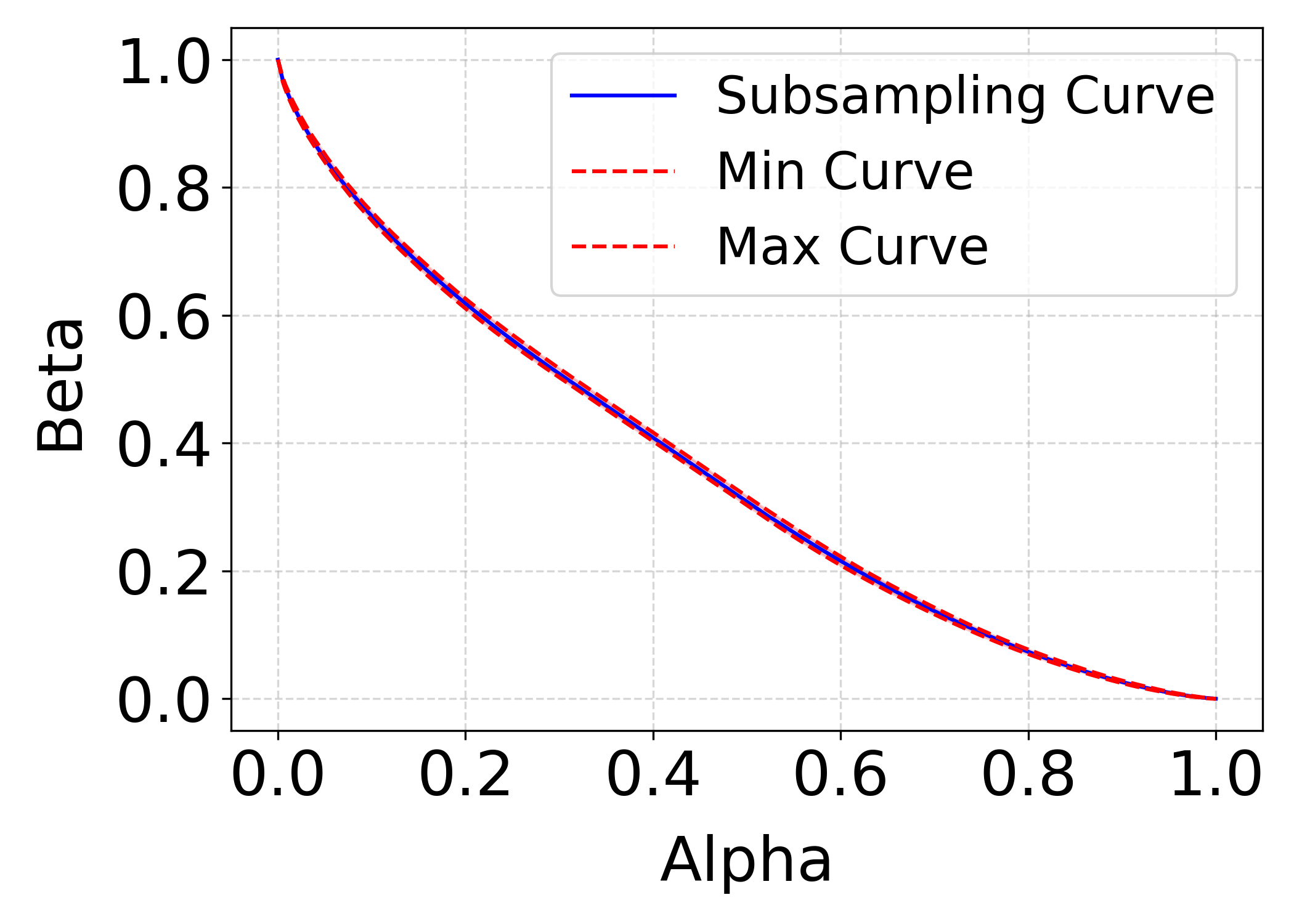}}
     \caption{Estimation of the Subsampling Trade-off curve $T_{Sub}$ with the Gaussian mechanism for $\mu=1$ and varying sample sizes. Min- and Max Curve lower- and upper bound the worst point-wise deviation from the true curve $T_{Sub}$ over $1000$ simulations.}
    \label{fig:subsampling}
\end{figure*}

\subsection{Additional Algorithms}
We test our estimation procedure on the Laplace and Subsampling algorithm, which often serve as building blocks in more sophisticated privacy mechanisms. We select the same setting for our experiments as in Section \ref{sec6} and choose $D$ and $D'$ as in \eqref{eq_databases}. \\

\noindent \textbf{Laplace mechanism.} We consider the summary statistic $S(x)= \sum_{i=1}^{10} x_i$ and the mechanism
\begin{equation*}
    M(x):= S(x)+Y~,
\end{equation*}
where $Y\sim \mathcal Lap(0, \sigma)$. The statistic $S(x)$ is privatized by the random noise $Y$ if the scale parameter $\sigma > 0$ of the Laplace distribution is chosen appropriately. We choose $\sigma = 1$ for our experiments and observe that the optimal trade-off curve is given by 
\begin{align*}
    T_{Lap}(\alpha)=\begin{cases}
        1- e \, \alpha,  &\alpha<e^{-1}/2~,\\
        e^{-1}/4 \alpha,  &e^{-1}/2\leq \alpha\leq 1/2~,\\
        e^{-1}(1-\alpha), &\alpha>1/2.
    \end{cases}
\end{align*}
We point the interested reader to \cite{Dong2022} for more details on how to derive $T_{Lap}$. \\

\noindent \textbf{Subsampling algorithm.} Random subsampling provides an effective way to enhance the privacy of a DP mechanism $M$. We only provide a rough outline here and refer for details to \cite{Dong2022}.
In simple words, we choose an integer $m$ with  $1\leq m< r$, where $r$ is the size of the database $D$. We then draw a random subsample of size $m$ from $D$, giving us the smaller database $\bar D$ of size $m$. The mechanism $M$ is then applied to $\bar D$ instead of $D$, providing users with an additional layer of privacy (if a user is not part of $\bar D$, their privacy cannot be compromised). The amplifying effect that subsampling has on privacy is visible in the optimal trade-off curve: If $M$ has the trade-off curve $T$, then $M(\bar D)$ has the trade-off curve
\begin{equation*}
    \bar T(\alpha)=  \frac{m}{r}T(\alpha)+\frac{r-m}{r}(1-\alpha),
\end{equation*}
which is strictly more private than $T$ for any $m<r$. A minor technical peculiarity of subsampling is that the resulting curve $\bar T$ is not necessarily symmetric, even if $T$ is (see \cite{Dong2022} for details on the symmetry of trade-off functions). Trade-off curves are usually considered to be symmetric and one can symmetrize $\bar T$ by applying a symmetrizing operator $\mathbf{C}$ with 
\begin{equation*}
    \mathbf{C}[T](x)=\begin{cases}
         T(x), \quad &x\in [0,x^*]\\
        x^*+ T (x^*)-x, \quad &x\in [x^*, T(x^*)]\\
         T^{-1}(x), \quad &x\in [ T(x^*),1],
    \end{cases}
\end{equation*}
where $x^*$ is the unique fix-point of $T$ with $T(x^*)=x^*$ (for more details we refer to \cite{Dong2022}). Another mathematical representation of $\mathbf{C}$ that we use in our code is 
$\mathbf{C}(T)=\min\{T,T^{-1}\}^{**}$, where the index $**$ signifies double convex conjugation. We incorporate this operation into our estimation procedure by simply applying $\mathbf{C}$ to our estimate of the trade-off function $T$. For our experiments involving subsampling, we use the Gaussian mechanism for $M$ (with $\sigma=1$) and obtain the subsampled version $M'$ by fixing the parameter $m=5$ (recall that $r=10$). \\

\noindent Similar to the experiments section, we construct figures that upper and lower bound the worst case errors for the Laplace mechanism and the Subsampling algorithm over $1000$ simulation runs. We can see again that the error of the estimator $\hat T_h$ shrinks significantly, as $n_1$ grows.

\subsection{Comparison to credible intervals}
\B{
In this work, we consider the construction of confidence bounds as common in frequentist statistics. If $\gamma$ is set to $1\%$ in Theorem \ref{theo:auditor}, this means that (on average) in $100$ audits of a correct algorithm only at most one violation will be (erroneously) detected. These kinds of guarantees are the gold standard in empirical sciences and we believe they are the guarantees real users would care about. It is worth noting that there exist other types of statistical results, including credible results from Bayesian statistics such as by  \cite{zanella2023bayesian}, who work on approximate DP. It is important to point out that Bayesian results are very different from frequentist approaches. One difference is their performance, because credible intervals do not generally provide the same bounds on false detection rates as frequentist results. We illustrate this point with a minimal simulation. The aim is to make a confidence/credible interval for the bias of a coin. 
We simulate \(n = 500\) coin flips per trial for \(k = 10^5\) trials, with varying bias (\(p\)). Frequentist confidence intervals use the sample proportion and normal approximation, while Bayesian credible intervals rely on a standard Beta prior (\(\alpha = 10, \beta = 10\)) updated with observed data. Coverage is assessed by checking if intervals contain the true \(p\). The targeted confidence/credible is $1-\gamma$ with $\gamma=0.1$ and results are displayed in Figure \ref{fig:coverage}.
As we can see, frequentist intervals hold repeated-sampling guarantees, while Bayesian credible intervals depend on priors and lack such guarantees under repeated sampling.}
\begin{figure}[h!]
\centering\includegraphics[width=0.45\textwidth]{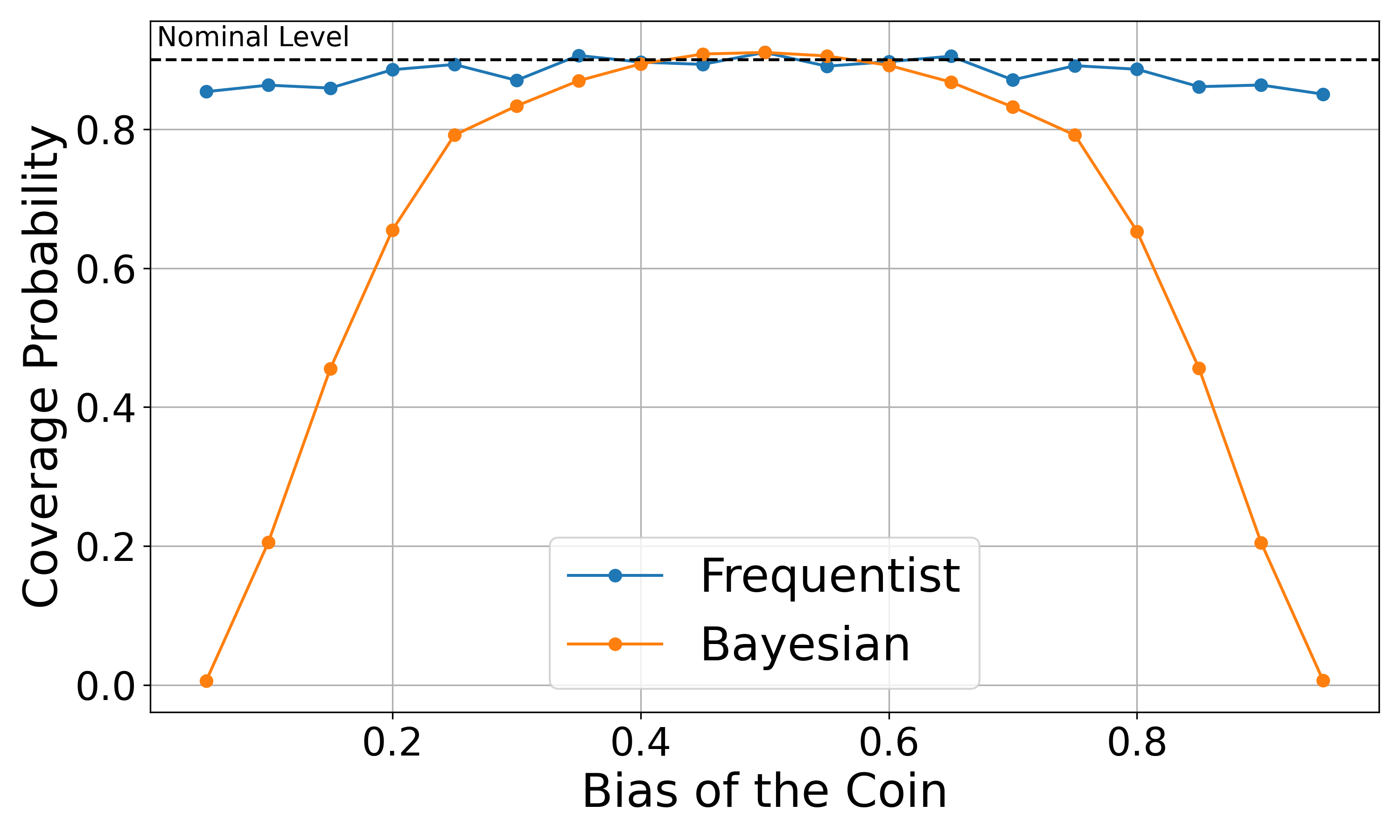}
    \caption{\B{Empirical coverage probabilities for frequentist and Bayesian intervals based on $10^5$ simulation runs.}}
    \label{fig:coverage}
\end{figure}

\subsection{Additional simulations}\label{app:add_simulations}

We present some results that complement the main findings in our experiment section. We use the same setup as described in our experiments and investigate a faulty implementation of the Gaussian mechanism. We study two things: First, the impact of the parameter $\gamma$, where we vary $\gamma$ between very small and relatively large values. As we can see, smaller values of $\gamma$ lead to larger boxes $\square_\gamma$ which make it harder for the auditor to detect violations. Secondly, we consider the impact of the sample size $n_1$ ranging from the very modest value of $10^2$ up to $10^4$. We see that the sample size has very little impact on the performance of the procedure and it already works well for fairly small samples $n_1$ ($n_2$ has a greater impact, as we have seen in our experiments).
Finally, we have also reported the runtimes for different mechanism in Table \ref{tab:running_times_KDE} and Table \ref{tab:running_times_kNN}.
\begin{table}[h!]
\centering
\begin{tabular}{|l|c|}
\hline
\textbf{Algorithm}                           & \textbf{Runtime in seconds} \\ \hline
Gaussian mechanism              &  26.3                                                    \\ \hline
Laplace mechanism            &    30.51                                                     \\ \hline
Subsampling mechanism         &   27.82                                                      \\ \hline
DP-SGD           &              61.1                                         \\ \hline
 
\end{tabular}
\caption{Average runtimes of Algorithm \ref{alg:pointwise_KDE_estimator} for $n_1=10^5$ over $10$ runs to obtain the full trade-off curve $T$.}
\label{tab:running_times_KDE}
\end{table}
\begin{table}[h!]
\centering
\begin{tabular}{|l|c|}
\hline
\textbf{Algorithm}                           & \textbf{Runtime in seconds} \\ \hline
Gaussian mechanism              &    62.63                                                  \\ \hline
Laplace mechanism            &        67.04                                                 \\ \hline
Subsampling mechanism         &     66.98                                                  \\ \hline
DP-SGD           &    114.86                                                 \\ \hline
 
\end{tabular}
\caption{Average runtimes of Algorithm \ref{alg: general BayBox estimator} for $n_2=10^6$ over $5$ runs to obtain one point of the trade-off curve $T$ with confidence region.} 
\label{tab:running_times_kNN}
\end{table}
\begin{figure*}[t!]
    \centering
    \subfloat[\centering $\gamma=0.001$, \textbf{Ground truth:} Violation; \textbf{Decision:} \textcolor{green}{"Violation"}{\textcolor{green}{\scalebox{1.5}{\ding{51}}}}
    ]{\includegraphics[width=0.3\textwidth]{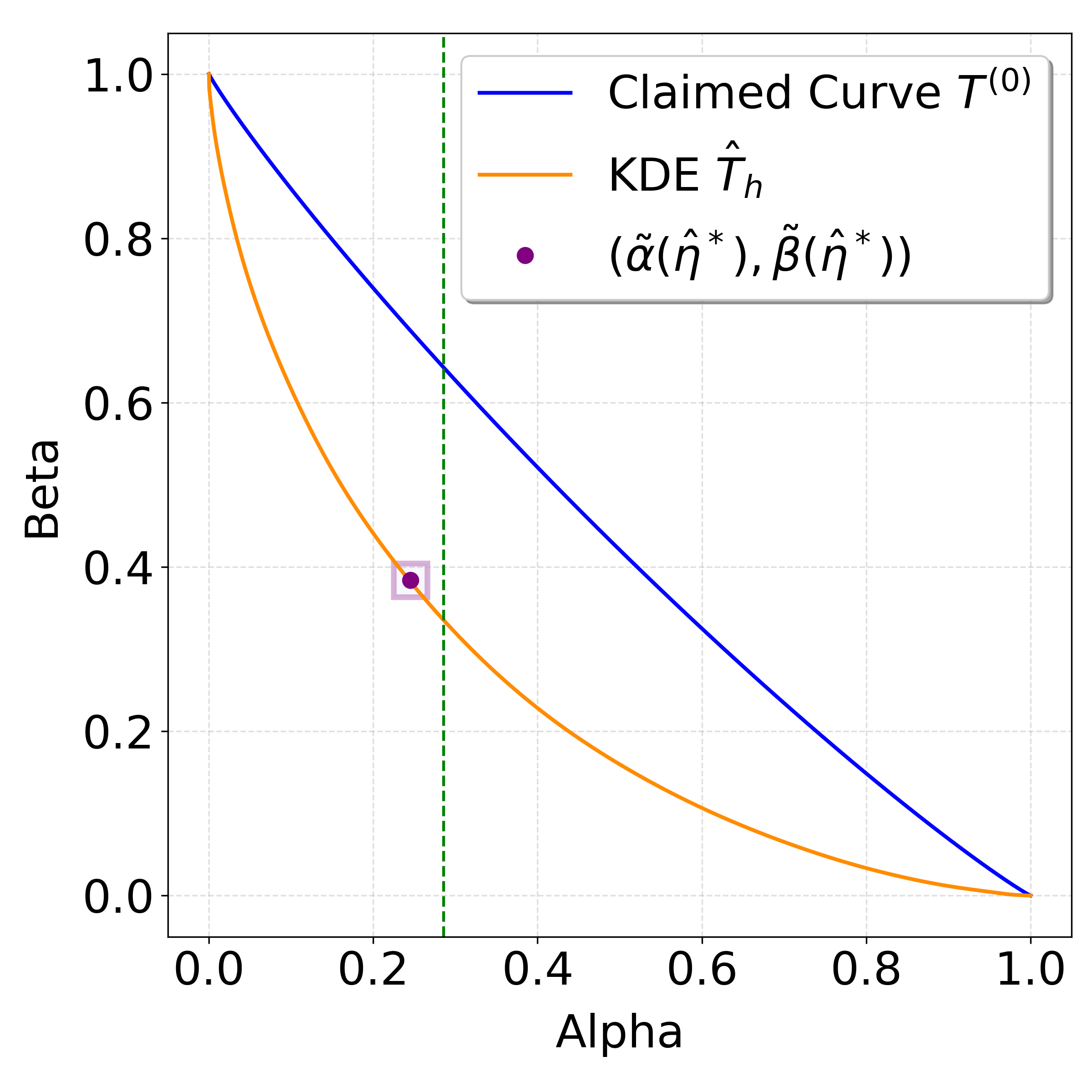}}
    \hfill
    \subfloat[\centering $\gamma=0.01$, \textbf{Ground truth:} Violation; \textbf{Decision:} \textcolor{green}{"Violation"}{\textcolor{green}{\scalebox{1.5}{\ding{51}}}}
    ]{\includegraphics[width=0.3\textwidth]{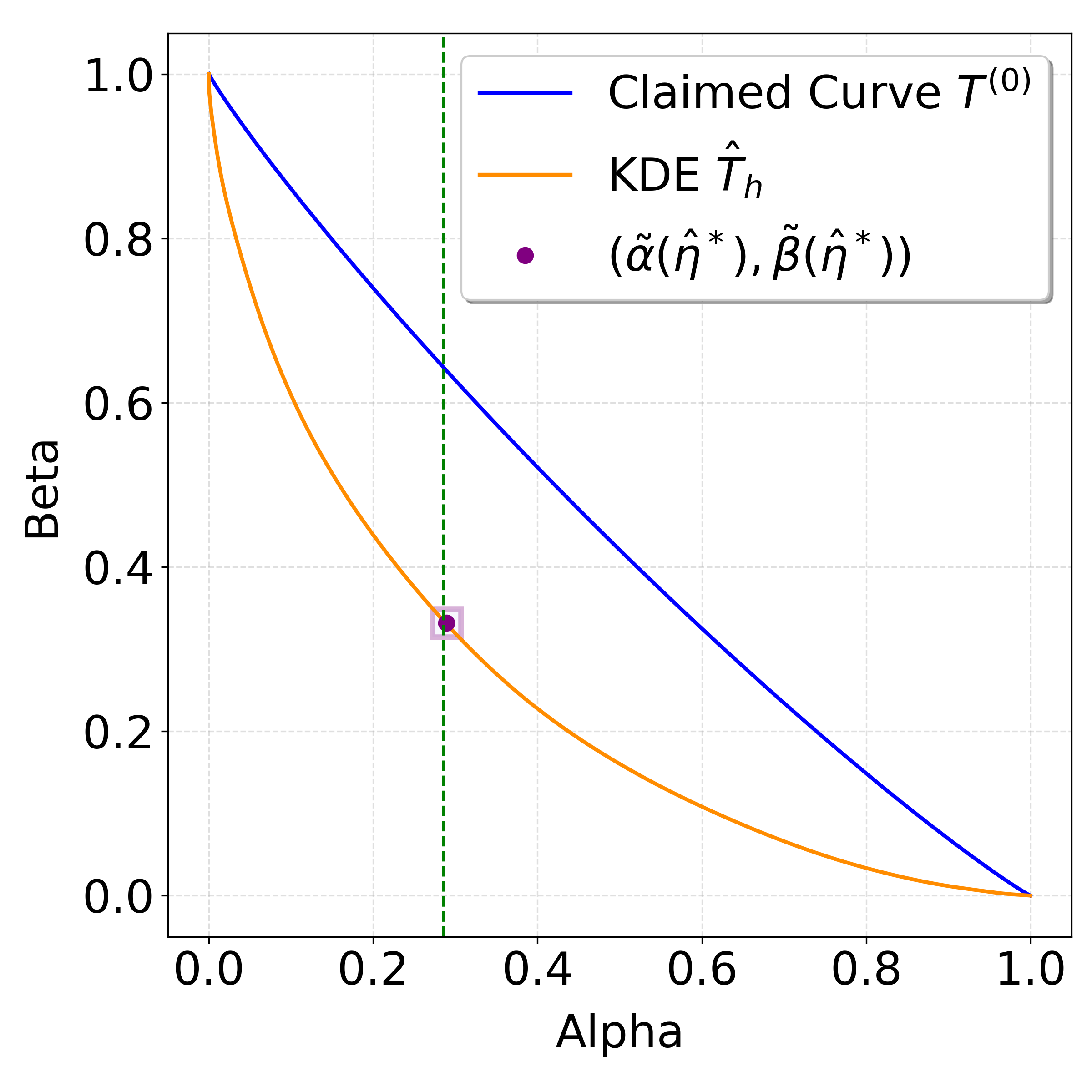}}
    \hfill
    \subfloat[\centering $\gamma=0.1$, \textbf{Ground truth:} Violation; \textbf{Decision:} \textcolor{green}{"Violation"}{\textcolor{green}{\scalebox{1.5}{\ding{51}}}}]{\includegraphics[width=0.3\textwidth]{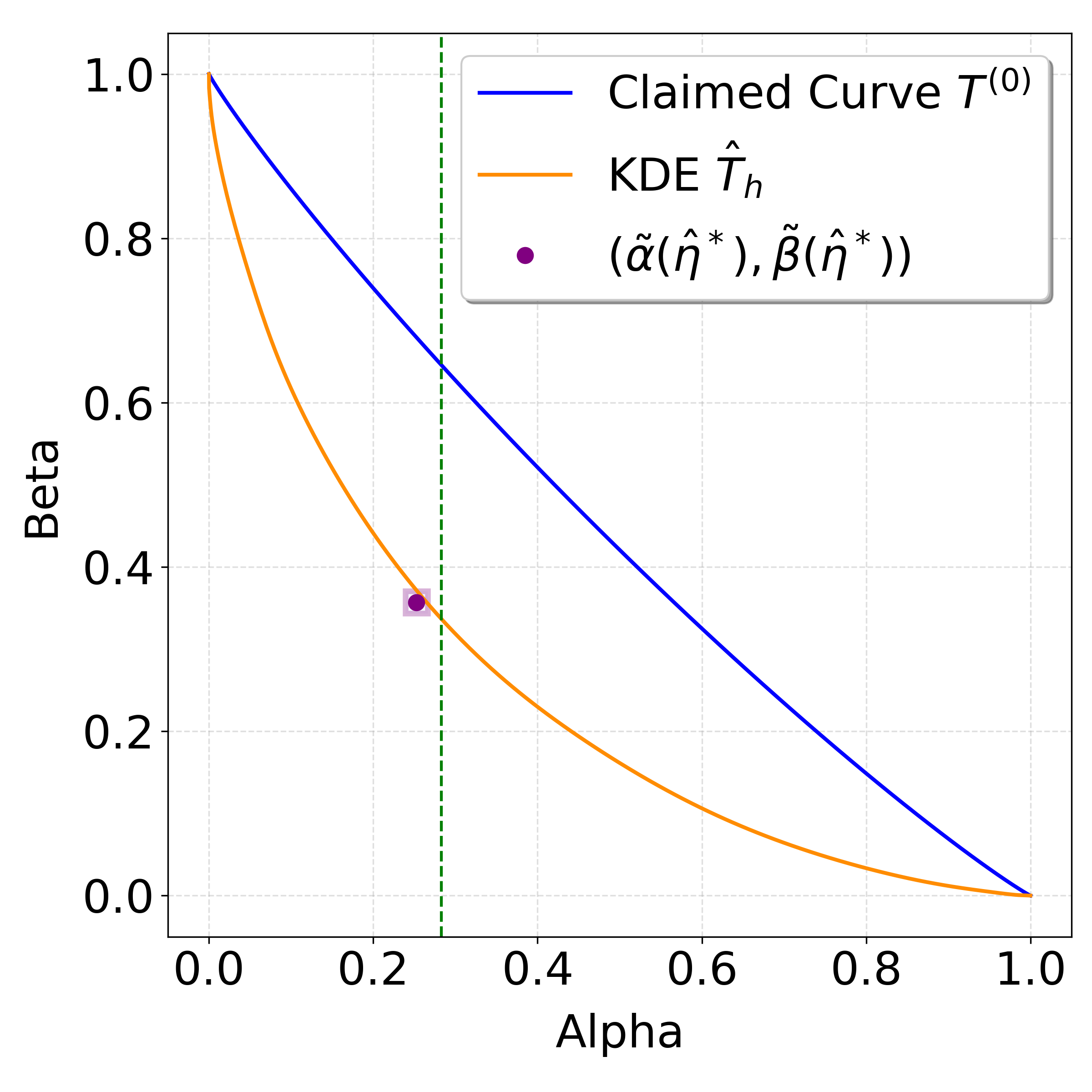}}
    \hfill
    \subfloat[\centering $n_1=10^2$,
  \textbf{Ground truth:} Violation; \textbf{Decision:} \textcolor{green}{"Violation"}{\textcolor{green}{\scalebox{1.5}{\ding{51}}}}]
  {\includegraphics[width=0.3\textwidth]{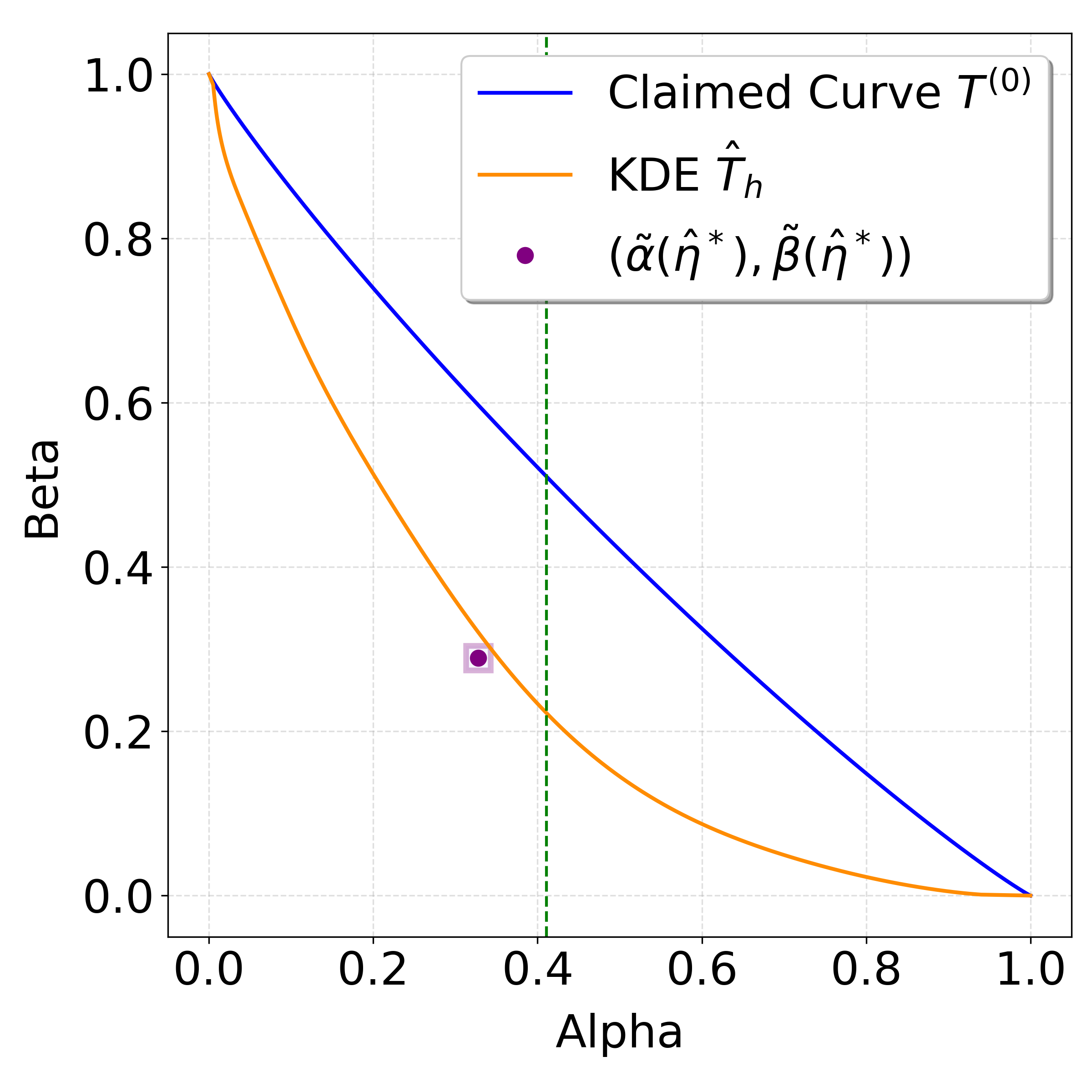}}
    \hfill
    \subfloat[\centering $n_1=10^3$, \textbf{Ground truth:} Violation; \textbf{Decision:} \textcolor{green}{"Violation"}{\textcolor{green}{\scalebox{1.5}{\ding{51}}}}]{\includegraphics[width=0.3\textwidth]{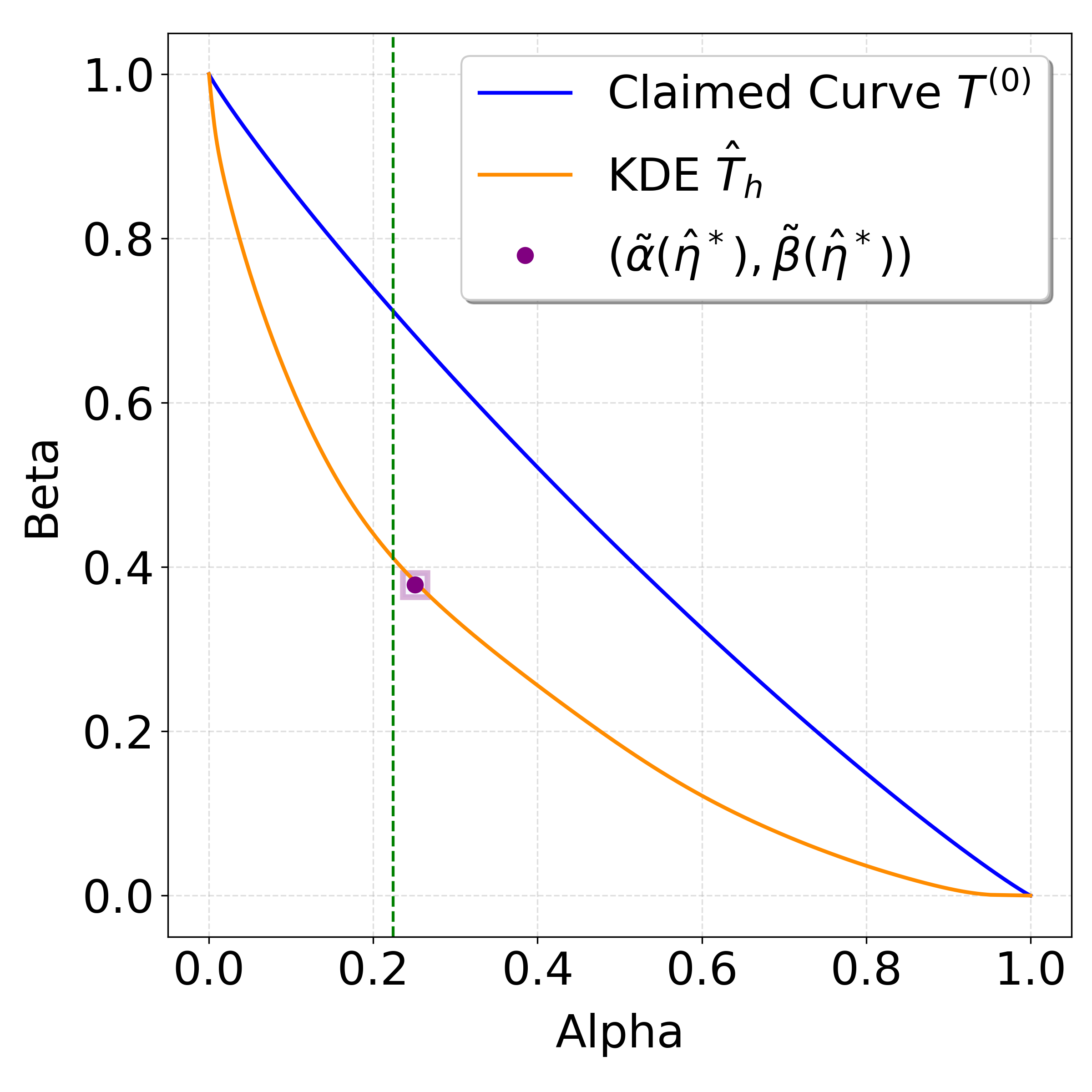}}
    \hfill
    \subfloat[\centering $n_1=10^4$, \textbf{Ground truth:} Violation; \textbf{Decision:} \textcolor{green}{"Violation"}{\textcolor{green}{\scalebox{1.5}{\ding{51}}}}]{\includegraphics[width=0.3\textwidth]{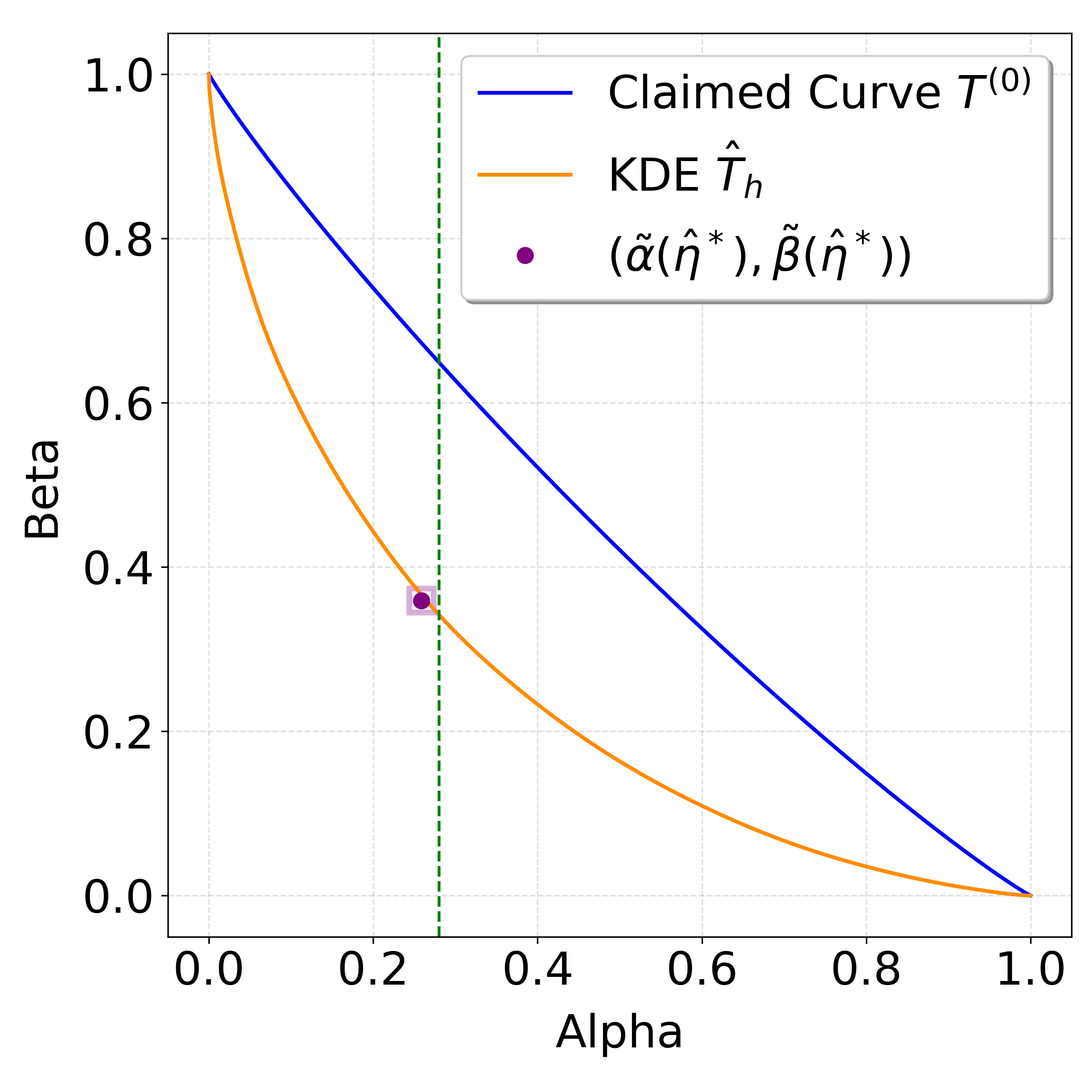}}
        \caption{\textbf{Auditing a faulty Mechanism:} Claimed Curve $\textcolor{blue}{T^{(0)}} = \textcolor{blue}{T_{Gauss}}$ with $\mu=0.2$, but in reality $\mu=1$. For (a),(b),(c) we consider $n_1=10^4$, and for (d),(e),(f) we have considered various sample sizes for the KDEs, respectively. Throughout the simulations we keep $n_2=10^4$ fixed, and confidence intervals in (d), (e), and (f) are computed with level $\gamma = 0.05$}\label{fig:additional_experiments}
\end{figure*}

\subsection{Proofs for Goal 1 (Estimation)}

\textbf{Consequences of Theorem \ref{theo:1}} The main result in Section \ref{sec:4} is Theorem \ref{theo:1}. Lemma \ref{lem1} can be seen as a special case, putting $\hat p=p, \hat q=q$ . Then, Assumption \ref{ass2} is met for the constant sequence $a_n=0$. It follows by this construction that $\hat T_h =T_h$, is non-random and only depends on $h$. Any choice of $h\downarrow 0$ is permissible and  Lemma \ref{lem1} follows from the Theorem. Proposition \ref{prop1} too is a direct consequence of Theorem \ref{theo:1}. To see this, we notice that
\begin{align*}
    & T_0(\hat \alpha_h(\hat \eta^*)) - T(\hat \alpha_h(\hat \eta^*)) \\
    =& T_0(\hat \alpha_h(\hat \eta^*)) - \hat T_h(\hat \alpha_h(\hat \eta^*))+o_P(1)\\
    = & \sup_{\alpha \in [0,1]}\{T_0(\alpha) - \hat T_h(\alpha)\}+o_P(1)\\
    =& \sup_{\alpha \in [0,1]}\{T_0(\alpha) -  T(\alpha)\}+o_P(1).
\end{align*}
In the first and last step, we have used the uniform convergence of Theorem \ref{theo:1}, which allows us to replace $T$ by $\hat T_h$ while only incurring an $o_P(1)$ error. In the second step, we have used the definition of $\hat \alpha_h(\hat \eta^*)$ as the maximizer of the difference between $T_0$ and $\hat T_h$. Thus Proposition \ref{prop1} follows. We now turn to the proof of the theorem. The proof is presented for densities on the real line. Extensions to $\mathbb{R}^d$ are straightforward and therefore not discussed. \\
\textbf{Preliminaries} Recall that a complete separable metric space is Polish. The real numbers, equipped with the absolute value distance is a Polish space. The continuous functions $\mathcal{C}_0$ on the real line that vanish at $\pm \infty$, i.e. that satisfy
\begin{align} \label{e:decay}
\lim_{x \to \infty} f(x) = \lim_{x \to \infty} f(-x)=0 
\end{align}
is a Polish space if equipped with the supremum norm
\[
\|f\|:= \sup_{x \in \mathbb{R}}|f(x)|.
\] 
The product of complete, separable metric spaces is complete and separable if equipped with the maximum metric, i.e. the space $\mathcal{C}_0 \times \mathcal{C}_0\times \mathbb{R}\times \mathbb{R}$ is Polish. 
Now, the vector 
\[
(\hat p, \hat q, \|\hat p-p\|_\infty/a_n, \|\hat q-q\|_\infty/a_n)
\]
lives on this space (for each $n$) and convergences to the limit $(p,q,0,0)$ in probability (see Assumption \ref{ass2}). Accordingly we can use Skorohod's theorem to find a probability space, where this convergence is a.s. 
\[
(\hat p, \hat q, \|\hat p-p\|_\infty/a_n, \|\hat q-q\|_\infty/a_n) \to (p,q,0,0) \quad a.s.
\]
It is a direct consequence that on this space it holds a.s.
\[
\|\hat p-p\| =o(a_n), \quad \|\hat q-q\| =o(a_n).
\]
In the following, we will work on this modified probability space and exploit the a.s. convergence. We will fix the outcome and regard $\hat p, \hat q$ as sequences of deterministic functions, converging to their respective limits at a rate $o(a_n)$.\\
    Next, it suffices to show the desired result pointwise for any $\alpha$. This reduction is well-known. For a sequence of  continuous, monotonically decreasing functions $(f_n)_n$ living on the unit interval $[0,1]$, pointwise convergence  to a continuous, monotonically decreasing limit $f$ on $[0,1]$ implies uniform convergence. The same argument lies at the heart of the proof of the famous Glivenko-Cantelli Theorem (see \cite{vaart:wellner:1996}). We now want to demonstrate the convergence $|\hat T(\alpha) -T(\alpha)|=o(1)$ pointwise. More precisely, we will demonstrate that for the pair $(\alpha, T(\alpha))$, there exist values of $\eta$ such that $\hat \alpha_h(\eta) \to \alpha$ and $\hat\beta_h(\eta) \to T(\alpha)$. Since the proofs of both convergence results work exactly in the same way, we restrict ourselves in this proof to show that $\hat \alpha_h(\eta) \to \alpha$. So let us consider a fixed but arbitrary value of $\alpha \in [0,1]$ and begin the proof.\\
    \textbf{Case 1:} We first consider the case where $\eta\ge 0$ (the threshold in the optimal LR test) is such that the set $\{q/p=\eta\}$ has $0$ mass. In this case, the coin toss with probability $\lambda$ can be ignored (it happens with probability $0$) and  we can define the type-I-error  $\alpha$ of the Neyman-Pearson test as 
    \[
\alpha = \int p \cdot \mathbb{I}\{q/p  > \eta\}.
\]
In this case, we want to show that 
\begin{align*}
& \int_{x \in [-h/2,h/2]} \frac{1}{h}\int_{\hat q /\hat p  > \eta +x} \hat p \\
=& \int  \int_{x \in [-h/2,h/2]}  \hat p \frac{1}{h}\,\,\mathbb{I}\{ \hat q /\hat p  > \eta +x\}=:\int \hat g \\
\to &  \int_{q/p  > \eta}  p  = \int p \cdot \mathbb{I}\{q/p  > \eta\} =:\int g .
\end{align*}
Here we have defined the functions $ g, \hat g$ in the obvious way. We will now show $\hat g$ converges pointwise to $g$. For this purpose consider the interval $[-K,K]$ for a large enough $K$, such that
\[
\int_{[-K,K]^c} p<\zeta \qquad \textnormal{and} \qquad\int_{[-K,K]^c} q<\zeta
\]
for a number $\zeta$ that we can make arbitrarily small. Given the uniform convergence of the density estimators on the interval $[-K,K]$ it holds for all $n$ sufficiently large that also
\[
\int_{[-K,K]^c} \hat p<\zeta \qquad \textnormal{and} \qquad\int_{[-K,K]^c} \hat q<\zeta.
\]
Accordingly we have 
 \[
 \bigg| \int \hat g - g\bigg| \le 2 \zeta + \bigg| \int_{[-K,K]} \hat g -g\bigg|.
 \]
 We then focus on the second term on the right and fix some argument $y \in [-K,K]$. It holds that either $ q(y)/ p(y)$ is bigger or smaller than $\eta$ (equality occurs only on a null-set and can therefore be neglected). Let us focus on the case where $q(y)/ p(y)> \eta$. If this is so, then it follows that in a small environment, say for $y' \in [y-\zeta', y+\zeta']$ we also have $q(y')/ p(y')>\eta$. For all large enough $n$ it follows that $h/2<\zeta'$. Then, it is easy to see that also $\hat q(y')/\hat p(y') >\eta$ for all $y' \in [y-\zeta', y+\zeta']$ simultaneously, for all sufficiently large $n$. If this is the case, the indicators in the definition of $\hat g, g$ become $1$ and $\hat g=\hat p$, $g=p$. 
 So, we have pointwise $\hat g(y)=\hat p(y)  \to p(y) =g(y)$. Since $\hat g$ is also bounded for all sufficiently large $n$ (since the integral over the indicator is bounded and the sequence $\hat p$ is uniformly convergent to the bounded function $p$) we obtain by the theorem of dominated convergence that 
 \[
 \bigg| \int_{[-K,K]} \hat g -g\bigg|\to 0.
 \]
 This shows that 
 \[
 \limsup_n|\hat \alpha_h(\eta) - \alpha|=\mathcal{O}(\zeta).
 \]
 Finally, letting $\zeta \downarrow 0$ in a second limit shows the desired approximation in this case.\\
 \textbf{Case 2:} Next, we consider the case where the set $\{q/p=\eta\}$ has positive mass for some $\eta>0$.\footnote{We omit the simpler case where $\eta=0$ and $L=0$ anyways.}
 This means that the coin-flip in the definition of the optimal LR test plays a role and we set the probability $\lambda $ to some value in $[0,1]$.
  We then consider as estimator the value $\hat \alpha(\eta-b h)$ for a value $b$ that we will determine below. Let us, for ease of notation, define the probability 
 \[
 L := \int_{q/p=\eta} p
 \]
and appreciate that then
\begin{align} \label{e:dec}
\alpha = \alpha'+\mathcal{O}(\zeta) + \lambda L.
\end{align}
We explain the decomposition: In equation \eqref{e:dec}, $\alpha'$ is the rejection probability of the LR test defined by the decision to reject whenever $q(y)/p(y)>\eta+\zeta''$ for some small number $\zeta''$. For all small enough values of $\zeta''$ the threshold $\eta+\zeta''$ is not a plateau value (there are only finitely many of them; see Assumption \ref{ass1}). It follows that 
\[
\alpha' = \int p \cdot \mathbb{I}\{q/p  > \eta+\zeta''\}.
\]
Next, for any fixed constant $\zeta>0$ we can choose $\zeta''$ small enough such that
\begin{align} \label{e:int1}
\int p \cdot \mathbb{I}\{\eta <q/p  \le \eta+\zeta''\} < \zeta.
\end{align}
This explains the second term on the right of equation \eqref{e:dec}. The third term corresponds to the probability of rejecting whenever $q/p=\eta$ (this probability is $L$) times the probability that the coin shows heads (reject) with probability $\lambda$.\\
Now, using these definitions, we decompose the set
\begin{align*}
&\{ \hat q /\hat p  > \eta-b h +x\}\\
=& \{ \eta +\zeta'' \ge \hat q /\hat p  > \eta-b h +x\} \cup  \{  \hat q /\hat p  > \eta +\zeta''\}. 
\end{align*}
This yields the decomposition
\begin{align} \label{e:alcon}
& \hat \alpha_h(\eta-b h)
= \hat \alpha_h(\eta+\zeta'')\\
&+ \int  \int_{x \in [-h/2,h/2]}  \hat p \,\, \frac{1}{h}\,\,\mathbb{I}\{ \eta +\zeta'' \ge \hat q /\hat p  > \eta-b h +x\} .\nonumber 
\end{align}
Now, by part 1 of this proof we have  
\[
|\hat \alpha_h(\eta+\zeta'') - \alpha'|=o(1).
\]
 Next, we study the integral on the right side of eq. \eqref{e:alcon} and for this purpose define the objects
\begin{align*}
    \tilde g := & \int_{x \in [-h/2,h/2]}  \hat p \,\,\frac{1}{h}\,\,\mathbb{I}\{ A_1\}, \\
    \tilde f := & \int_{x \in [-h/2,h/2]}  \hat p \,\,\frac{1}{h}\,\,\mathbb{I}\{ A_2\}.\\
    A_1:= &\{\eta +\zeta'' \ge \hat q /\hat p  > \eta-b h +x,q/p=\eta\}, \\
    A_2:=& \{\eta +\zeta'' \ge \hat q /\hat p  > \eta-b h +x,q/p \neq \eta\}.
\end{align*}
Now, let us consider a value $y$ where $q(y)/p(y) \neq \eta$ and for sake of argument let us focus on the (more difficult) case $q(y)/p(y) >\eta$. If $q(y)/p(y) > \eta+\zeta''$, it follows that eventually $\hat p(y)/\hat q(y) > \eta+\zeta''$ and hence $\tilde f(y)=0$. The case where $q(y)/p(y) = \eta+\zeta''$ is a null-set and hence negligible (it is not a plateau value). The case where  $q(y)/p(y) \in (\eta, \eta+\zeta'')$ implies that eventually $\hat p(y)/\hat q(y) \in (\eta, \eta+\eta'')$ and thus eventually $\tilde f(y) = \hat p(y)$ which converges pointwise to $p$. Thus, we have by dominated convergence that 
\[
\int \tilde f \to \int p \cdot \mathbb{I}\{\eta <q/p  \le \eta+\zeta''\} <\zeta.
\]
The fact that the integral is bounded by $\zeta$ was established in eq. \eqref{e:int1}. This means that for all $n$ large enough we have 
\[
\int \tilde f < \zeta.
\]
Now, let us focus on a value of $y$ where $q(y)/p(y)=\eta$. In this case it follows that $q(y), p(y)>0$ and we have 
\[
\frac{\hat q(y)}{\hat p(y)} = \frac{q(y)}{p(y)} +o(a_n) = \eta +o(a_n).
\]
Notice that we can rewrite $\tilde g$ as
\begin{align*}
    \int_{x \in [-1/2,1/2]}  \hat p \,\,\mathbb{I}\{\eta +\zeta'' \ge \hat q /\hat p  > \eta-b h +hx,q/p=\eta\}.
\end{align*}
Now, for any $x>b$ it follows that the indicator will eventually be $0$, because 
\[
\hat q /\hat p =\eta+o(a_n) << \eta + h(x-b)
\]
(because $a_n=o(h)$ by assumption in the Theorem). By similar reasoning the indicator is $1$ if $x<b$. This means that $\tilde g$ converges for any fixed $y$ with $q(y)/p(y)=\eta$ to $p(y) \cdot (1/2+b)$ and using majorized convergence yields
\[
\int \tilde g \to (1/2+b) \int_{q/p=\eta} p =  (1/2+b)L.
\]
Now, we can choose  $b=\lambda-1/2$ to get that the right side is equal to $\lambda L$. Putting these considerations together, we have shown that
\[
\limsup_n |\alpha -\hat \alpha_h(\eta-[\lambda-1/2] h)| = \mathcal{O}(\zeta). 
\]
Taking the limit $\zeta \downarrow 0$ afterwards yields the desired result. 

\subsection{Proofs for Goal 2 (Auditing)}
Before we proceed to the proofs, we state a simple but useful consequence of the Neyman-Pearson Lemma.
\begin{cor}
   \label{corollary: NP lemma}
    Let set $\cS_{\eta} = \{x: p(x)/q(x) \leq \eta \}$.
    For $\alpha \in [0,1]$, if there exists $\eta$ such that $\prdis{\rX \sim P}{\rX \in \cS_{\eta}} = \alpha$, then it holds that
    \begin{align*}
    \beta(\alpha) = 1 - \prdis{\rX \sim Q}{\rX \in \cS_{\eta}}.
    \end{align*}
\end{cor}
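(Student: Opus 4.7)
The plan is to recognize this corollary as a direct specialization of Theorem \ref{theo: NP lemma}, where the set $\cS_\eta$ is interpreted as the (non-randomized) rejection region of a specific likelihood ratio test. Concretely, I would first rewrite the defining condition $p(x)/q(x) \le \eta$ as $q(x)/p(x) \ge 1/\eta$, adopting the standard conventions $p/q = 0$ when $p = 0$ and $q/p = +\infty$ when $p > 0$, $q = 0$. With this reformulation, the complement $\cS_\eta^c$ is the set $\{q/p < 1/\eta\}$ on which a suitable LR test fails to reject. Put differently, $\cS_\eta$ coincides with the rejection region of the LR test from Theorem \ref{theo: NP lemma} with threshold $1/\eta$ and randomization parameter $\lambda = 1$ (always rejecting on the boundary $\{q/p = 1/\eta\}$).

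Second, I would identify the two error probabilities of this concrete test. The type-I error is
\[
\prdis{\rX \sim P}{\rX \in \cS_\eta} = \alpha,
\]
which holds by the hypothesis of the corollary; the type-II error is $\prdis{\rX \sim Q}{\rX \notin \cS_\eta} = 1 - \prdis{\rX \sim Q}{\rX \in \cS_\eta}$. Since Theorem \ref{theo: NP lemma} asserts that this test attains the minimum type-II error among all level-$\alpha$ tests, and that minimum is by definition $\beta(\alpha)$, the identity $\beta(\alpha) = 1 - \prdis{\rX \sim Q}{\rX \in \cS_\eta}$ follows.

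I do not expect any substantial obstacle. The only point requiring a brief remark is the boundary set $\{p/q = \eta\}$, which may carry positive $P$-mass; this is exactly why the LR test is formulated with a randomization parameter $\lambda$. Since $\cS_\eta$ is defined with a weak inequality, the corresponding test is the $\lambda = 1$ member of the NP family, and the hypothesis that $\prdis{\rX \sim P}{\rX \in \cS_\eta} = \alpha$ holds for this particular $\eta$ is precisely the statement that this member achieves the target level on the nose. No further work beyond invoking the Neyman--Pearson lemma is needed.
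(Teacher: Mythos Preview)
Your proposal is correct and matches the paper's intent: the paper does not give an explicit proof of this corollary but simply introduces it as ``a simple but useful consequence of the Neyman--Pearson Lemma,'' and your argument spells out exactly that consequence by identifying $\cS_\eta$ with the rejection region of the LR test at threshold $1/\eta$ with $\lambda=1$.
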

\B{
As a next step, we prove a theoretical result connecting the output of the BayBox estimator for the theoretical (in practice unknown) Bayes classifier $\phi^*$.}
\B{
\begin{lem}
    \label{lemma: accuracy stat of general BayBox estimator}
    Let $\eta$, $(\alpha(\eta), \beta(\eta))$, $(\tilde{\alpha}(\eta), \tilde{\beta}(\eta))$, and $\phi$ be as defined in Algorithm~\ref{alg: general BayBox estimator}. Set $\phi$ to the Bayes optimal classifier $\phi^*$ for the corresponding Bayesian classification problem.
    Then, with probability $1 - \gamma$,
    \begin{align*}
    &\abs{\tilde{\alpha}(\eta) - \alpha(\eta)} \leq \sqrt{\frac{1}{2n}\ln{\frac{4}{\gamma}}} \\
    &\abs{\tilde{\beta}(\eta) - \beta(\eta)} \leq \sqrt{\frac{1}{2n}\ln{\frac{4}{\gamma}}}.
    \end{align*}
\end{lem}
}

\begin{proof}[\textbf{Proof of Lemma~\ref{lemma: accuracy stat of general BayBox estimator}}]
    \label{proof: for lemma: accuracy stat of general BayBox estimator}
    We prove the statement that $\abs{\tilde{\alpha}(\eta) - \alpha(\eta)} \leq \sqrt{\frac{1}{2n}\ln{\frac{4}{\gamma}}}$ if $\eta \geq 1$ with probability $\ge 1-\gamma/2$. The proof of the second statement follows a similar approach. We begin with a few definitions.
    Let the observation set be defined as  
    \begin{align*}
        \cO := \Supp{P} \cup \Supp{Q} \cup \{\bot\},
    \end{align*}
    i.e. the range of observation. Define the indicator function $\mathbb{I}_{\cS_{\eta}} : \cO \mapsto \bits$, which takes as input an observation $x$ from the observation set $\cO$, outputting 1 if $x \in \cS_\eta$ and $0$ otherwise. Also, recall the definition of the set $\cS_{\eta} = \{x: p(x)/q(x) \leq \eta \}$ 
    as the set of all observation $x \in \cO$ where $p(x)$ is less than or equal to $\eta q(x)$ (as before $p, q$ are the densities of distributions $P, Q$).
    
    We first show that $\mathbb{I}_{\cS_{\eta}}$ is exactly the Bayes classifier $\phi^{*}$ for the Bayesian binary classification problem $\bbcP{\MixtureD{P}{\eta}}{Q}$. We prove this by showing for every $x \in \cO$, $\phi^{*}(x) = \mathbb{I}_{\cS_{\eta}}(x)$. Therefore, consider the tuple of random variable $(\rX, \rY) \sim \bbcP{\MixtureD{P}{\eta}}{Q}$. Then, for every observation $x \in \cO \setminus \{\bot\}$, we have  
    \begin{align*}
       \phi^{*}(x) ={}& \argmax_{\bits} \{\pr{\rY = 0 | \rX = x}, \pr{\rY = 1 | \rX = x}\} \tag{by Bayes classifier $\phi^{*}$'s construction}\\
        ={}& \argmax_{\bits} \{\pr{\rY = 0, \rX = x}, \pr{\rY = 1, \rX = x}\} \tag{by Bayes Theorem}\\
        ={}& \argmax_{\bits} \{\frac{1}{\eta}p(x), q(x)\} \\
        ={}& \mathbb{I}_{\cS_{\eta}}(x) \tag{by $\mathbb{I}_{\cS_{\eta}}$'s definition}.
    \end{align*}
    For an observation $x = \bot$, it is easy to check $\phi^{*}(x) = \mathbb{I}_{\cS_{\eta}}(x) = 0,$ as $q(x) = 0.$

    Then, we also observe that  
    \begin{align}
        &\alpha(\eta) \label{equ: alpha and bayes classifier}\\
        ={}& \prdis{\rX \sim P}{\rX \in \cS_{\eta}} \tag{By Corollary~\ref{corollary: NP lemma}} \nonumber\\
        ={}& \prdis{\rX \sim P}{\mathbb{I}_{\cS_{\eta}}(\rX) = 1} \nonumber\\
        ={}& \prdis{\rX \sim P}{\phi^*(\rX) = 1} \tag{$\phi^* = \mathbb{I}_{\cS_{\eta}}$} \nonumber\\
        ={}& \Exf{\rX \sim P}{\phi^*(\rX)}  \nonumber
    \end{align}

    Recall that in algorithm~\ref{alg: general BayBox estimator}, BayBox estimatior $\bbe{\phi^*}$ computes the empirical mean of $\phi^*(\rX)$, i.e., $\tilde{\alpha}(\eta)$, as the estimate of $\alpha(\eta)$. By Hoeffding's Inequality, we finally conclude that
    \begin{align*}
           &\pr{\abs{\tilde{\alpha}(\eta) - \alpha(\eta)} > \sqrt{\frac{1}{2n}\ln{\frac{4}{\gamma}}}} \\
        ={}&\pr{\abs{\frac{1}{n}\rSum{i}{1}{n}\rZ_i - \Ex{\frac{1}{n}\rSum{i}{1}{n}\rZ_i}} > \sqrt{\frac{1}{2n}\ln{\frac{4}{\gamma}}}} \tag{$\rZ_i \defin \phi^*(\rX_i), \rX_i \overset{\text{i.i.d.}}{\sim} P$}\\
        \leq{}& \gamma/2.
    \end{align*}
\end{proof}

\begin{rem} \label{lem:bias}
    \B{For any classifier $\phi$ that is used as input of the BayBox algorithm, the output $(\tilde \alpha(\eta), \tilde \beta(\eta))$ will have a mean point $(\mathbb{E}\tilde \alpha(\eta), \mathbb{E}\tilde \beta(\eta)))$ on or above the optimal trade-off curve. The reason is that $(\tilde \alpha(\eta), \tilde \beta(\eta))$ are the empirical type-I and type-II-errors of the test that rejects whenever an output is classified as belonging to $D'$. The means $\mathbb{E}\tilde \alpha(\eta), \mathbb{E}\tilde \beta(\eta)$ correspond to the population version of the errors which by construction of the optimal trade-off curve are on or above it (no test has a better combination than the Neyman-Pearson test which demarcates the curve exactly).}
\end{rem}


\begin{proof}[\textbf{Proof of Theorem~\ref{thm: accuracy stat of kNN BayBox estimator}}]
    \label{proof: for thm: accuracy stat of kNN BayBox estimator} The proof of part 1) of the theorem follows in exact analogy to the proof of Lemma \ref{proof: for lemma: accuracy stat of general BayBox estimator} and we do not repeat it here. Now, we turn to the proof of part 2).
    Again we restrict ourselves to proving the statement about the type-I-errors $\abs{\tilde{\alpha}(\eta) - \alpha(\eta)} \leq \sqrt{\frac{1}{2n}\ln{\frac{4}{\gamma}}} + \sqrt{\frac{144c_d^2}{n}\ln{\frac{4}{\gamma}}}$, and the statement on type-II-errors follows by a similar approach. 
    With probability at least $1 - \gamma/2$, we have 
    \begin{align*}
           & \abs{\tilde{\alpha}(\eta) - \alpha(\eta)}\\
        ={}& \abs{\frac{1}{n}\rSum{i}{1}{n}\kNNclassifier{n}(\rX_i) - \Ex{\frac{1}{n}\rSum{i}{1}{n}\phi^*(\rX_i)}} \tag{$\rX_i \overset{\text{i.i.d.}}{\sim} P$}\\
        ={}& \abs{\frac{1}{n}\rSum{i}{1}{n}\kNNclassifier{n}(\rX_i) - \Ex{\phi^*(\rX)}} \tag{$\rX \sim P$}\\
        \leq{}& \abs{\frac{1}{n}\rSum{i}{1}{n}\kNNclassifier{n}(\rX_i) - \Ex{\kNNclassifier{n}(\rX)}} + \abs{\Ex{\kNNclassifier{n}(\rX)} - \Ex{\phi^*(\rX)}} \\
        \leq{}& \sqrt{\frac{1}{2n}\ln{\frac{4}{\gamma}}} + \abs{\Ex{\kNNclassifier{n}(\rX)} - \Ex{\phi^*(\rX)}} \tag{by Hoeffding's Inequality}\\
        ={}& \sqrt{\frac{1}{2n}\ln{\frac{4}{\gamma}}} + \abs{ \pr{\kNNclassifier{n}(\rX) = 1} - \pr{\phi^*(\rX) = 1} }\\
        ={}& \sqrt{\frac{1}{2n}\ln{\frac{4}{\gamma}}} + \abs{ \pr{\kNNclassifier{n}(\rX) \neq 0} - \pr{\phi^*(\rX) \neq 0} }\\
        \leq{}& \sqrt{\frac{1}{2n}\ln{\frac{4}{\gamma}}} + 2|R(\kNNclassifier{n}) - R(\phi^{*})|\\
        \leq{}& \sqrt{\frac{1}{2n}\ln{\frac{4}{\gamma}}} + 12\sqrt{\frac{2c_d^2}{n}\ln{\frac{4}{\gamma}}}\tag{by Theorem~\ref{thm:covergence of kNN}}.
    \end{align*}

    We note that the first equality follows the idea in the proof of Lemma~\ref{lemma: accuracy stat of general BayBox estimator}, by just replacing the Bayes classifier with the concrete $k$-NN classifier.
\end{proof}

\begin{proof}[\textbf{Proof of Theorem~\ref{theo:auditor}}] \B{To enhance the clarity of this proof, we will additionally assume that the curve $T^{(0)}$ is strictly decaying. We first need to understand the interpretation of lines 6 and 7 of the algorithm. The algorithm detects a violation, if
\[
i^* > \tilde{\alpha}(\hat{\eta}^*) + w(\gamma),
\]
where $i^*$ solves the equation $T^{(0)}(i^*) = \tilde{\beta}(\hat{\eta}^*) + w(\gamma)$.
We apply $T^{(0)}$ on both sides, which gives us the detection condition
\begin{align} \label{e:condition}
\tilde{\beta}(\hat{\eta}^*) + w(\gamma) <T^{(0)}(\tilde{\alpha}(\hat{\eta}^*) + w(\gamma)).
\end{align}
Geometrically this means that the point $(\tilde{\alpha}(\hat{\eta}^*) + w(\gamma),\tilde{\beta}(\hat{\eta}^*) + w(\gamma) )$ is below the curve $T^{(0)}$  and since $T^{(0)}$ is a trade-off curve, it follows that the entire box $\square_\gamma$ is below  $T^{(0)}$. Conversely, if the detection condition is violated, we have 
\begin{align} \label{e:condition2}
\tilde{\beta}(\hat{\eta}^*) + w(\gamma) \ge T^{(0)}(\tilde{\alpha}(\hat{\eta}^*) + w(\gamma))
\end{align}
and the upper right edge point of the box $\square_\gamma$ is on or above $T^{(0)}.$\\
Now, suppose there was no violation (part 1) of the theorem). Then, any point on or above $T$ is also on or above $T^{(0)}$. The point $(\mathbb{E} \tilde \alpha(\eta),\mathbb{E}  \tilde \beta(\eta))$ is on or above $T$ and thus on or above $T^{(0)}$. Now, according to Theorem \ref{thm: accuracy stat of kNN BayBox estimator} it holds with probability $\ge 1-\gamma$ that the following event occurs
\[
\mathcal{E}=\{(\mathbb{E} \tilde \alpha(\eta),\mathbb{E}  \tilde \beta(\eta)) \in \square_\gamma\}
\]
Conditional on that event the upper right edge point of $\square_\gamma$, namely $(\tilde{\alpha}(\hat{\eta}^*) + w(\gamma), \tilde{\beta}(\hat{\eta}^*) + w(\gamma))$ is also above $T$. It is hence above $T^{(0)}$ and satisfies condition \eqref{e:condition2} and no privacy violation is detected. \\}
Now, in part 2), suppose that there exists a privacy violation. The trade-off function is strictly convex and it is not hard to see that this implies that it equals the set $\{(\alpha(\eta), \beta(\eta): \eta \ge 0\}$ in this case (the constant $\lambda$ in the Neyman-Pearson test can be set to $0$ everywhere). We also define the maximum violation
\[
v^* = \sup_{\alpha \in [0,1]}\big[ T^{(0)}(\alpha)-T(\alpha)\big]
\]
and the set of thresholds
\[
\Psi:= \big\{\eta \ge 0:T^{(0)}(\alpha(\eta))-T(\alpha(\eta)) \ge  v^*/2\big\}.
\]
It holds by the proof of Theorem \ref{theo:1} case 1) that 
\[
\sup_\eta |\hat \alpha_h(\eta) - \alpha(\eta)|\overset{P}{\to} 0, \quad as \,\,\, n_1 \to \infty.
\]
In particular, it follows that
\[
\Pr[\hat \eta^* \in \Psi]= 1-r_{n_1},
\]
where $r_{n_1} \to 0$ as $n_1 \to \infty$.
If the above statement were false, it would follow on an event with asymptotically positive probability that 
\[
T^{(0)}(\alpha(\hat \eta^*))-T(\alpha(\hat \eta^*)) \le (1/2) v^*
\]
leading to a contradiction with Proposition \ref{prop1}. Now, we condition on the event $\{\hat \eta^* \in \Psi\}$ and pass the parameter to the BayBox estimator, which returns the estimator pair $(\tilde{\alpha}(\hat{\eta}^*), \tilde{\beta}(\hat{\eta}^*))$. Now, keeping $n_1$ fixed and letting $n_2 \to \infty$ it follows that (part 2) of Theorem \ref{thm: accuracy stat of kNN BayBox estimator})
\begin{align*}
   & \tilde{\alpha}(\hat{\eta}^*) + w(\gamma) \overset{P}{\to} \alpha(\hat{\eta}^*), \quad  \tilde{\beta}(\hat{\eta}^*) + w(\gamma) \to \beta(\hat{\eta}^*).
\end{align*}
Given the continuity of the function $T^{(0)}$ (every trade-off function is continuous) it follows that conditionally on $\Psi$
\begin{align*}
&T^{(0)}(\tilde{\alpha}(\hat{\eta}^*) + w(\gamma)) \to T^{(0)}(\alpha(\hat{\eta}^*)) \ge T(\alpha(\hat{\eta}^*) +v^*/2\\
= &\beta(\hat{\eta}^*) +\nu^*/2>  \beta(\hat{\eta}^*)
\end{align*}
and the detection condition in \eqref{e:condition} is asymptotically fulfilled as $n_2 \to \infty$. Thus, we have 
\[
\lim_{n_2 \to \infty}\Pr[A = \textnormal{"Violation"}| \{\hat \eta^* \in \Psi\}]  =1
\] 
and hence
\[
\liminf_{n_2 \to \infty}\Pr[A = \textnormal{"Violation"}]\ge 1-r_{n_1}.
\]
Taking the limit $n_1 \to \infty$ we have $r_{n_1} \to 0$ and the result follows.
\end{proof}

\end{document}